\documentclass[11pt]{article}
\usepackage{arxiv_packages}
\usepackage{macros}
\usepackage{our_MACROS}
\usepackage[square,numbers]{natbib}

\usepackage{color-edits}
\usepackage[colorlinks=true, 
            linkcolor=blue,     %
            citecolor=blue,     %
            urlcolor=blue]      %
            {hyperref}
\usepackage{cleveref}

\title{Desirable Effort Fairness and Optimality Trade-offs in Strategic Learning}

\author{Valia Efthymiou\footnotemark[1] \and Ekaterina Fedorova\footnotemark[2] \and Chara Podimata\footnotemark[3]}

\date{October 2025}
\begin{document}
\maketitle
\footnotetext[1]{Massachusetts Institute of Technology, \texttt{valia554@mit.edu}}
\footnotetext[2]{Massachusetts Institute of Technology, \texttt{fedorova@mit.edu}}
\footnotetext[3]{Massachusetts Institute of Technology, \texttt{podimata@mit.edu}}

\begin{abstract}
Strategic classification examines how decision rules interact with agents who strategically adapt their features. Most existing models focus on maximizing predictive performance, assuming agents best respond to the learned classifier. However, real decision-making systems are rarely optimized solely for accuracy: ethical, economic, and institutional considerations often make some feature changes more desirable than others. At the same time, principals may wish to incentivize these changes fairly across heterogeneous agents.
While prior work has studied causal structure between features, notions of desirability, and information disparities in isolation, this work initiates a unified treatment of these components within a single framework. We frame the problem as a constrained optimization problem that captures the trade-offs between optimality, desirability, and fairness.
We provide theoretical guarantees on the principal's optimality loss constrained to a particular desirability fairness tolerance for multiple broad classes of fairness measures. Finally, through experiments on real datasets, we show the explicit tradeoff between maximizing accuracy and fairness in desirability effort.
\end{abstract}

\section{Introduction}\label{sec:intro}
Most automated decision-making systems--whose task is to assign a human (aka ``agent'') a numeric score or a classification based on which a \emph{decision} about the individual is made--must contend with the fact that, given sufficient information and ability, agents may strategically alter their features to improve their assigned outcomes. This strategic alteration jeopardizes the effectiveness of the originally proposed decision-making policies; e.g., if a bank's loan-approval algorithm positively values having multiple credit cards, agents may take out extra cards to improve approval rate while not truly improving their creditworthiness. Similar human adaptations to automated decision-making rules have been observed in healthcare~\citep{chang2024s,efthymiou2025incentivizingdesirableeffortprofiles} and recommendation systems~\citep{haupt2023recommending,fedorova2025user} among others.

In the classic strategic learning literature (e.g.,~\citep{DBLP:journals/corr/HardtMPW15,dong2018strategic,chen2020learning}), this problem is modeled as a Stackelberg game: the principal publicly commits to an accuracy-maximizing algorithm, anticipating that agents will best respond by submitting altered features that maximize their utility about predicted outcomes. Formally, the principal commits to a scoring rule $\bw \in \bbR^d$, the agents observe $\bw$ along with their \emph{true} feature-score pair $(\bx, y)$ (where $\bx \in \bbR^d$ and $y \in  [0,1]$), and each agent reports an \emph{altered} feature $\bx'$ (where $\bx'$ is a ``best-response'', i.e., it maximizes the agent's underlying utility function for obtaining a better outcome $\bw^\top \bx' $). The principal's goal is to identify the optimal $\bw$ such that the loss between predictions on the altered features $ {\bw^{ \top}} \bx' $ and the true scores $y$ is minimized.

This classical model--albeit great at highlighting \emph{some} of the complexities that arise from the agents' best-responding behavior--fails to capture several of the intricacies of the real-world settings it wishes to model. Take for example a content recommendation platform (e.g., YouTube) that algorithmically determines how much to promote a video based on various video features such as topic, title, explicitness etc. Content creators want their videos highly promoted and can change their videos (to an extent) so that the algorithm recommends them to the greatest number of people. This is a nuanced interaction between creators and the platform. On a creator's side, there are several issues that complicate her video changes beyond the perfect best response (i.e., the perfect video edit for maximum promotability): {\bf (1)} She may only know about the algorithm what she/her friends can test/learn; {\bf (2)} Her video edits influence each other (e.g., changing topic also affects how explicit the video is); and {\bf (3)} Some changes really \emph{do} change how popular a video \emph{should} be, i.e., they are not purely gaming of the recommender system! As for the platform, independent of effects on true promotability, creating incentives to alter certain video features might be (un)desirable. For example, clickbait titles may genuinely make clicking the video more appealing, thus making the video more promotable, but being known as a platform inundated with clickbait titles (if creators are incentivized to use them) may hurt the platform's reputation to advertisers and users. In the same vein, to be neutral in the eyes of advertisers, a platform may further be concerned with \emph{which creators} are more/less incentivized to clickbait.

Altogether, the aforementioned example illustrates several complexities for the modeling of the \emph{strategic agents} and the \emph{principal} in algorithmic decision making systems. For the agents: {\bf (A1)} they may \emph{not} know the principal's algorithm fully when choosing their best responses; {\bf (A2)} changes in certain features can causally trigger changes in other features; {\bf (A3)} not all feature alterations are ``bad''; instead, some represent genuine improvement. And for the principal: {\bf (P1)} independent of their predictive power, some feature changes may be more/less \emph{desirable} for external stakeholders; 
and so {\bf (P2)} the principal may need to create equitable incentives to improve desirable features even for heterogeneous agents. %
Putting {\bf (A1) -- (A3)} and {\bf (P1) -- (P2)} together and focusing on the principal's perspective, we ask:
\begin{center}
    \vspace{-5pt}
    \textbf{Q1}: \emph{How can the principal provide fair incentives for improvement among heterogeneous groups?}
    \vspace{-5pt}
\end{center}

Incorporating fairness creates a binding constraint on the principal's optimization problem, forcing a trade-off between their original objective (e.g., accuracy) and equitable treatment. This observation frames our next question:
\begin{center}
\vspace{-5pt}
\textbf{Q2}: \emph{How much of his primary objective (e.g., accuracy) does the principal trade off to provide equitable incentives with respect to certain desirable features?}
\end{center}

\subsection{Our Contributions}

We develop a framework to study \textbf{Q1} and \textbf{Q2}, proposing measures of equitable incentives and characterizing fair and optimal policies. Our main contributions are:

\xhdr{Our model (Section \ref{sec:model})}
We develop a game-theoretic framework that unifies three previously isolated modeling considerations: (i) agents strategically respond based on peer-learned estimates of the classifier rather than perfect information; (ii) feature changes exhibit causal spillovers captured by a contribution matrix; and (iii) an external stakeholder, representing social, institutional, or regulatory interests, designates which features are \emph{desirable} to improve and defines a \emph{discrepancy function} measuring inequity in desirable effort across groups. To remain compliant with the stakeholder, the principal must choose a \emph{fairness tolerance} (i.e., the maximum desirable effort discrepancy they will permit) and constrain their classifier accordingly. This tolerance represents the principal's commitment to equitable treatment: tighter tolerance means stricter fairness but potentially lower accuracy or welfare. While the stakeholder defines fairness (the discrepancy function), the principal decides how much unfairness to tolerate, facing a trade-off between predictive performance and fairness guarantees. While prior work has addressed (i)--(iii) separately, we are the first to combine all three and analyze the resulting trade-offs. The comprehensive optimization problem enables the principal to find a fair policy accounting for heterogeneity, answering \textbf{Q1}. For \textbf{Q2}, we theoretically analyze objective value trade-offs in equilibrium as a function of the stakeholder's choice of discrepancy function, for which we consider two broad families: convex and nonconvex measures. This distinction is fundamental and reflects the range of equity criteria stakeholders might reasonably impose.

\xhdr{Convex discrepancy functions (Section~\ref{sec:convex})} When the discrepancy function is convex (which includes natural measures like sum-of-absolute and sum-of-squared differences in desirable effort vectors across groups), we upper bound the principal's optimality loss as a function of the fairness tolerance and system parameters to answer \textbf{Q2}. 
This enables the principal to quantify the worst-case trade-offs \emph{before} deployment: given estimates of group adaptation costs and information disparities, %
a principal can determine how much accuracy or welfare is sacrificed for any fairness tolerance level. Notably, our bounds capture broad classes of discrepancy functions, allowing the principal to assess sensitivity to a range of stakeholder fairness criteria.

\xhdr{Nonconvex discrepancy functions (Section~\ref{sec:non-convex})} Stakeholders may impose \emph{asymmetric} fairness constraints, e.g., requiring that disadvantaged groups not be under-incentivized relative to privileged groups, but permitting the reverse. Such constraints naturally yield 
nonconvex discrepancy functions, making the principal's optimization problem computationally intractable for \emph{any} fixed tolerance level, an issue for \textbf{Q1}. We address this and \textbf{Q2} in two steps. First, we construct an ellipsoidal restriction of the feasible set and apply our convex-case bounds for optimality loss guarantees on the restricted problem. Second, we bound the restriction's additional approximation error: the optimality loss from deploying the restricted fair policy rather than the (computationally intractable) true fair optimum. 
This approach provides both a tractable solution method and quality guarantees for nonconvex fairness constraints.

\xhdr{Experiments (Section~\ref{sec:experiments})} While our theoretical bounds answer \textbf{Q2} by characterizing \emph{worst-case} optimality loss, they do not reveal how specific patterns of group heterogeneity affect equilibrium outcomes. Using the $\mathtt{ADULT}$ dataset, we compute the exact fair equilibria under absolute-difference fairness constraints for three demographic splits (age, country, education) with varying cost structures, allowing us to compute the \emph{actual} trade-offs of \textbf{Q2}. Furthemore using the $\mathtt{Taiwan}$ dataset, we also evaluate our theoretical bounds against the realized losses induced by fairness constraints. 
We sweep across tolerance levels from highly restrictive (near-perfect equity) to permissive (substantial inequity), tracing out the complete fairness-accuracy trade-off curve for each split. Matching our intuition, our main finding is that constraining incentive discrepancy is most costly when group disparities align with desirable features. 
Because alignment between existing disparities and stakeholder-designated desirable features affects fairness costs, this may affect what is practical in different domains.

\subsection{Related work}\label{sec:related_work}

Our work primarily builds on the literature of \emph{strategic learning} and \emph{performative prediction} (see \citet{podimata2025incentive} for a review), which studies how principals should design optimal algorithms given agents become incentivized to ``game'' their features to improve outcomes \citep{DBLP:journals/corr/HardtMPW15, dong2018strategic, chen2020learning, strat_perceptron, nonlinear_strat_class, unknown_costs, shen2024mistakemanipulationmarginguarantees, pp}. Classical strategic classification models treat all strategic behavior as gaming, but as~\citet{miller_causality} argue, not all adaptation should be considered ``gaming''; oftentimes, it can result in \emph{genuine improvement}. Our work adopts this perspective, but focuses on the (previously overlooked) question of understanding how to provide equitable adaptation incentives for improvement across different subpopulations (\textbf{Q1}) while quantifying the trade-offs that the principal must incur as a byproduct (\textbf{Q2}).

{\bf Causal models and incentivizing improvement.} Several works incorporate a causal structure into strategic classification. \citet{tale_two_shifts, causal_strat_lin_reg} study principals recovering unknown causal parameters (despite the agents' strategic adaptation), \citet{gaming_helps} show that strategic responses can actually aid recovery in online settings and \citet{tsirtsis_optimal_2024} present hardness results for finding an optimal decision policy given strategic effort investment. Other works try to directly design incentives:~\citet{kleinberg_invest} provide algorithms for incentivizing ``good'' effort profiles in single-shot settings, while~\citet{stateful_strat_reg} extend this to repeated interactions, ~\citet{rosenfeld_performative_rec} create recommender algorithms to incentivize diverse content creation, and \cite{maggie_penn_strat_class} frame strategic adaptations as a $\{0,1\}$ [non]compliance option and present algorithms to maximize the desired behavior. Alternative mechanism design approaches \citep{alon_multiagent_2020,max_welfare_haghtalab} aim to promote beneficial effort through counterfactual explanations \citep{tsirtsis_counterfactuals}. Most closely related to our causal modeling approach,~\citet{efthymiou2025incentivizingdesirableeffortprofiles} study how agents choose desirable effort when features causally affect one another and agents have partial information about both the causal graph and the principal's rule. While we similarly model causal feature interactions and effort desirability, our focus differs: we study the principal's problem of inducing desirable incentives \emph{fairly} across groups, characterizing the optimality loss from imposing fairness constraints.%

{\bf Incomplete information} Real agents rarely have perfect knowledge of classifiers. \citet{chara_icml} model agents who learn about the classifier through peer observations, finding that under certain group disparities, optimal policies can induce feature changes that worsen true outcomes. We adopt their peer-learning framework but extend it to study how desirable-effort fairness constraints affect principal optimality when agents learn imperfectly. Other works capture agent biases affecting best-response~\citep{behavioral_strat_class}, unavailable classifiers~\citep{in_the_dark}, noisy policy signals~\citep{avasarala2025disparateeffectspartialinformation}, Bayesian approaches~\citep{bayesian_strat_class}, and randomized classifiers \citep{role_of_randomness,fundamental_bounds}.

{\bf Fairness in strategic settings.} 
Many fairness analyses of strategic learning precede ours. \citet{maggie_penn_strat_class} ask whether principals produce ``aligned incentives'' between groups. We use a similar fairness metric, but directly constrain the principal to varying fairness levels. \citet{anticiptating_gaming} also study fairness constraints and find that equalizing true positive rates or acceptance rates can worsen incentive equity between advantaged and disadvantaged groups—the issue we address by directly constraining desirable \emph{incentive} discrepancy rather than \emph{outcomes}. Other work shows that traditionally fair classifiers may not be fair in strategic settings~\citep{estornell_fair_strat_class} or analyzes other disparate effects of optimal strategic policies~\citep{milli_social_cost,hu_disparate_effects}. \citet{diana_minimax_fairness} provide algorithms for optimal rules under minimax fairness constraints. Our work differs by (1) focusing on desirable effort fairness (i.e., equalizing incentives for beneficial feature changes rather than outcomes or total costs) and (2) characterizing the principal's accuracy/welfare trade-offs as a function of fairness tolerance and group heterogeneity structure.

Finally, our work connects to the literature in \emph{algorithmic recourse} which studies how a principal may provide explanations or recommended actions to agents who receive unfavorable scores \cite{DBLP:journals/corr/abs-2010-06529, robustness_implies_fairness, algo_recourse_counterfactuals, algo_recourse_fairness, gupta2019equalizingrecoursegroups}; see \citet{algo_recourse_survey} for a review. 
While these settings similarly consider agents who may alter their features, our model (and strategic classification as a whole) takes a more mechanism-design-based approach in that the principal indirectly creates ``recourse'' for agents by incentivizing changes exclusively through deploying the algorithm or policy rather than direct recommendations to agents. 
\section{Model}\label{sec:model}

\xhdr{Notation}\label{sec:notation}
$\mathbb{I}_{\dots}$ is a function s.t. $\mathbb{I}_{\dots} = 1$ if the subscript is true and $\mathbb{I}_{\dots} = 0$ otherwise. Matrices are capital (i.e., $M \in \R^{k\times d}$), vectors are bolded lower-case (i.e., $\bw\in \R^{d}$), and one-dimensional variables are lower-case (i.e., $y \in \R$). For a vector, $\bw$, $\bw_+$: $(\bw_+)_i = \mathbb{I}_{\bw_i \geq 0}\bw_i$. For a matrix, $M$, $M_{j,i}$ is the $j$th row $i$th column element. $\ker(M)$ is the \emph{kernel}, i.e., $\{\bw: M \bw = 0\}$. $H(M)$ is the Hoffman constant (Appendix \ref{app:notation}). $\lambda_d(M)$ and $\sigma_d(M)$ are the $d$-th largest eigenvalues and singular values respectively. Sets are capital calligraphic (e.g., $\calW(\cdot, \cdot)$). $\mathcal{B}(\rho)$ is a radius $\rho$ Euclidean ball, i.e., $\{\bw\in \R^d: \bw^\top \bw \leq \rho^2\}$. 
We provide a summary notation table in Appendix~\ref{app:model}.

\subsection{Model Summary}\label{sec:setup}
We begin by summarizing our model; more in-depth analysis of each of the moving pieces follows. We focus on a Stackelberg game between a principal and an agent population comprised by $2$ subpopulations~\footnote{For more than 2 groups, our work can extended by constraining the incentive disparity between each \emph{pair} of groups.}, with different distributions over the feature space and cost matrices $A_1, A_2$. An agent (she) belongs to group $g \in \{1,2\}$ and has initial features $\bx \in \mathcal{X} \subseteq \bbR^d$ drawn from distribution $\mathcal{D}_g$ over the feature subspace $\mathcal{S}_g \subseteq \mathcal{X}$. Let $\Pi_g \in \R^{d\times d}$ be the orthogonal projection matrix into a group subspace. Let $\mathbf{w}^\star \in \R^d$ denote the ground truth linear scoring rule; i.e., $\E[y|\mathbf{x}]= \mathbf{w}^{\star^\top}\mathbf{x}$ is the expected (true) quality of an agent with features $\bx \in \mathcal{X}$. Similarly to \citep{chara_icml}, $\bw^{\star}$ may be optimal for \emph{prediction} accuracy, but it may not be optimal %
given the other principal concerns we analyze. For generality, we model many aspects that may affect group disparities e.g. cost asymmetries. Any of these pieces can be excluded when irrelevant to the setting (e.g. feature changes are free) by setting the relevant matrices to the identity, $\mathbf{I}$.

\floatname{algorithm}{Protocol}
\begin{algorithm}[!tb]
\caption{Decision-making system interaction}
\label{prot:stack_game}
\small
\begin{algorithmic}[1]
    \STATE Nature selects causal graph $\mathcal{G}$, ground truth rule $\bw^\star$ and cost matrices $A_1$, and $A_2$.
    \STATE Stakeholder selects desirability scores $\Pi_D$ and discrepancy function $\Delta(\bw)$. 
    \STATE Principal chooses desirability discrepancy tolerance $\beta$.
    \STATE Principal uses an optimal (based on $\mathtt{OBJ}$) $\bw$ s.t., $\Delta(\bw)\leq \beta$.
    \STATE Agents draw initial features $\bx \sim \D_g$.  
    \STATE Agents reveal features: $\bx' \in \argmax( \mathtt{Score}(\bx',g) - \mathtt{Cost}(\bx_e,g))$. 
\end{algorithmic}
\end{algorithm}

\xhdr{Altered features} In response to information she has learned about $\bw$ from her group, $g$, (see Section~\ref{sec:agents} for details) an agent alters her feature vector to $\bx'$ in order to maximize her anticipated score, $\mathtt{Score}(\bx',g)$, minus the cost of alteration, $\mathtt{Cost}(\bx_e,g)$, where $\bx_e$ is the ``exogenous effort'' for the agent. Following~\citep{efthymiou2025incentivizingdesirableeffortprofiles}, we assume that altering one feature may causally affect others. We refer to the initial effort as ``exogenous'', to distinguish it from the ``spillover'' effects across features. These spillovers are modeled using a weighted causal graph, where nodes represent features. The contribution matrix $C$ captures the weighted flow of effort across features.~\footnote{Note that $C$ is the transpose of the matrix used in \cite{efthymiou2025incentivizingdesirableeffortprofiles}, who adopt a similar causal-flow perspective.} The relationship between $\bx'$ and $\bx_e$ is such that: $\bx' = \bx + C \bx_e$.

\begin{definition}[Contribution matrix]
    Given a weighted directed acyclic graph (DAG) $\mathcal{G} = ([d], \mathcal{A}, \omega)$ where $[d]$ is a set of feature-nodes, $\mathcal{A}$ is a set of edges indicating causality between features, and $\omega$ is a weight function on edges, the contribution matrix $C$ is:
            \begin{center}
                    $C_{ii} = 1\,\, \forall i \in [d], \ 
                    C_{ij} = \sum_{p\in\mathcal{P}_{ij}}\omega(p) \,\, \forall i\neq j$
            \end{center}
    $\mathcal{P}_{ij}$ is the set of all direct paths from nodes i to j and $\omega(p)$ for $p \in \mathcal{P}_{ij}$ is the sum of the weights along path $p$, i.e., : $\omega(p) = \sum\limits_{a \in p}\omega(a)$. $\ker(C) = \emptyset$ (Lemma \ref{lem:c_ker_0}).
\end{definition}

\xhdr{External stakeholder \& principal's goals} An external stakeholder assigns a ``desirability score'' to each feature, $\mathtt{des}(i)> 0$, $i \in [d]$; let $\Pi_D:=\mathrm{Diag}(\{\mathtt{des}(i)\}_{i \in [d]})$ denote the ``desirability matrix''. Roughly, a higher desirability score for a feature $i$ means that the stakeholder wants to \emph{incentivize} the agent to exert effort and improve $i$. We assume that the external stakeholder also selects a function $\Delta(\bw): \R^d \rightarrow \R$, that measures the group discrepancy in desirable effort incentivized by $\bw$. To remain trustworthy to the external stakeholder, the principal chooses a desirability discrepancy tolerance, $\beta>0$, and must find the optimal policy (also referred to as ``rule'') $\bw$ according to his own objective function $\mathtt{OBJ}(\bw;\bw^{\star})$; see Section~\ref{sec:principal} for details.

\subsection{Agents} \label{sec:agents}

We assume that the agents do \emph{not} have full information about the scoring rule $\bw$. Instead, they engage in \emph{peer learning} as modeled in \citep{chara_icml}; each agent sees features $\bx_{g,i}$ and policy outcomes, $\hat{y}_{g,i}=  \bx_{g,i}^\top \bw$, of $N_g$ random (nonstrategic) agents from her own group $g$ and does empirical risk minimization (ERM) to get $\mathbf{w}_{\mathtt{EST}}$, the estimated policy. $\bw_{\mathtt{EST}}(g)$ is the solution to the following:
\begin{align*} 
    \min_{\tilde{\mathbf{w}}\in W} &\quad \tilde{\mathbf{w}}^\top \tilde{\mathbf{w}} \numberthis{\label{agent_erm}}\\
    \text{s.t.,} &\quad W = \{\bw: \bw = \argmin_{\bw'}\sum_{i \in N_g}(\bx^\top_{g,i}\bw' - \hat{y}_{g,i})^2\}
\end{align*}

\vspace{-7pt}
Intuitively, each agent does a standard linear (ordinary least-squares, OLS) fit on past agents' data to estimate what policy the learner uses. Because OLS guarantees unbiasedness and asymptotic normality, this is a \emph{rational} estimation of $\mathbf{w}$ assuming data spans the space. However, as OLS may not have a unique solution, tie-breaking is necessary. The norm minimization tie-breaking implies that agents are \emph{risk-averse}. When agents are uncertain, they guess the $\mathbf{w}$ that induces smaller feature changes, just in case. Also, note that $N_g$ does not appear in the solution because agents peer-learn from $\hat{y}_{g,i}$, a \emph{non-noisy} outcome. We make this simplification to focus on the learner's fairness implications due to core group features, e.g., cost asymmetry, rather than limited peer learning sample size. Were there mean-0 outcome noise, agents' estimates, $\bw_{\mathtt{EST}}(g)$ would converge to those of the above equation as $N_g \rightarrow \infty$.

In anticipation of the principal's policy $\bw$, an agent modifies her features from $\bx$ to $\bx'$ by exerting exogenous effort, $\bx_e \in \R^d$, that is added (after a linear transformation) to her original features. The final $\bx'$ is a \emph{best response}, meaning it is the maximizer of her utility function: $\mathcal{U}(\bx, \bx', g):= \mathtt{Score}(\bx',g) - \mathtt{Cost}(\bx_e,g)$ where $\mathtt{Score}(\bx',g):= \bw_{\mathtt{EST}}^\top \bx'$ is the score she believes she will have after modification and $\mathtt{Cost}(\bx_e,g):= \frac{1}{2}\bx_e^\top A_g \bx_e$ is the cost she must incur for her exogenous effort, $\bx_e$ determined by a known positive definite (PD) cost matrix $A_g$. Formally, we write $\bx'(\bx;\bw, g)$ to denote the best-response of an agent from group $g$ with original features $\bx$; we drop the dependence on $\bx, g$ whenever clear from context. Putting everything together, we can find the closed form for $\bx_e$; see Appendix~\ref{app:agent_maxing} for the proof.

\begin{proposition}\label{prop:agent_erm}
    Given $\bw$, the effort of an agent from group $g$ is $\bx_e^{(g)}(\bw)=A_g^{-1}C^{\top}\Pi_g\bw$.
\end{proposition}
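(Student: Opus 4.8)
The plan is to split the argument into two independent pieces: (i) identify the agent's peer-learned estimate $\bw_{\mathtt{EST}}(g)$ in closed form, showing it equals the projected rule $\Pi_g\bw$; and (ii) solve the resulting (strictly concave) best-response program over the exogenous effort $\bx_e$.

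\textbf{Step 1: the peer estimate is the projected rule.} First I would argue $\bw_{\mathtt{EST}}(g)=\Pi_g\bw$. The least-squares constraint defining $W$ in~\eqref{agent_erm} makes $W$ the affine set of interpolators $W=\{\bw':X_g\bw'=\hat\by_g\}$, where $X_g$ stacks the observed peer features $\bx_{g,i}$ and $\hat\by_g=X_g\bw$. Because every nonstrategic peer is drawn from $\D_g$ and hence lies in $\mathcal{S}_g=\mathrm{range}(\Pi_g)$, we have $\Pi_g\bx_{g,i}=\bx_{g,i}$ and, by symmetry of $\Pi_g$, $\hat y_{g,i}=\bx_{g,i}^\top\bw=\bx_{g,i}^\top(\Pi_g\bw)$; thus $\Pi_g\bw\in W$. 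The outer minimization in~\eqref{agent_erm} then picks the unique minimum-$\ell_2$-norm element of $W$, which is the unique point of $W$ lying in $\mathrm{row}(X_g)=\mathrm{range}(X_g^\top)$. Since the rows of $X_g$ lie in $\mathcal{S}_g$, this row space is contained in $\mathcal{S}_g$; invoking the (standard) nondegeneracy assumption that enough peers are observed for their features to span $\mathcal{S}_g$ (equivalently $\mathrm{Cov}_{\D_g}$ restricted to $\mathcal{S}_g$ is full-rank and $N_g\ge\dim\mathcal{S}_g$), we get $\mathrm{range}(X_g^\top)=\mathcal{S}_g$. As $\Pi_g\bw\in\mathcal{S}_g$ is a feasible interpolator, it is therefore the min-norm solution, so $\bw_{\mathtt{EST}}(g)=\Pi_g\bw$.

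\textbf{Step 2: best response via first-order conditions.} Next I substitute $\bx'=\bx+C\bx_e$ into the utility and view it as a function of $\bx_e$ alone: $\mathcal{U}=(\Pi_g\bw)^\top(\bx+C\bx_e)-\tfrac12\bx_e^\top A_g\bx_e$. Since $A_g$ is positive definite, this is strictly concave in $\bx_e$, so the unique maximizer is pinned down by the stationarity condition $\nabla_{\bx_e}\mathcal{U}=C^\top\Pi_g\bw-A_g\bx_e=0$. Using invertibility of $A_g$, this yields $\bx_e^{(g)}(\bw)=A_g^{-1}C^\top\Pi_g\bw$, as claimed. (The fact $\ker(C)=\emptyset$ from the contribution-matrix lemma is not needed for existence/uniqueness of $\bx_e$, but it ensures the induced map $\bx_e\mapsto\bx'$ is injective, so the agent's effort is unambiguously recoverable from her report.)

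\textbf{Main obstacle.} The only genuinely nontrivial step is Step 1, and specifically the claim that the minimum-norm ERM interpolator coincides with $\Pi_g\bw$ rather than some other element of the interpolating affine subspace. This rests on (a) the linear-algebra fact that the min-norm solution of a consistent system lies in the design matrix's row space, and (b) a spanning/nondegeneracy condition on the peer sample, which I would either cite from the model assumptions (this is exactly the peer-learning setup inherited from \cite{chara_icml}) or state explicitly as above. Given Step 1, Step 2 is a one-line concave optimization, so the bulk of the writeup is just making the peer-learning identification rigorous.
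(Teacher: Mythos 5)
Your proposal is correct and follows essentially the same route as the paper: the paper also reduces the claim to the identity $\bw_{\mathtt{EST}}(g)=\Pi_g\bw$ (which it simply cites as Lemma 3.1 of \citet{chara_icml} rather than re-deriving via the min-norm-interpolator argument you give) and then obtains $\bx_e^{(g)}(\bw)=A_g^{-1}C^{\top}\Pi_g\bw$ from the first-order condition of the strictly concave utility $\langle\Pi_g\bw,\bx+C\bx_e\rangle-\tfrac12\bx_e^\top A_g\bx_e$. Your Step 1 is a correct self-contained proof of the cited lemma (under the spanning condition you state), and your Step 2 matches the paper's argument exactly.
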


\subsection{The Principal's Problem}\label{sec:principal}
We compare the Stackelberg equilibrium (SE) of Protocol \ref{prot:stack_game} for a selected $\beta$ to the SE were there no stakeholder. These SEs are the solutions to Equations (\ref{eqn:fair_opt}) and (\ref{eqn:opt}), which we refer to as the \emph{fairness-constrained} and \emph{unconstrained} principal problem respectively. We refer to the solutions of Equations (\ref{eqn:fair_opt}) and (\ref{eqn:opt}) as $\bw^\star_c$ and $\bw^\star_u$ respectively.
\begin{equation}\label{eqn:fair_opt}
\max_{\bw \in \calB(1)} \mathtt{OBJ}(\bw;\bw^\star),\quad \text{subject to}\quad \Delta(\bw) \leq \beta
\end{equation}
\begin{equation}\label{eqn:opt}
\max_{\bw \in \calB(1)} \mathtt{OBJ}(\bw;\bw^\star)
\end{equation}
We evaluate \emph{optimality loss}, the decrease in optimal value upon imposing fairness constraints:

\begin{definition}[Opt Loss]
    $\mathtt{OBJ}(\bw;\bw^\star_u) - \mathtt{OBJ}(\bw;\bw^\star_c)$
\end{definition}

Note that the domain of deployable rules is $\{\bw | \bw \in \calB(1)\}$. This emulates that a real-world principal cannot deploy rules with infinite coefficients. Analogously, we can imagine $\bw^\star \in \calB(1)$ or that policies are normalized. Importantly, the principal's problems have the same objectives and only differ by \emph{feasible region}. The feasible region of the fairness-constrained problem is $\{\bw\in\R^d:\Delta(\bw)\leq \beta\}$ on the domain 
while the other is just the domain.
Because we analyze equilibrium, optimizations are done as if $\bw^\star$ is known. If $\bw^\star$ is unknown, then one can follow \citep[Appendix A]{chara_icml} and obtain the same results (up to a small error term). To simplify exposition, we use the known $\bw^\star$ assumption. The principal chooses between two different objectives: Accuracy  ($\mathtt{ACC}$), or Social Welfare ($\mathtt{SW}$). $\Delta(\bw)$ is the stakeholder-chosen function capturing the discrepancy (across groups) in terms of desirable effort incentivized by rule $\bw$.

Formally, the accuracy, \footnote{Accuracy is defined as negative squared error to discuss both objectives as a maximization. 
}
$\mathtt{ACC}(\bw; \bw^\star)$ is defined as: 
\begin{center}
$-\sum_{g \in [2]}\E_{\bx \sim \D_g}\left[\left(\bw^{\star\top}\bx'(\bx;\bw, g) - \bw^\top\bx'(\bx;\bw, g) \right)^2\right]$,
\end{center}
and the social welfare, $\mathtt{SW}(\bw; \bw^\star)$ is defined as: 
\begin{center}
$\sum_{g \in [2]}\E_{x\sim \D_g}[ \bx'(\bx;g)^\top \bw^\star]$
\end{center}

Finally, we turn our attention to the stakeholder-chosen discrepancy function $\Delta(\bw)$. 
Recall that after choosing tolerance $\beta$, the principal constrains his problem such that no deployed rule induces desirable effort incentives whose discrepancy across agents, as measured by $\Delta(\bw)$, is greater than $\beta$. We call the set of $\bw$ that satisfy this, $\beta$-fair.
\begin{definition}[$\calW(\beta;\Delta)$, the set of $\beta$-fair rules]\label{def:fairspace}
    $\calW(\beta;\Delta):= \{\bw\in\R^d: \Delta(\bw)\leq \beta\}$.
\end{definition}

Our theoretical guarantees assume little about $\Delta(\cdot)$ beyond its satisfaction of broad properties, so we briefly provide intuition for natural structures of $\Delta(\cdot)$ that will appear throughout the paper. $\bx_e^{(g)}(\bw)$ is the exogenous effort vector across features a group-$g$-agent exerts in best response to what she knows about a policy, $\bw$, via peer learning. Thus, $\Pi_D\bx_e^{(g)}(\bw)$ is a \emph{desirability-weighted} effort vector for the agent. Therefore, a natural measure of the desirability discrepancy of a rule is $\Delta(\bw) = \mathtt{Dist}(\Pi_D\bx_e^{(1)}(\bw), \Pi_D\bx_e^{(2)}(\bw))$ where $\mathtt{Dist}$ is some vector comparison function.

\section{Convex Fairness Constraints}\label{sec:convex}
In this section, we consider \emph{convex} discrepancy functions, which naturally arise when the stakeholder treats group disparities symmetrically. Two canonical examples are the $\ell_1$ and $\ell_2$ norms applied to the difference in desirability-weighted effort vectors $\Pi_D\bx_e^{(g)}(\bw)$:

\begin{example}[Sum of absolute value differences]\label{ex:l1}
    $$\Delta(\bw):= \sum\limits_{i \in [d]} \Big|(\Pi_D \bx_e^{(1)}(\bw)-\Pi_D \bx_e^{(2)}(\bw))_i\Big|$$
\end{example}

\begin{example}[Sum of squared differences]\label{ex:l2}
    $$\Delta(\bw):=\sum\limits_{i \in [d]}(\Pi_D \bx_e^{(1)}(\bw)-\Pi_D \bx_e^{(2)}(\bw))_i)^2$$
\end{example}

\begin{table*}[t]
\centering
\setlength{\tabcolsep}{2pt}
\renewcommand{\arraystretch}{1.2}
\begin{tabular}{@{}c c C C@{}}
    \toprule
    \textbf{$\beta$-Fair Space} &
    \textbf{Feasible Region} &
    \textbf{Accuracy Loss} &
    \textbf{SW Loss} \\
    \midrule
    
     no assumptions
    & convex  
    & 4\max(\norm{\mathbf{w}^\star}_2,1) 
    & 2\norm{\tilde{\mathbf{w}}}_2 \\
    
     no assumptions
    & polyhedron (Property \ref{property:polyhedron})
    & \bigl[H(M)\,\norm{(M\mathbf{w}' - \beta\mathbf{1})_+}_2\bigr]^2
    & \sqrt{2}\norm{\tilde{\mathbf{w}}}_2 \\
    
    ellipsoidal (Property \ref{property:ellipsoidal}) 
    &  no assumptions
    & 4\max(\norm{\mathbf{w}^\star}_2,1) 
    & \sqrt{2}\|\tilde \bw\|_2 \\
    
     no assumptions
    & ellipsoidal (Property \ref{property:internal_ellipsoid})
    & 2q(s'+s)
    & \norm{\tilde{\mathbf{w}}}_2 - \sqrt{\beta}\,\norm{\tilde{\mathbf{w}}}_{Q^{-1}} \\

    \bottomrule
\end{tabular}
 \caption{Upper bounds on optimality loss in $\beta$-fair SE under Properties \ref{property:polyhedron}--\ref{property:internal_ellipsoid}. 
 In terms of notation, let $\tilde{\bw}:= (CA_1^{-1}C^{\top} \Pi_1 + CA_2^{-1}C^{\top} \Pi_2)^\top \bw^\star$, $\bw'$ is the ground-truth classifier $\bw^\star$ projected onto the $\ell_2$ unit ball i.e., $\bw' = \Pi_{\mathcal B(1)}(\mathbf w^\star)$, $s=\sqrt{\frac{\beta}{\lambda_d(Q)}}$ 
    ,$q=  \mathbb{I}_{\bw^\star \notin \calE(\beta)}\left(s + \|\bw^\star\|\right)$, $s' = \mathbb{I}_{\bw^\star \notin \calE(\beta)}\left(\min\left\{1, \|\bw^\star\|\right\} + s\right)$ where $\calE(\beta)$ is the feasible ellipsoid of Property \ref{property:internal_ellipsoid}.
    }\label{tab:convex_bounds}
\end{table*}

Geometrically, under some mild conditions (Appendices \ref{app:l1} and \ref{app:l2}), the $\ell_1$ and $\ell_2$ discrepancy functions have some ``nice'' properties: setting $\Delta(\bw) \leq \beta$, they form a polyhedron and ellipsoid of feasible rules $\bw$, respectively. Generalizing beyond $\ell_1$ and $\ell_2$ discrepancy functions, we focus on $\Delta(\cdot)$ such that the $\beta$-fair set of rules, $\calW(\beta;\Delta)$, is \emph{itself} an ellipsoid, or when intersected with the Euclidean ball forms an ellipsoidal or polyhedral \emph{feasible region}.
To illustrate, see Figures \ref{fig:property1}, \ref{fig:property2}, and \ref{fig:property3}.

\begin{property}[Feasible polyhedron (Fig.~\ref{fig:property1})]\label{property:polyhedron}
    $\beta, \Delta(\cdot)$ are s.t. the feasible region of the fairness constrained problem (Eq.~\eqref{eqn:fair_opt}) forms a polyhedron:
    $\calB(1) \cap \calW(\beta;\Delta) = \{\bw: \bw\in \R^d, M\bw \leq \beta\mathbf{1}\}$ for some $M \in \R^{k\times d}$ and $\mathbf{1}\in \R^k$, where $k \in \mathbb{N}$, $\beta > 0$.
\end{property}

\begin{property}[$\beta$-fair ellipsoid (Fig. \ref{fig:property2})]\label{property:ellipsoidal} $\beta, \Delta(\cdot)$ are s.t. the $\beta$-fair space is an ellipsoid: $\calW(\beta;\Delta) = \{\bw: \bw \in \R^d, \bw^\top Q\bw \leq \beta\}$, $Q \succ 0$, $\beta >0$.
\end{property}

\begin{property}[Feasible ellipsoid (Fig. \ref{fig:property3})]\label{property:internal_ellipsoid}
$\beta, \Delta(\cdot)$ are s.t. the feasible region for the fairness constrained problem (Eq.~\eqref{eqn:fair_opt}) is an ellipsoid:
    $\calB(1) \cap \calW(\beta;\Delta) = \{\bw: \bw \in \R^d, \bw^\top Q\bw \leq \beta\}$, $Q \succ 0$, $\beta >0$.
\end{property}

To connect these properties and our example fairness constraints more clearly, note that the feasible region for the fairness constrained problem of Example \ref{ex:l1} forms a polyhedron (i.e., Prop. \ref{property:polyhedron} with $M= \Pi_D(A_1^{-1}C^\top\Pi_1 - A_2^{-1}C^\top\Pi_2)$), see \Cref{app:l1}. Likewise, in Example \ref{ex:l2}, the set of $\beta$-fair rules form an ellipsoid (i.e., Prop. \ref{property:ellipsoidal} with $Q= M^\top M$), see \Cref{app:l2}. %

In Table \ref{tab:convex_bounds}, we present our upper bounds on accuracy and welfare loss in $\beta$-fair SE as a function of the setting parameters and the discrepancy function. We leave their individual formal statements to Appendix \ref{app:convex}. Notice that all bounds depend on some transformation of the true policy, $\bw^\star$. Fundamentally, this is because $\bw^\star$ is unrestricted, whereas fairness-constrained optimization is carried out over a specific feasible region, $\calB(1)\cap\calW(\beta;\Delta)$. Crucially, we first establish bounds under the minimal assumption that the feasible region is convex, without imposing additional geometric structure. We then derive additional bounds for discrepancy functions satisfying Properties 3.1–3.3, corresponding to polyhedral and ellipsoidal feasible regions.
In all cases, a principal who has knowledge/estimates of the system parameters ($C$, $A_g$, $\Pi_g$, and $\bw^\star$) can obtain an explicit bound on worst-case equilibrium loss.
Note that several bounds depend \emph{explicitly} on the discrepancy tolerance $\beta$, allowing the principal to directly quantify worst-case tradeoffs when selecting different fairness levels. See numerical examples \ref{ex:bounds1} and \ref{ex:bounds2} for concrete illustration.

\begin{figure*}[t]
    \centering
    \begin{subfigure}[t]{0.3\textwidth}
        \centering
        \includegraphics[width=0.75\textwidth]{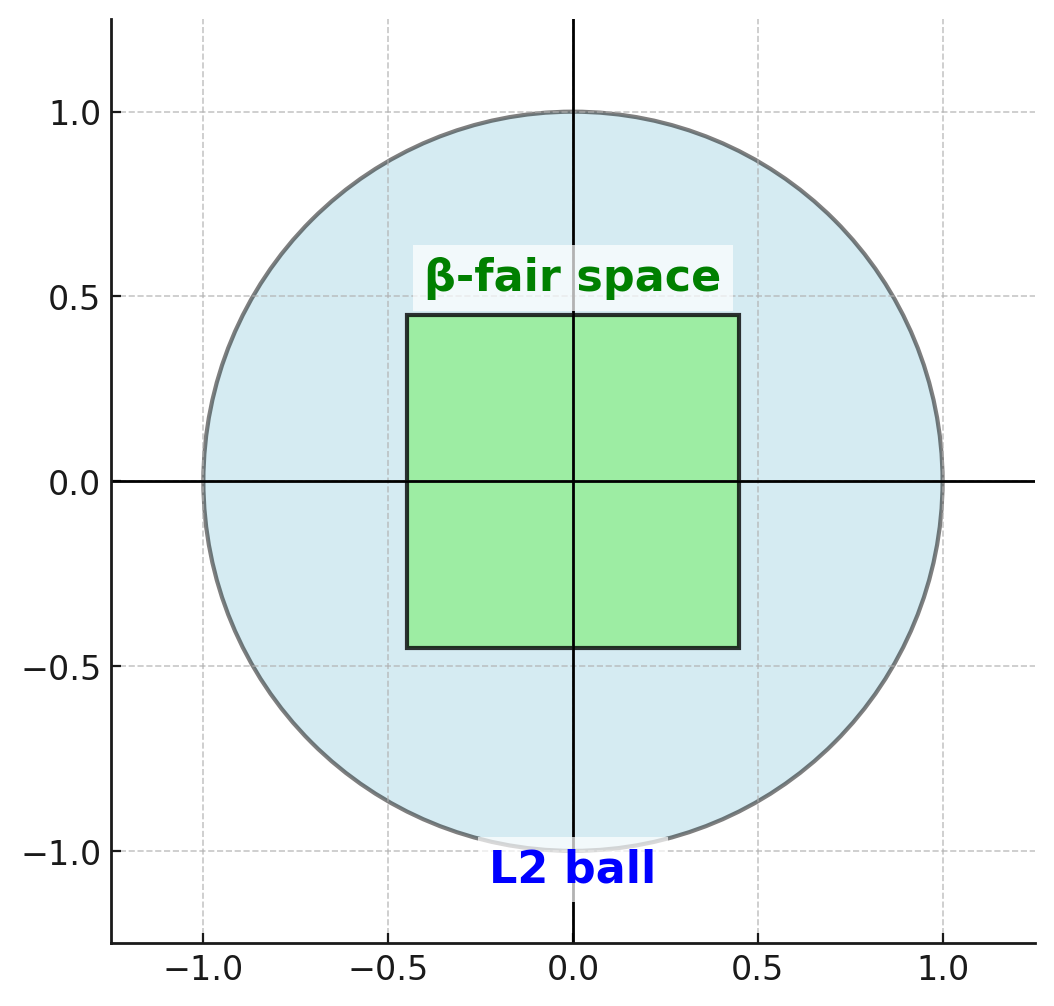}
        \caption{Property \ref{property:polyhedron}}
        \label{fig:property1}
    \end{subfigure}
    \hspace{15pt}
    \begin{subfigure}[t]{0.3\textwidth}
        \centering
        \includegraphics[width=0.75\textwidth]{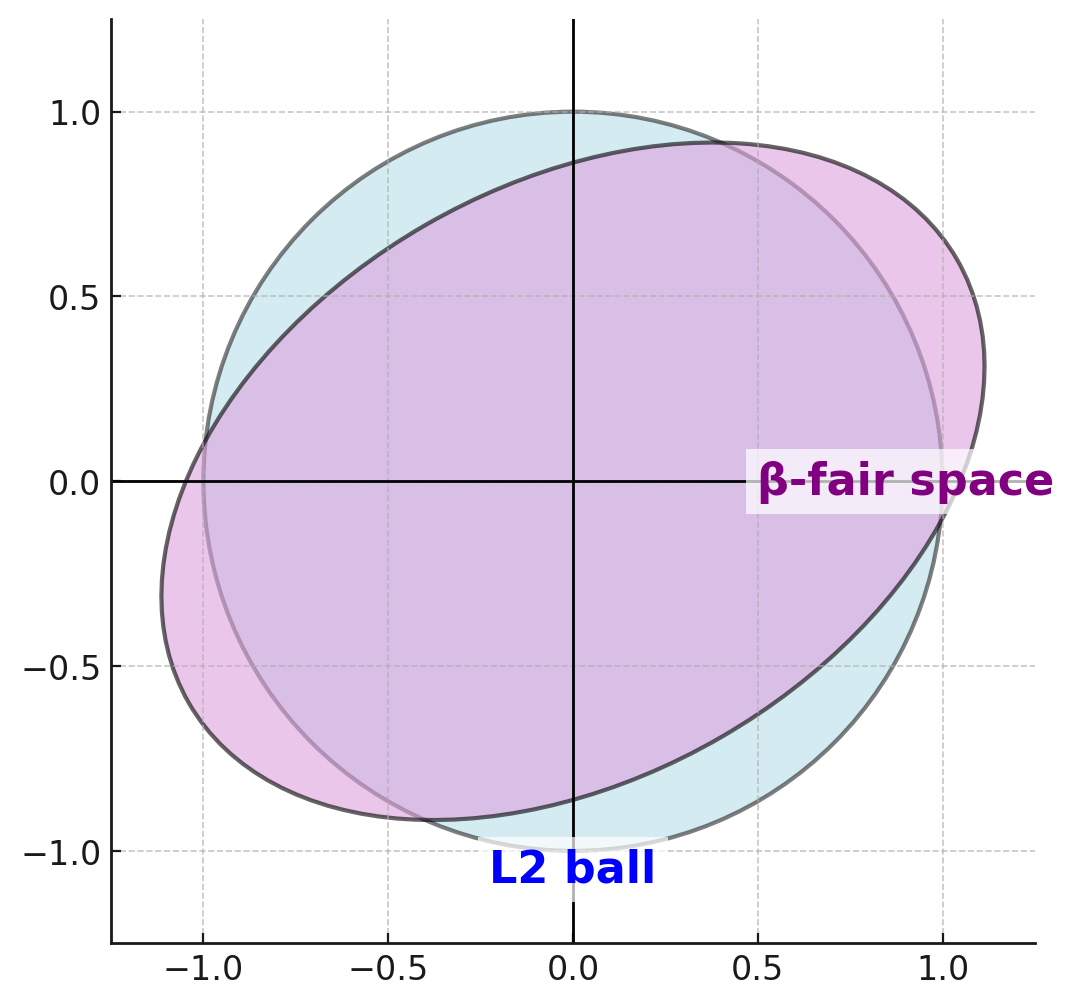}
        \caption{Property \ref{property:ellipsoidal}}
        \label{fig:property2}
    \end{subfigure}
     \hspace{15pt}
    \begin{subfigure}[t]{0.3\textwidth}
        \centering
        \includegraphics[width=0.75\textwidth]{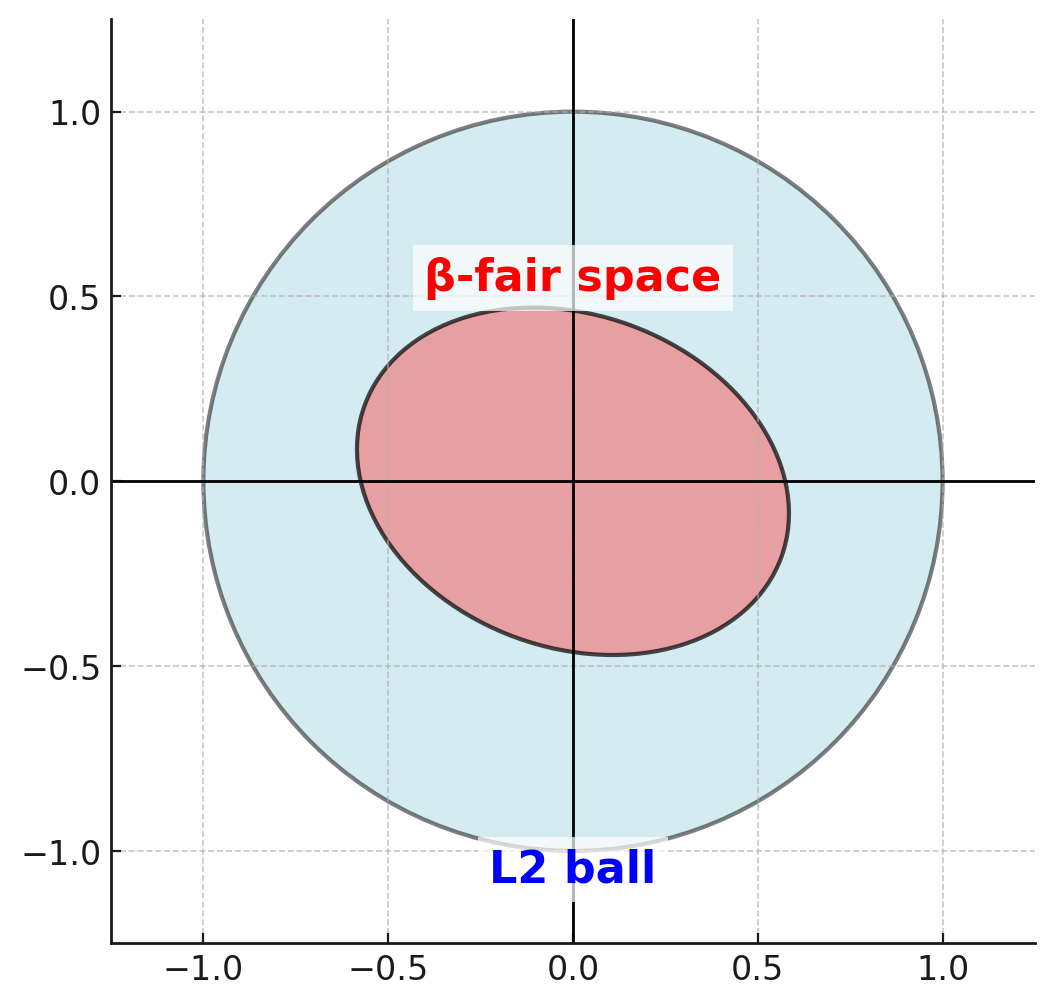}
        \caption{Property~\ref{property:internal_ellipsoid}}
        \label{fig:property3}
    \end{subfigure}
    \centering
        \caption{Examples of $\beta$-fair spaces in $2$ dimensions that satisfy Properties \ref{property:polyhedron} (feasible region polyhedral), \ref{property:ellipsoidal} ($\beta$-fair space ellipsoidal), and/or \ref{property:internal_ellipsoid} (feasible region ellipsoidal)}
    \label{fig:three_properties}
\end{figure*}

\section{Nonconvex Fairness Constraints}\label{sec:non-convex}
What if the stakeholder wants a nonconvex desirability discrepancy function, $\Delta(\cdot)$? This may happen if the stakeholder compares desirability effort vectors, $\Pi_D\bx_e^{(1)}(\bw)$ and $\Pi_D\bx_e^{(2)}(\bw)$, asymmetrically. 
Consider the following:
\begin{example}[Asymmetric desirability fairness]\label{ex:asym_overall_constraint}
$
\Delta(\bw):=\|\Pi_D\bx_e^{(g)}(\bw)\|^2_2 - \|\Pi_D\bx_e^{(g')}(\bw)\|^2_2
$   
\end{example}

In this case, $\Delta(\bw) > 0$ when group $g$ is more desirably incentivized. So, upper bounding $\Delta$ by $\beta$ means that the principal's rules cannot better incentivize group $g$ than $g'$ by more than $\beta$, but the opposite is fine. This may be preferable if group $g$ is already externally privileged. Mathematically, this means that in general the $\beta$-fair space resulting from Example \ref{ex:asym_overall_constraint} is nonconvex; this holds under mild regularity conditions (formally in Appendix \ref{app:nonconvex_ex}). 
Informally, they are: (1) the projection matrix of the privileged group spans the feature space (2) the principal's tolerance 
is not too high.
\par Unfortunately, the principal's fairness-constrained problem (Eq.~\eqref{eqn:fair_opt}) becomes less tractable in this case and computing the principal’s equilibrium is difficult. %
For a broad family of generally nonconvex $\beta$-fair spaces, we replace the nonconvex constraint with a convex restriction, ensuring tractability while the desired $\beta$-fairness is still achieved. Then, we provide upper bounds on the optimality loss—relative to the nonconvex $\beta$-fair SE equilibrium—that the principal would incur by deploying the \emph{restricted} $\beta$-fair policy. %

\begin{definition}[$\mathcal{F}$, a class of nonconvex $\beta$-fair spaces]\label{def:nonconvex_class}
$\calW(\beta;\Delta) \in \mathcal{F}$ if the following is true for some $Q\in \R^{d\times d}$ s.t. $Q \succ 0$: {\bf (1)} $\Delta(\bw) = \bw^\top Q \bw - f(\bw)$ and $\beta \leq \lambda_d(Q)$, and {\bf (2)} $f:\R^d \rightarrow \R$ is s.t. $f(\bw) \geq 0 \quad \forall \bw \in \R^d$, $f(\mathbf{0}) = 0$, and $L$-Lipschitz on $\bw \in \calW(\beta;\Delta)$
\end{definition}

\par As illustrated in Figure \ref{fig:nonconvex}, the structure of $\cal{F}$ guarantees the existence of a simple, nonempty and $\beta$-fair ellipsoidal restriction of $\calW(\beta;\Delta)$, which lies within the feasible region of the principal's fairness constrained problem. In Appendix \ref{app:nonconvex_bounds}, we prove formally that ellipsoid $\calE(\beta)$ where $\calE(\beta):= \{\bw: \bw\in\R^d, \bw^\top Q \bw \leq \beta\}$ is such a restriction. 
On $\calE(\beta)$ we can directly invoke Table \ref{tab:convex_bounds}'s Property~\ref{property:internal_ellipsoid} optimality loss bounds. Thus, given the discrepancy function satisfies Definition \ref{def:nonconvex_class}, for any tolerance of reasonable magnitude, the principal has an estimate of worst-case optimality loss, $\mathtt{OBJ}(\bw_c) - \mathtt{OBJ}(\bw_{c})$, in equilibrium.

\begin{figure}[h]
    \centering
    \includegraphics[scale=0.5]{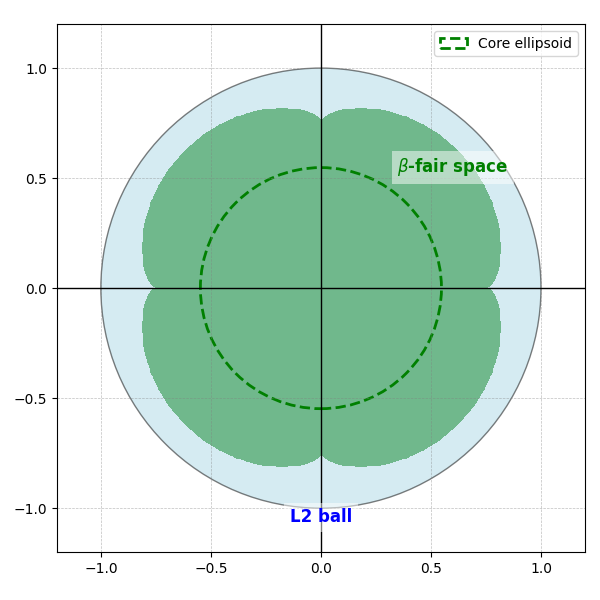}
    \caption{A nonconvex $\beta$-fair space in the form of Definition \ref{def:nonconvex_class} $\Delta(\bw):= \bw^\top\bw - (.3\sqrt{\lvert \bw_1\rvert} + .3\sqrt{\lvert \bw_2 \rvert} )$ and $\beta = .3$}
    \label{fig:nonconvex}
\end{figure}

\textbf{How tight is the convex restriction?}\label{sec:nonconvex_how_good_is_restriction}
While the convex restriction enables tractable computation and optimality loss bounds, how much is sacrificed for this tractability? Formally, if $\bw_{res}^\star$ is the optimal \emph{convex-restricted} policy, how big is $|\mathtt{OBJ}(\bw_c) - \mathtt{OBJ}(\bw_{res})|$?
Notably, the restriction does not sacrifice any required fairness, but some optimality is lost. Formally in Proposition \ref{prop:acc_sw_loss_restriction}, for both the social welfare and accuracy objectives, we present upper bounds of the form $\lvert\mathtt{OBJ}(\bw^\star_c)- \mathtt{OBJ}(\bw_{res}^\star)\rvert \leq \Phi_{\mathtt{OBJ}}(L, D, \bw^\star, \beta, Q)$ where $\bw_{res}^\star$ is the optimal policy restricted according to the convex $\calE(\beta)$. $\Phi_{\mathtt{OBJ}}$, the optimality loss due to the restriction under a particular objective, depends on the parameters $L$ and $D$, where $L$ is the Lipschitz constant of the $f(\bw)$ from \Cref{def:nonconvex_class} over the $\beta$-fair space, $\calW(\beta;\Delta)$, and $D$ is a measure of that $\beta$-fair space's diameter. To interpret these bounds, recall Figure \ref{fig:nonconvex}: \( \mathcal{W}(\beta;\Delta) \) consists of a core ellipsoid together with an additional region induced by \( f(\bw) \). Since $f(\bw)=0$ at least once over the $\beta$-fair space, the Lipschitz constant, $L$, controls how large this additional region can be. Thus, $L$ is a proxy for how much ``extra'' space $\calW(\beta;\Delta)$ gets around its core ellipsoid. As $L$ decreases, our bounds on optimality loss due to restriction also decrease. 
\section{Experimental Evaluation}\label{sec:experiments}
In Sections \ref{sec:convex} and \ref{sec:non-convex}, we derived upper bounds on the trade-offs between accuracy, social welfare, and desirable-effort fairness.
Since they are worst-case, these bounds do not directly reflect the \emph{actual} objective value (accuracy / social-welfare) attained in practice. Real-world data allows us to complement the theory in two ways: first, by estimating the instance-dependent quantity $\bw^\star$ and therefore the realized losses; and second, by varying the group and cost configurations in order to study how different patterns of group heterogeneity affect the equilibrium outcome. Using these observations we conduct two sets of experiments. On the $\mathtt{ADULT}$ dataset, we compute fair equilibria under different demographic splits, cost structures, and fairness tolerances, and report the resulting accuracy and social-welfare values. On the $\mathtt{TAIWAN}$ dataset, we use the same setup to compare realized losses against the theoretical upper bounds from~\Cref{tab:convex_bounds}.\footnote{For $\mathtt{ADULT}$, the plots report absolute objective values attained by the fair equilibrium. For $\mathtt{TAIWAN}$, the accuracy plots are equivalent, up to sign (because plots reflect the negative loss), to realized accuracy losses because the unconstrained accuracy optimum equals zero. Welfare losses, however, are computed relative to the unconstrained welfare optimum, since the welfare benchmark is nonzero.}

\subsection{Empirical evaluation of the realized losses on the $\mathtt{ADULT}$ dataset}\label{exp:adult}
\xhdr{Setup} We conduct experiments on the $\mathtt{ADULT}$ 
dataset 
and analyze how the same constraint impacts the principal's SE optimal value under different patterns of group disparity. Specifically, we study 3 population partitions based on age, country, and education level. Each partition induces a corresponding pair of projection matrices, $\Pi_1, \Pi_2$. To obtain our causal graph, we follow prior work on the $\mathtt{Adult}$ dataset \cite{DBLP:journals/corr/abs-2010-06529, nabi_causal_graph, chiappa2018pathspecificcounterfactualfairness} and instantiate an 8-node acyclic causal graph with nodes 
$\{$sex, age, western, married, edu-num, workclass, occupation, hours$\} $. We compute the $\beta$-fair equilibrium for various cost matrices, $A_1, A_2$ and desirable feature sets $\Pi_D$. Our main finding is that disparities aligned with the desirable feature space induce more persistent optimality loss, while misaligned disparities recover performance rapidly as $\beta$ increases. All details of our experimental setup can be found in Appendix~\ref{app:experiments_setup}.

\xhdr{Results}\label{sec:experiments_results} Figure~\ref{fig:pareto-grid} summarizes our experimental results. Specifically, it plots accuracy (left) and social welfare (right) as the fairness budget $\beta$ varies, using the $\ell_1$-fairness constraint as the discrepancy function; additional experiments for the $\ell_2$ discrepancy function and random cost matrices can be found in Appendix~\ref{app:experiments_results}. For accuracy, all partitions share the same unconstrained optimum of $0$ (perfect accuracy). For social welfare, the unconstrained optima differ across partitions, as the welfare coefficients (Lemma~\ref{lem:sw}) depend on each group's matrices.
\begin{figure}[!tb]
  \centering
  \begin{subfigure}[t]{0.49\linewidth}
    \includegraphics[width=\linewidth]{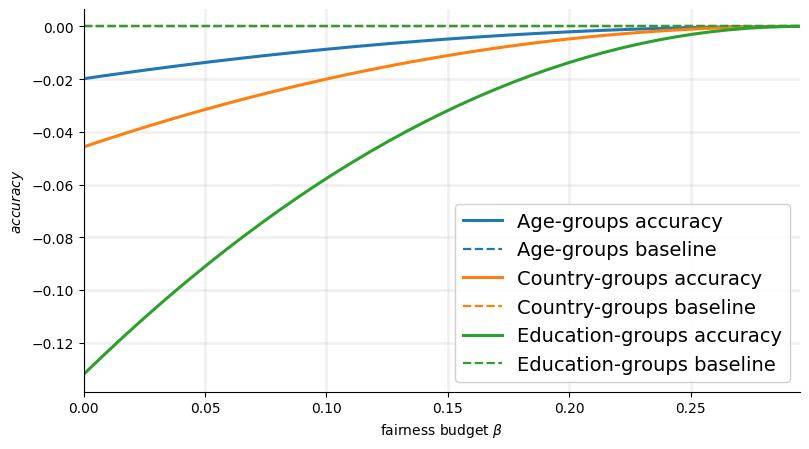}
    \caption{Uniform costs $A_1=A_2$}
    \label{fig:acc-loss}
  \end{subfigure}\hfill
  \begin{subfigure}[t]{0.49\linewidth}
    \includegraphics[width=\linewidth]{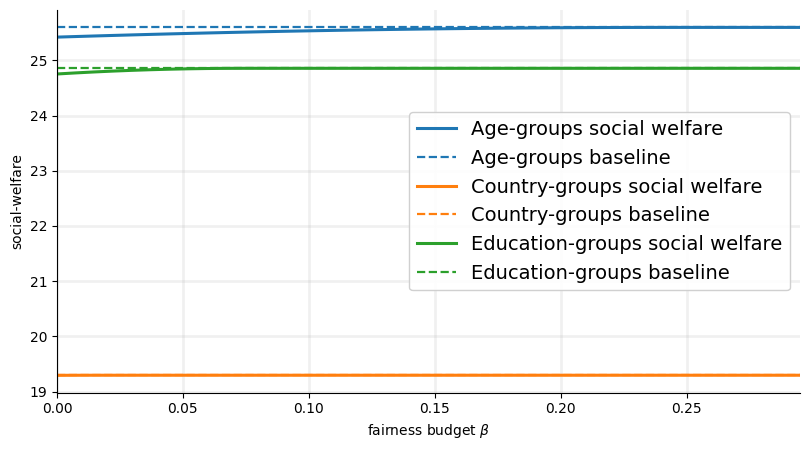}
    \caption{Uniform costs $A_1=A_2$}
    \label{fig:sw-compare}
  \end{subfigure}
  \medskip %

  \begin{subfigure}[t]{0.49\linewidth}
    \includegraphics[width=\linewidth]{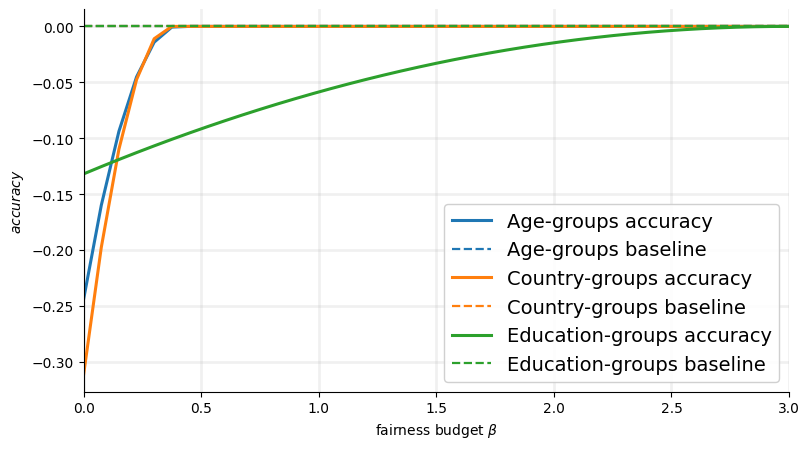}
    \caption{Non-uniform costs $A_1 = I, A_2 = 2A_1$}
    \label{fig:accuracy-non-uniform-costs}
  \end{subfigure}\hfill
  \begin{subfigure}[t]{0.49\linewidth}
    \includegraphics[width=\linewidth]{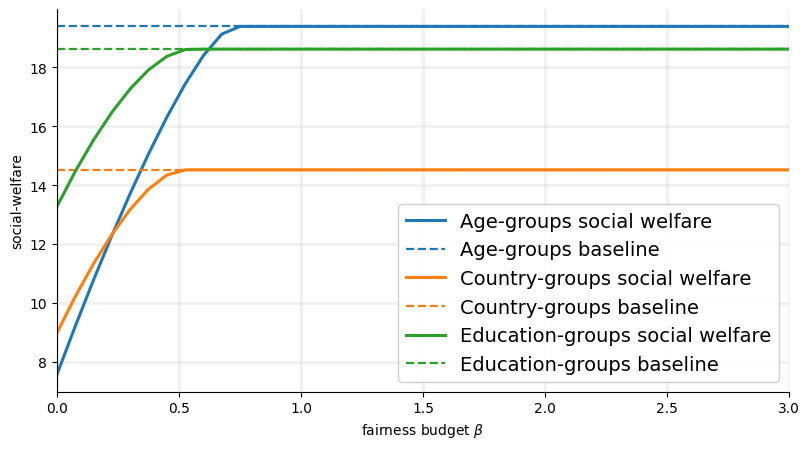}
    \caption{Non-uniform costs $A_1 = I, A_2 = 2A_1$}
    \label{fig:welfare-non-uniform-costs}
  \end{subfigure}
  \caption{Optimal value in $\beta$-fair equilibria for $\ell_1$ fairness constraints. Desirable features: occupation, workclass, education. Solid curves: $\ell_1$-fairness-constrained outcomes; dashed lines: unconstrained optimum. Colors denote groups (blue: age, orange: country, green: education). Fig. (a), (c): $\mathtt{ACC}$. Fig. (b), (d): $\mathtt{SW}$.} \vspace{-2em}
\label{fig:pareto-grid}
\end{figure}

In general, as the tolerance, $\beta$, increases, the fairness constraint relaxes and the optimal value (e.g., accuracy, social welfare) improves. 
In Figure \ref{fig:acc-loss}, the \emph{Education} split, which separates agents into well and less educated groups, is consistently the most constrained, its curve starts at the lower point and improves most slowly. In this case, the desirable attributes (education, workclass, occupation) are very aligned with this group disparity, so the discrepancy function, $\Delta$, penalizes movements along the most predictive directions. Figure~\ref{fig:sw-compare} shows that social welfare is relatively insensitive to disparities in $\Pi_g$: for all partitions, the constrained optimum approaches the unconstrained value rapidly. 
Next, we introduce group disparity by cost ($A_1 \neq A_2$). In Figure \ref{fig:accuracy-non-uniform-costs}, we see that optimality loss for the \emph{Education} split retains the same shape, though the maximum $\beta$ until optimality recovery has increased. Meanwhile, both \emph{Country} and \emph{Age} have significant optimality loss at the tightest fairness constraints, though recover $0$ accuracy much faster than \emph{Education}. This behavior is expected; the \emph{Education} split is strongly correlated with desirable characteristics and thus fairness constraints continue to have long lasting effects even at high $\beta$. With increased cost disparities, optimality in $\beta$-fair equilibria for \emph{Country} and \emph{Age} suffers at the harshest constraints, but as they are not closely aligned with desirability, it does not impact looser constraints. 
Figure \ref{fig:welfare-non-uniform-costs} exhibits a similar pattern. As SW is not so sensitive to disparities in $\Pi_g$ (Fig \ref{fig:sw-compare}), creating cost disparities on all groups invokes some optimality loss for all groups but with subtle differences. %

\subsection{Empirical Tightness of the Bounds on the $\mathtt{TAIWAN}$ Dataset}
\label{sec:TAIWAN-tightness}
For the $\mathtt{TAIWAN}$ dataset, we additionally compare realized losses
with the theoretical bounds in Table~\ref{tab:convex_bounds}.

\xhdr{Setup} We use the $\mathtt{TAIWAN}$ credit-default dataset, which contains financial
and demographic information about clients and is commonly used to predict
whether a client is likely to default on repayment. We study three population partitions based on age, education, and marital status. Each partition induces a corresponding pair of projection matrices, $\Pi_1,\Pi_2$. Similarly to the previous experiment, we compute the $\beta$-fair equilibrium for various cost matrices, $A_1, A_2$ and desirable feature sets $\Pi_D$. A full description of the setup can be found in \Cref{app:experiments_setup_TAIWAN}.

\xhdr{Results} Figure~\ref{fig:TAIWAN-tightness} compares the realized losses with the
theoretical upper bounds from \Cref{tab:convex_bounds}. The main observation is
that bounds which adapt to the geometry of the $\beta$-fair feasible region are
substantially tighter. Specifically, the geometry-dependent accuracy-loss bounds
decrease with $\beta$ and closely follow the empirical losses as the fairness
constraint is relaxed. By contrast, bounds that do not incorporate such
instance-specific information, such as the more general social-welfare bounds,
remain essentially constant as $\beta$ varies. These bounds are therefore much
more conservative: they are potentially tight only in very restrictive,
near-worst-case regimes, while the realized losses on the empirical instance
often decay rapidly. A more in depth discussion of results can be found in \Cref{app:TAIWAN_results_discussion}.
\begin{figure}[!tb]
  \centering
    \includegraphics[width=0.95\linewidth]{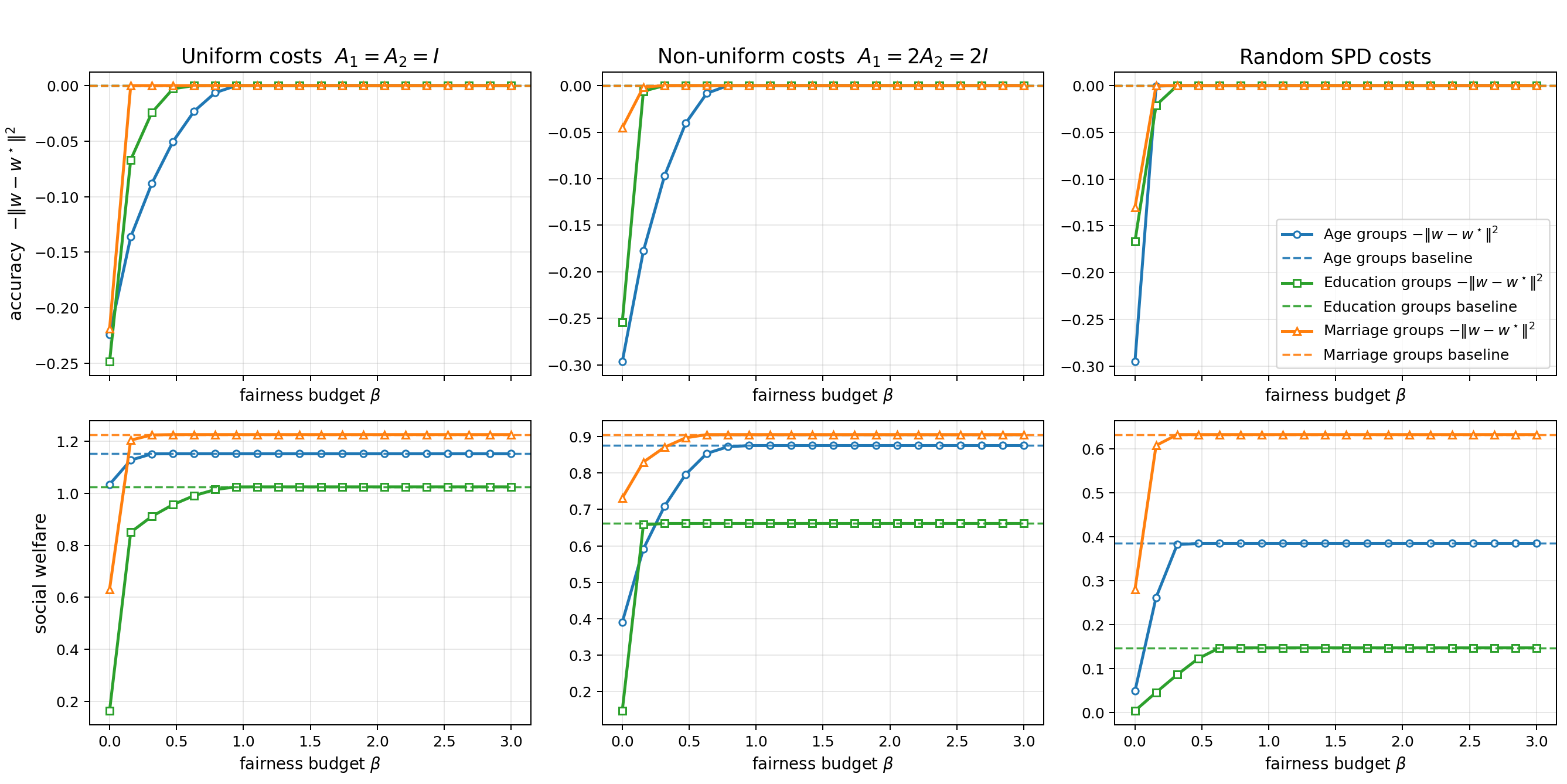}

    \caption{
    $\mathtt{TAIWAN}$ dataset: realized objective gaps and theoretical bounds under
    $\ell_1$ fairness constraints. Top row: signed accuracy gap
    $L_{\rm acc}(\beta)=f(\bw^\star_c)-f(\bw^\star_u)\le 0$, following the
    maximization convention; values closer to $0$ indicate smaller loss.
    Bottom row: social-welfare loss
    $L_{\rm SW}(\beta)=h(\bw^\star_u)-h(\bw^\star_c)\ge 0$.
    Columns show uniform costs, non-uniform costs, and random SPD costs.
    Solid curves are realized losses/gaps; dashed curves are the corresponding
    bounds from Table~\ref{tab:convex_bounds}. Colors denote the population partition.
    }    \vspace{-2em}
\label{fig:TAIWAN-tightness}
\end{figure}

\section{Discussion}\label{sec:discussion}

In this work, we provide a formal framework for incentivizing fair effort among heterogeneous groups, characterizing the inherent trade-off between fairness constraints and the principal's objective (e.g., accuracy or welfare).

{\bf Practical value for decision-makers.} %
Our bounds enable principals to make informed \emph{ex-ante} decisions: given estimates of cost matrices and group structure, decision-makers can evaluate whether a desired fairness tolerance $\beta$ is achievable at acceptable performance loss \emph{before} deployment. This is particularly valuable in regulated domains where fairness commitments may be externally mandated by stakeholders such as regulatory bodies, oversight committees, or public interest requirements. In practice, structural properties of group disparities may yield a trivial feasible set for a given choice of tolerance: upon detecting this, the principal can relax $\beta$ until a reasonable fairness-performance trade-off is achieved. In that case, Table 1 can be leveraged to select alternative discrepancy functions that better align with the specific structure of group disparities.

{\bf Future directions.} There are several future work avenues. First, while we assume agents know the contribution matrix $C$, one may study settings where agents must \emph{learn} causal relationships through experimentation or peer observation, or where groups hold different, possibly misspecified, causal beliefs. Second, for nonconvex settings, investigating tighter polyhedral outer approximations or problem-specific structure (e.g., sparsity in $C$, low-rank cost matrices) may reduce the gaps between restricted and true solutions. Third, characterizing how (potentially disparate) estimation errors in $A_g, \Pi_g$, and $C$ or agent's peer learning, $\bw_{\mathtt{EST}}(g)$, affect our bounds would help practitioners understand data requirements for reliable fairness guarantees. %
Finally, richer fairness constraints (e.g., intersectional considerations across multiple attributes or absolute cost caps for disadvantaged groups) could be included to address more complex needs.
\section{Acknowledgments}
Research reported in this paper was supported by an Amazon Research Award Fall 2023. Any opinions, findings, and conclusions or recommendations expressed in this material are those of the author(s) and do not reflect the views of Amazon.

\bibliographystyle{plainnat}
\bibliography{refs}
\newpage
\appendix
\section{Supplemental Material}
\label{sec:SUP}
\subsection{Supplemental material for Section \ref{sec:model}}\label{app:model}
\begin{table}[h!] %
    \centering %
    \caption{Notation Table} %
    \label{tab:notation} %

    \begin{tabular}{c|c} %
        \toprule %
        \textbf{Symbol} & \textbf{Meaning} \\ %
        \toprule %
        $d$ & dimension of features \\
        $C$ & Contribution matrix \\
        $A_g$ & group $g$ cost matrix \\
        $\Pi_g$ & group $g$ projection matrix \\
        $\Pi_D$ & diagonal desirability score matrix \\
        $\mathtt{des}(i)$ & feature $i$ desirability score \\
        $g$ & group\\
        $\D_g$ & group $g$ distribution of features\\
        $\mathcal{S}_g$ & group $g$ feature subspace\\
        $\bx$ & initial feature \\
        $\bx'$ & altered feature \\
        $\bx_e$ & exogenous effort \\
        $\bw$ & principal's rule \\
        $\bw^\star$ & ground truth rule \\
        $\bw_u^\star$ & ground truth rule projected onto the Euclidean ball (solution to Problem \ref{eqn:opt})\\
        $\bw_c^\star$ & principal's $\beta$-fair desirable rule (solution to Problem \ref{eqn:fair_opt}) \\      
        $\mathtt{Score}(\bx',g)$ & group $g$ agent's estimated score \\
        $\mathtt{Cost}(\bx_e;g)$ & agent cost for exerting effort $\bx_e$\\
        $\Delta(\bw)$ & discrepancy in desirable effort incentivized by $\bw$ \\
        $\mathcal{G}$ & causal graph \\
        $\mathcal{A}$ & set of edges \\
        $\omega$ & edge weights \\
        $\mathcal{P}_{ij}$ & set of all direct paths from node $i$ to node $j$ \\
        $\omega(p)$ & sum of weights on path $p \in \mathcal{P}_{ij}$ \\
        $\beta$ & principal's discrepancy tolerance \\
        $\mathcal{U}(\bx,\bx',g)$ & group $g$ agent's utility as a function alteration\\
        $\calB(1)$ & euclidean ball with radius 1 \\
        $\mathtt{ACC}(\bw, \bw^\star)$ & accuracy of rule $\bw$ \\
        $\mathtt{SW}(\bw, \bw^\star)$ & social welfare of rule $\bw$ \\
        $\calW(\beta;\Delta)$ & set of $\beta$-fair rules \\
        $\Pi_D\bx_e^{(g)}(\bw)$ & desirability-weighted effort vector for group $g$ agent \\
        $H(M)$ & Hoffman constant of matrix $M$ \\
        $\calE(\tilde{\beta})$ &  ellipsoidal envelope of a nonconvex $\calW(\beta;\Delta) \in \mathcal{F}$\\
        \hline %
    \end{tabular}
\end{table}

\subsubsection{Supplemental material for Section \ref{sec:notation}}\label{app:notation}
\xhdr{Hoffman constant}
We use $H(M)$ to denote the Hoffman constant of matrix $M\in \R^{k\times d}$, i.e., a constant such that $\forall \mathbf{b} \in \mathcal{M} + \R^k_{\geq 0}$ and $\forall \mathbf z \in \R^d$ it is true that 
\[
\mathtt{Dist}(z, P_A(\mathbf{b})) \leq H(M)\|(Mz - \mathbf{b})_+\|_2
\]
where $\mathtt{Dist}(z, P_A(\mathbf{b})):= \min\{\|z - x\|_2: \bx \in P_A(\mathbf{b})$. $\mathcal{M}:= \{M\bw: \bw \in \R^d\}$ and $P_A(\mathbf{b}):=\{\bw \in\R^d: M\bw \leq \mathbf{b}\}$.

In particular, this is a Hoffman constant for $p = 2$ (i.e., $\mathtt{Dist}$ is the $l2$ norm). This is an equivalent definition to the one used by \cite{pena_new_2021}.

\subsubsection{Supplemental material for Section \ref{sec:agents}}\label{app:agent_maxing}
\begin{lemma}[Contribution matrix of a DAG is invertible and kernel zero] \label{lem:c_ker_0}
    Let $C$ be the contribution matrix of a DAG. Then $\ker(C) = \emptyset$. Equivalently, $C$ is invertible.
\end{lemma}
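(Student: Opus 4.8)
The plan is to exploit the acyclicity of $\mathcal{G}$ to bring $C$ into triangular form via a topological ordering of the feature-nodes, and then read off invertibility from the diagonal. Concretely: since $\mathcal{G} = ([d], \mathcal{A}, \omega)$ is a DAG, it admits a topological ordering, i.e., a permutation $\pi$ of $[d]$ such that every edge $a \to b$ in $\mathcal{A}$ satisfies $\pi(a) < \pi(b)$. Let $P$ be the associated permutation matrix, and consider $\tilde C := P C P^\top$, which is just $C$ with its rows and columns simultaneously relabeled by $\pi$.

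The key observation is that for $i \neq j$, the path set $\mathcal{P}_{ij}$ is empty unless $i$ precedes $j$ in the topological order: any directed path from $i$ to $j$ traverses edges, each of which strictly increases the $\pi$-index, so a path from $i$ to $j$ forces $\pi(i) < \pi(j)$. Hence $C_{ij} = 0$ whenever $\pi(i) > \pi(j)$, which means $\tilde C$ is upper triangular. Together with $C_{ii} = 1$ for all $i$ (so the diagonal of $\tilde C$ is all ones), this gives $\det \tilde C = 1$. Since $P$ is a permutation matrix, $\det P \cdot \det P^\top = (\det P)^2 = 1$, so $\det C = \det \tilde C = 1 \neq 0$. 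Therefore $C$ is invertible, and equivalently $\ker(C) = \{\mathbf 0\}$ (the paper's ``$\ker(C) = \emptyset$''), which is what we wanted.

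I do not expect a genuine obstacle here; the only thing to be careful about is the bookkeeping around the direction convention for paths versus edges (and hence whether the triangular form comes out upper or lower), and noting explicitly that conjugation by a permutation matrix preserves both the determinant and the kernel dimension so that the reordering is harmless. If one prefers to avoid permutations entirely, an equivalent route is to argue directly that $C\bw = \mathbf 0$ implies $\bw = \mathbf 0$ by downward induction on the topological rank: a node $i$ that is maximal in the ordering (a sink) has row $C_{i\cdot}$ equal to $\mathbf e_i^\top$, forcing $\bw_i = 0$; peeling off sinks one at a time, the same argument propagates to give $\bw_j = 0$ for every $j$. Either phrasing is short and self-contained.
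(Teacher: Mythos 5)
Your proof is correct, but it takes a genuinely different route from the paper's. The paper argues by contradiction: it assumes a nonzero vector $\bx_e \in \ker(C)$, looks at the support $\mathcal{I} = \{i : x_{e_i} \neq 0\}$, and observes that since $C_{ii}=1$ and each row of $C$ aggregates paths into a node, every node of the induced subgraph on $\mathcal{I}$ must have positive in-degree from within $\mathcal{I}$ in order for the corresponding entry of $C\bx_e$ to vanish; a finite directed graph in which every vertex has positive in-degree contains a cycle, contradicting acyclicity. Your argument instead conjugates $C$ by the permutation matrix of a topological order, notes that off-diagonal entries vanish against the order because every directed path strictly increases topological rank, and reads off $\det C = 1$ from the resulting unit-diagonal triangular matrix. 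Both proofs ultimately rest on the same structural fact (no directed cycles), but yours yields the strictly stronger conclusion that $C$ is unimodular ($\det C = 1$) rather than merely nonsingular, and it is arguably more constructive since it exhibits the triangular form explicitly; the paper's proof is more combinatorial and avoids any reordering. Your caveat about the path-direction convention is warranted -- the paper's definition writes $\mathcal{P}_{ij}$ as paths from $i$ to $j$ but its own proof treats row $i$ as paths \emph{into} node $i$ -- yet as you note this only flips upper versus lower triangular (equivalently, whether your sink-peeling variant peels sinks or sources) and does not affect the conclusion.
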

\begin{proof}[Proof of Lemma \ref{lem:c_ker_0}]
Recall that for some exogenous effort $\bx_e \in \R^d$, we have the post-causality effort $\bx:= C\bx_e$. We shall prove $\ker(C) = \emptyset$ by contradiction. Suppose $\ker(C) \neq \emptyset$, then it must be the case that there exists $\bx_e$ where $\bx_e \neq \mathbf{0}$, s.t. $C\bx_e = \mathbf{0}$ and thus this exogenous effort ``cancels itself out". Let $\bx_e$ be a nonzero vector in $\ker(C)$ and define $\mathcal{I}:=\{i \in [d]: x_{e_i} \neq 0\}$. Because $C\bx_e = \mathbf{0}$, $\forall i \in \mathcal{I}$, node $i$ in the causal graph must have at least one in-degree from some $j \in \mathcal{I}$ or else there is no way that $\mathbf{c_i}^\top \bx_e = 0$. To see this, recall that by construction, a row $\mathbf{c_i}$, of $C$ is made up of paths \textit{into} node $i$ and $C_{i,i} = 1$. Consider the subgraph, $\mathcal{\tilde{G}}$, represented by the collection of nodes in $\mathcal{I}$ and the edges between them. Each of these nodes have at least one in-degree from another in the subgraph. Thus, no node in the finite directed subgraph has $0$ in-degree. Clearly this means there must be cycle because if we consider traversing the graph from any vertex, we must eventually repeat a vertex as they are finite and all have an in-degree. However, this poses a contradiction to our assumption that $C$ comes from a DAG. Therefore, it must be the case that $\ker(C) = \emptyset$. 
\end{proof}

\begin{proof}[Proof of Proposition \ref{prop:agent_erm}]
From Lemma 3.1 of \cite{chara_icml}, agents' estimate of $\bw_{est}$ can be solved in closed form as a function of $\Pi_g, \bw$: $\bw_{est}(g) = \Pi_g\bw$. Thus:
     \begin{align*}
       \mathcal{U}(\mathbf{x}, \mathbf{x}';g):&=\langle \Pi_g \bw, \mathbf{x}'\rangle - \frac{1}{2}||\sqrt{A_g}(\bx_e)||^2\\
       &= \langle \Pi_g \bw, \bx+C\bx_e\rangle - \frac{1}{2}||\sqrt{A_g}(\bx_e)||^2
    \end{align*}
    This function should be concave (sum of 3 concave functions: a constant plus a linear term minus a norm) and hence:
    \begin{align*}
        \nabla \mathcal{U}(\bx,\bx';g)& = C^{\top}\Pi_g \bw- A_g \bx_e=0 \iff \\
        \bx_e &= A_g^{-1} C^{\top}\Pi_g \bw
    \end{align*}

Therefore the best-response is:
\begin{equation}
    \bx'(\bx; g) = \bx + CA_g^{-1}C^{\top}\Pi_g \bw
\end{equation}

\end{proof}
\subsubsection{Supplemental material for Section \ref{sec:principal}}\label{app:principal}
\begin{lemma}[Equivalent Accuracy Objective]\label{lem:acc}
    An accuracy-maximizing principal can solve either Problem \ref{eqn:fair_opt} or \ref{eqn:opt} using the following objective to find the optimal $\bw_{\mathtt{ACC}}$ policy in equilibrium
\begin{equation}\label{eqn:acc_max}
    \begin{aligned}
        \max_{\bw \in \R^d} &\quad -\|\bw^\star - \bw\|_2^2
    \end{aligned}
\end{equation}
\end{lemma}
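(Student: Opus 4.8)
The plan is to substitute the agents' closed-form best response into $\mathtt{ACC}(\bw;\bw^\star)$, reduce it to a nonpositive quadratic that vanishes exactly at $\bw^\star$, and conclude that maximizing accuracy amounts to choosing the feasible policy ``closest'' to $\bw^\star$, which for the problems at hand coincides with maximizing $-\|\bw^\star-\bw\|_2^2$. First I would invoke Proposition~\ref{prop:agent_erm} to write $\bx'(\bx;\bw,g)=\bx+CA_g^{-1}C^\top\Pi_g\bw$, so that $\bw^{\star\top}\bx'(\bx;\bw,g)-\bw^\top\bx'(\bx;\bw,g)=(\bw^\star-\bw)^\top\bx'(\bx;\bw,g)$ and hence
\[
\mathtt{ACC}(\bw;\bw^\star)=-\sum_{g\in[2]}\E_{\bx\sim\D_g}\!\left[\big((\bw^\star-\bw)^\top\bx'(\bx;\bw,g)\big)^2\right]\le 0,
\]
with equality at $\bw=\bw^\star$. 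Thus $\bw^\star$ is a global maximizer of $\mathtt{ACC}(\cdot;\bw^\star)$ over $\R^d$, exactly as it is the global maximizer of $-\|\bw^\star-\bw\|_2^2$.

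Next I would argue that $\bw^\star$ is the \emph{unique} maximizer: $\mathtt{ACC}(\bw;\bw^\star)=0$ forces $(\bw^\star-\bw)^\top\bx'(\bx;\bw,g)=0$ for $\D_g$-almost every $\bx$ and both $g$; differencing two points in the support of $\D_g$ and using a mild nondegeneracy condition on $\D_g$, $\bw^\star-\bw$ must be orthogonal to $\mathrm{range}(\Pi_g)$ for $g=1,2$, and if the two group subspaces together span $\R^d$ this yields $\bw=\bw^\star$. Because the model normalizes $\bw^\star\in\calB(1)$, $\bw^\star$ is feasible for Problem~\eqref{eqn:opt}, and for Problem~\eqref{eqn:fair_opt} whenever $\Delta(\bw^\star)\le\beta$; in all these cases the constrained maximizer of $\mathtt{ACC}(\cdot;\bw^\star)$ is $\bw^\star$, which is also the Euclidean projection of $\bw^\star$ onto the feasible set and hence the unique maximizer of $-\|\bw^\star-\bw\|_2^2$ there, so the substitution is exact.

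The delicate case --- which I expect to be the main obstacle --- is Problem~\eqref{eqn:fair_opt} when $\bw^\star\notin\calW(\beta;\Delta)$, so that $\bw^\star$ is infeasible. Writing $\Sigma_g(\bw):=\E_{\bx\sim\D_g}[\bx'(\bx;\bw,g)\bx'(\bx;\bw,g)^\top]\succeq 0$, we have $\mathtt{ACC}(\bw;\bw^\star)=-(\bw^\star-\bw)^\top\big(\Sigma_1(\bw)+\Sigma_2(\bw)\big)(\bw^\star-\bw)$, and one must still show that its maximizer over the feasible region coincides with $\argmax_{\bw\in\calB(1),\,\Delta(\bw)\le\beta}-\|\bw^\star-\bw\|_2^2$; the subtlety is that $\Sigma_g(\bw)$ is a $\bw$-dependent rank-one perturbation of the feature second moment. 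I would handle it using the model's feature-distribution regularity (centered features and within-subspace isotropy, so that $\Sigma_1(\bw)+\Sigma_2(\bw)$ is a positive multiple of the identity plus a PSD term that contributes nothing along rays emanating from $\bw^\star$): this makes $t\mapsto\mathtt{ACC}(\bw^\star+t\bu;\bw^\star)$ unimodal with peak at $t=0$ for every direction $\bu$, which identifies the constrained accuracy-maximizer over the convex feasible region with the projection of $\bw^\star$. As a fallback, one can bypass this by directly matching the KKT conditions of the two problems on the explicit polyhedral (Property~\ref{property:polyhedron}) or ellipsoidal (Property~\ref{property:internal_ellipsoid}) feasible regions of Section~\ref{sec:convex}.
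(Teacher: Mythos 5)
There is a genuine gap. The paper's proof is a three-line computation: expand the square in $\mathtt{ACC}$ and use the (implicit) normalization $\sum_{g}\E_{\bx\sim\D_g}\bigl[\bx'(\bx;\bw,g)\,\bx'(\bx;\bw,g)^\top\bigr]=I$ to get the exact identity $\mathtt{ACC}(\bw;\bw^\star)=-\|\bw^\star-\bw\|_2^2$ as functions of $\bw$. Once the two objectives are literally equal, the equivalence holds over \emph{any} feasible region, constrained or not, and the lemma is immediate. You correctly identify that the crux is the second-moment matrix $\Sigma_1(\bw)+\Sigma_2(\bw)$ of the altered features, but you never establish the identity; you only show $\mathtt{ACC}\le 0$ with equality at $\bw^\star$, which settles the case where $\bw^\star$ is feasible and leaves the constrained case open.

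Your proposed patch for the infeasible case does not close it. Unimodality of $t\mapsto\mathtt{ACC}(\bw^\star+t\bu;\bw^\star)$ along every ray is strictly weaker than what the lemma requires: the constrained maximizer of $\mathtt{ACC}$ coincides with the Euclidean projection of $\bw^\star$ only if the level sets of $\mathtt{ACC}$ are spheres centered at $\bw^\star$ (equivalently, $\mathtt{ACC}$ is a decreasing function of $\|\bw-\bw^\star\|_2$ alone). If the quadratic form decays at different rates in different directions --- which happens whenever $\Sigma_1(\bw)+\Sigma_2(\bw)\not\propto I$ --- the constrained optimum over a convex set generically differs from the projection, even though every ray is unimodal. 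The KKT fallback inherits the same problem: the gradient of $\mathtt{ACC}$ at a boundary point is parallel to $\bw^\star-\bw$ (the gradient of $-\|\bw^\star-\bw\|_2^2$) only under the same isotropy condition, and your $\Sigma_g(\bw)$ is additionally $\bw$-dependent, contributing extra derivative terms you do not account for. The honest fix is to state and invoke the normalization $\sum_g\E[\bx'\bx'^{\top}]=I$ up front (as the paper silently does), at which point the entire case analysis becomes unnecessary.
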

\begin{proof}[Proof of Lemma \ref{lem:acc}]
Using the solution from \ref{prop:agent_erm}
    \begin{equation*}
\begin{aligned}
-\mathrm{ACC} &= \sum_{g \in [2]}\E_{\bx \sim \D_g}\left[\left(\bw^{\star\top} \hat{\bx}(\bx;g) - \bw^\top \hat{\bx}(\bx;g)\right)^2\right]\\
&= \sum_{g \in [2]}\E_{\bx \sim \D_g}\left[(\bw^{\star\top} \hat{\bx}(\bx;g))^2 + (\bw^\top \hat{\bx}(\bx;g))^2 - 2(\bw^{\star\top}\hat{\bx}(\bx;g))(\bw^\top \hat{\bx}(\bx;g))\right]\\
&=\langle\bw^\star, \bw^\star\rangle + \langle\bw, \bw\rangle -2\langle\bw^\star, \bw\rangle \\
&= \langle \bw^\star - \bw, \bw^\star - \bw\rangle\\
&= \|\bw^\star - \bw\|_2^2
\end{aligned}
\end{equation*}
\end{proof}

\begin{lemma}[Equivalent SW Objective]\label{lem:sw}
    A social-welfare-maximizing principal can solve either Problem \ref{eqn:fair_opt} or \ref{eqn:opt} using the following objective to find the optimal $\bw_{\mathtt{SW}}$ policy in equilibrium
\begin{equation}\label{eqn:sw_max}
    \begin{aligned}
        \max_{\bw \in \R^d} &\quad \langle(CA_1^{-1}C^{\top} \Pi_1 + CA_2^{-1}C^{\top} \Pi_2)^\top \bw^\star, \bw \rangle
    \end{aligned}
\end{equation}
\end{lemma}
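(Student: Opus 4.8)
The plan is to derive the social welfare objective directly from its definition, $\mathtt{SW}(\bw;\bw^\star) = \sum_{g\in[2]}\E_{\bx\sim\D_g}[\bx'(\bx;g)^\top\bw^\star]$, by substituting the closed-form best response obtained in Proposition~\ref{prop:agent_erm}. Recall from that proposition (and the proof of Lemma~\ref{lem:acc}) that the altered feature vector is $\bx'(\bx;g) = \bx + CA_g^{-1}C^\top\Pi_g\bw$. So the first step is to plug this in and split the expectation into a term that depends on $\bw$ and a term that does not: $\sum_{g}\E_{\bx\sim\D_g}[\bx^\top\bw^\star] + \sum_g \E_{\bx\sim\D_g}[(CA_g^{-1}C^\top\Pi_g\bw)^\top\bw^\star]$. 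The first sum is a constant with respect to the optimization variable $\bw$, so it can be dropped without changing the argmax.

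The second step is to observe that $(CA_g^{-1}C^\top\Pi_g\bw)^\top\bw^\star$ is deterministic in $\bx$ (it does not depend on $\bx$ at all), so the expectation over $\D_g$ acts as identity on it, and we are left with $\sum_{g\in[2]} (CA_g^{-1}C^\top\Pi_g\bw)^\top\bw^\star = \bw^\top\big(\sum_{g\in[2]} (CA_g^{-1}C^\top\Pi_g)^\top\big)\bw^\star$. The third step is just to rewrite this inner product in the symmetric form stated: using $\langle M^\top\bw^\star,\bw\rangle = \langle \bw, M^\top\bw^\star\rangle$ with $M = CA_1^{-1}C^\top\Pi_1 + CA_2^{-1}C^\top\Pi_2$, and noting that maximizing a linear objective plus a constant has the same maximizer as maximizing the linear objective alone, so the principal may equivalently use $\max_{\bw}\langle (CA_1^{-1}C^\top\Pi_1 + CA_2^{-1}C^\top\Pi_2)^\top\bw^\star,\bw\rangle$. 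Since this manipulation only touched the objective, it holds verbatim whether the optimization is over $\calB(1)$ alone (Eq.~\eqref{eqn:opt}) or over $\calB(1)\cap\calW(\beta;\Delta)$ (Eq.~\eqref{eqn:fair_opt}), which is exactly the claim.

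There is essentially no hard part here — this is a routine substitution-and-simplification argument, and the only things to be careful about are (i) making explicit that the $\E_{\bx\sim\D_g}[\bx^\top\bw^\star]$ term is independent of $\bw$ and hence irrelevant to the argmax, and (ii) that $A_g$ is invertible (given, since $A_g\succ0$) and $C$ is invertible (Lemma~\ref{lem:c_ker_0}), though invertibility is not even needed for this particular lemma since we only ever write $A_g^{-1}$ and $C^\top$, never solve a system. I would present it as a three-line display mirroring the style of the proof of Lemma~\ref{lem:acc} immediately preceding it.

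\begin{proof}[Proof of Lemma \ref{lem:sw}]
Using the best response from Proposition~\ref{prop:agent_erm}, $\bx'(\bx;g) = \bx + CA_g^{-1}C^\top\Pi_g\bw$, we have
\begin{align*}
\mathtt{SW}(\bw;\bw^\star) &= \sum_{g\in[2]}\E_{\bx\sim\D_g}\left[\bx'(\bx;g)^\top\bw^\star\right]\\
&= \sum_{g\in[2]}\E_{\bx\sim\D_g}\left[\bx^\top\bw^\star\right] + \sum_{g\in[2]}\left(CA_g^{-1}C^\top\Pi_g\bw\right)^\top\bw^\star\\
&= \underbrace{\sum_{g\in[2]}\E_{\bx\sim\D_g}\left[\bx^\top\bw^\star\right]}_{\text{independent of }\bw} + \left\langle (CA_1^{-1}C^\top\Pi_1 + CA_2^{-1}C^\top\Pi_2)^\top\bw^\star,\ \bw\right\rangle.
\end{align*}
The first term does not depend on the optimization variable $\bw$, so the maximizer of $\mathtt{SW}(\bw;\bw^\star)$ over any feasible set (in particular $\calB(1)$ or $\calB(1)\cap\calW(\beta;\Delta)$) coincides with the maximizer of the linear objective in Eq.~\eqref{eqn:sw_max}. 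Hence a social-welfare-maximizing principal may equivalently solve Problem~\eqref{eqn:fair_opt} or~\eqref{eqn:opt} with objective~\eqref{eqn:sw_max}.
\end{proof}
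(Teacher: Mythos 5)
Your proof is correct and follows essentially the same route as the paper's: substitute the closed-form best response $\bx'(\bx;g)=\bx+CA_g^{-1}C^\top\Pi_g\bw$ from Proposition~\ref{prop:agent_erm}, separate the $\bw$-independent expectation terms, and rewrite the remaining linear part as $\langle(CA_1^{-1}C^{\top}\Pi_1+CA_2^{-1}C^{\top}\Pi_2)^\top\bw^\star,\bw\rangle$. The only difference is presentational—your version is more compact and makes the constancy of the dropped terms slightly more explicit.
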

\begin{proof}[Proof of Lemma \ref{lem:sw}]
Recall that $\mathrm{SW}:= \sum_{i \in [2]}\E_{x\sim \D_g}[\langle \hat{\bx}(\bx;g), \bw^\star\rangle]$. Using the $\hat{\bx}$ solution from \ref{prop:agent_erm}, we see that this is equivalent to the following
\begin{equation*}
    \begin{aligned}
        \mathrm{SW} &= \sum_{g \in [2]}\E_{x\sim \D_g}[\langle \hat{\bx}(\bx;g), \bw^\star\rangle]\\
        &= \E_{\bx \sim \D_1}\left[\langle\bx + \Delta_1(\bw) , \bw^\star\rangle\right] + \E_{\bx \sim \D_2}\left[\langle \bx + \Delta_2(\bw), \bw^\star \rangle\right] \\
        &= \langle \Delta_1(\bw), \bw^\star \rangle +  \langle \Delta_2(\bw), \bw^\star \rangle + \E_{\bx \sim \D_1}\left[\langle\bx , \bw^\star\rangle\right] + \E_{\bx \sim \D_2}\left[\langle \bx , \bw^\star \rangle\right]\\
        &= \langle CA_1^{-1}C^{\top} \Pi_1\bw, \bw^\star \rangle +  \langle CA_2^{-1}C^{\top} \Pi_2\bw, \bw^\star \rangle + \E_{\bx \sim \D_1}\left[\langle\bx , \bw^\star\rangle\right] + \E_{\bx \sim \D_2}\left[\langle \bx , \bw^\star \rangle\right]\\
         &= \langle \bw^\star, CA_1^{-1}C^{\top} \Pi_1\bw \rangle +  \langle \bw^\star, CA_2^{-1}C^{\top} \Pi_2\bw \rangle + \E_{\bx \sim \D_1}\left[\langle\bx , \bw^\star\rangle\right] + \E_{\bx \sim \D_2}\left[\langle \bx , \bw^\star \rangle\right]\\
         &= \bw^{\star\top}CA_1^{-1}C^{\top} \Pi_1\bw + \bw^{\star\top}CA_2^{-1}C^{\top} \Pi_2\bw + \E_{\bx \sim \D_1}\left[\langle\bx , \bw^\star\rangle\right] + \E_{\bx \sim \D_2}\left[\langle \bx , \bw^\star \rangle\right]\\
         &= \langle (CA_1^{-1}C^{\top} \Pi_1)^\top\bw^{\star},\bw\rangle + \langle (CA_2^{-1}C^{\top} \Pi_2)^\top\bw^{\star},\bw\rangle + \E_{\bx \sim \D_1}\left[\langle\bx , \bw^\star\rangle\right] + \E_{\bx \sim \D_2}\left[\langle \bx , \bw^\star \rangle\right]\\
         &= \langle (CA_1^{-1}C^{\top} \Pi_1 + CA_2^{-1}C^{\top} \Pi_2)^\top \bw^\star, \bw \rangle+ \E_{\bx \sim \D_1}\left[\langle\bx , \bw^\star\rangle\right] + \E_{\bx \sim \D_2}\left[\langle \bx , \bw^\star \rangle\right]
    \end{aligned}
\end{equation*}
The two expectation terms are constants with respect to $\bw$, so they can be ignored when finding an optimal solution. Since the learner wants to maximize the linear objective, this is equivalent to minimizing the negative.
\end{proof}

\subsection{Supplemental material for Section \ref{sec:convex}}\label{app:convex}
\subsubsection{Supplemental material of Table \ref{tab:convex_bounds}}\label{app:convex_bounds2}
Recall 
The folllowing table:

\begin{center}
\setlength{\tabcolsep}{8pt}
\renewcommand{\arraystretch}{1.2}

\captionof{table}{Restated table of upper bounds on optimality loss in $\beta$-fair SE under Properties 
\ref{property:polyhedron}--\ref{property:internal_ellipsoid}.}
\label{tab:convex_bounds2}

\vspace{0.5em}

\begin{tabular}{@{}c c C C@{}}
    \toprule
    \textbf{$\beta$-Fair Space} &
    \textbf{Feasible Region} &
    \textbf{Accuracy Loss} &
    \textbf{SW Loss} \\
    \midrule
    
    no assumptions
    & convex  
    & A 
    & B \\
    
    no assumptions
    & polyhedron (Property \ref{property:polyhedron})
    & C
    & D \\
    
    ellipsoidal (Property \ref{property:ellipsoidal}) 
    & no assumptions
    & A 
    & D \\
    
    no assumptions
    & ellipsoidal (Property \ref{property:internal_ellipsoid})
    & E
    & F \\
    \bottomrule
\end{tabular}
\end{center}

We will now derive the bounds that correspond to each combination in the table.

\xhdr{CONVEXITY-BASED BOUNDS (A AND B)}
\begin{lemma}[Sharp accuracy gap on the unit ball]\label{lem:sharp_acc}
Let $f(\bw)=\|\bw-\bw^\star\|_2^2$ and let
\[
\bw_u^\star=\Pi_{\mathcal B(1)}(\bw^\star)
\quad\text{where}\quad
\mathcal B(1)=\{\bw:\|\bw\|_2\le 1\}.
\]
Then
\[
\sup_{\bw\in\mathcal B(1)} \big(f(\bw)-f(\bw_u^\star)\big)
=
\begin{cases}
4, & \text{if }\ \|\bw^\star\|\le 1,\\[2mm]
4\|\bw^\star\|, & \text{if }\ \|\bw^\star\|>1.
\end{cases}
\]
In particular, for all $\bw\in\mathcal B(1)$,
\[
0 \le f(\bw)-f(\bw_u^\star) \le 4\max\{1,\|\bw^\star\|\},
\]
and when $\|\bw^\star\|>1$ the upper bound $4\|\bw^\star\|$ is attained at $\bw=-\bw_u^\star$.
\end{lemma}

\begin{proof}
    Since $\bw_u^\star$ is the Euclidean projection of $\bw^\star$ onto $\mathcal B(1)$, it minimizes
    $f(\bw)$ over $\mathcal B(1)$, and therefore
    $f(\bw)-f(\bw_u^\star)\ge 0$ for all $\bw\in\mathcal B(1)$.
    We now compute the supremum explicitly.
    
    \medskip
    \noindent\textbf{Case 1: $\|\bw^\star\|\le 1$.}
    In this case, $\bw_u^\star=\bw^\star$. Hence
    \[
    \sup_{\bw\in\mathcal B(1)} \big(f(\bw)-f(\bw_u^\star)\big)
    =
    \sup_{\bw\in\mathcal B(1)} \|\bw-\bw^\star\|_2^2.
    \]
    For any $\bw\in\mathcal B(1)$,
    \[
    \|\bw-\bw^\star\|
    \le \|\bw\|+\|\bw^\star\|
    \le 1+\|\bw^\star\|
    \le 2,
    \]
    which implies $\|\bw-\bw^\star\|^2\le 4$. Equality is attained at
    $\bw=-\bw^\star/\|\bw^\star\|$ when $\bw^\star\neq 0$ (and at any $\|\bw\|=1$ when $\bw^\star=0$).
    Therefore the supremum equals $4$.
    
    \medskip
    \noindent\textbf{Case 2: $\|\bw^\star\|>1$.}
    Write $\bw^\star=r\mathbf{u}$ where $r=\|\bw^\star\|>1$ and $\|\mathbf{u}\|=1$.
    Then $\bw_u^\star=\mathbf{u}$.
    For any $\bw\in\mathcal B(1)$,
    \begin{align*}
        f(\bw)-f(\bw_u^\star)
        &= \|\bw-r\mathbf{u}\|_2^2-\|\mathbf{u}-r\mathbf{u}\|_2^2 \\
        &= \big(\|\bw\|^2 - 2r\langle \bw,\mathbf{u}\rangle + r^2\big) - (r-1)^2 \\
        &= \|\bw\|^2 - 2r\langle \bw,u\rangle + 2r - 1.
    \end{align*}
    
    Since $\|\mathbf{u}\|=1$, any vector $\bw\in\mathbb{R}^d$ admits the orthogonal decomposition
    \[
    \bw = \alpha\,\mathbf{u} + \bz,
    \qquad
    \alpha := \langle \bw,\mathbf{u}\rangle,
    \qquad
    \bz := \bw - \alpha \mathbf{u},
    \]
    where $\langle \mathbf{u},\bz\rangle=0$. By orthogonality,
    \[
    \|\bw\|^2 = \alpha^2 + \|\bz\|^2.
    \]
    The constraint $\bw\in\mathcal B(1)$ is therefore equivalent to
    $\alpha^2+\|\bz\|^2\le 1$.
    Substituting into the expression above yields
    \[
    f(\bw)-f(\bw_u^\star)
    =
    \alpha^2 + \|\bz\|^2 - 2r\alpha + 2r - 1.
    \]
    
    For fixed $\alpha$, this expression is increasing in $\|\bz\|^2$, and hence is maximized
    when $\|\bz\|^2=1-\alpha^2$, i.e., when $\|\bw\|=1$.
    This reduces the problem to the one-dimensional maximization
    \[
    \sup_{\alpha\in[-1,1]} \big(1 - 2r\alpha + 2r - 1\big)
    =
    \sup_{\alpha\in[-1,1]} 2r(1-\alpha).
    \]
    The maximum is attained at $\alpha=-1$, giving
    \[
    \sup_{\bw\in\mathcal B(1)} \big(f(\bw)-f(\bw_u^\star)\big)
    =
    2r(1-(-1)) = 4r = 4\|\bw^\star\|.
    \]
    The maximizer is $\bw=-u=-\bw_u^\star$, and substituting back gives
    \[
    f(-\bw_u^\star)-f(\bw_u^\star)=(r+1)^2-(r-1)^2=4r.
    \]
    
    \medskip
    Combining the two cases proves the claimed formula and the uniform bound
    $f(\bw)-f(\bw_u^\star)\le 4\max\{1,\|\bw^\star\|\}$.
\end{proof}
\begin{proposition}[Bound A: Convexity-based Accuracy loss]\label{prop:acc_convexity}
Assuming convexity of the fairness-constrained feasible region, for the Accuracy objective, the optimality loss between the unconstrained and fairness-constrained equilibrium is upper-bounded:
\begin{equation}
       |f(\bw^\star_c) - f(\bw^\star_u)| \leq 4\max\{1,\|\bw^\star\|\}
\end{equation}
\end{proposition}
\begin{proof}
    This follows directly from \Cref{lem:sharp_acc} and \Cref{lem:acc}. notice that if for all $\bw\in\mathcal B(1)$,
\[
0 \le f(\bw)-f(\bw_u^\star) \le 4\max\{1,\|\bw^\star\|\},
\]
then it holds for $\bw^\star_c$, which must the euclidean ball because it is the optimal point to equation \ref{eqn:fair_opt}, whose feasible region is a subset of the euclidean ball.
\end{proof}

\begin{proposition}[Bound B: Convexity-based SW loss]
\label{prop:bounds_vanilla_tightness}
Assuming convexity of the fairness-constrained feasible region, for the SW objective, the optimality loss between the unconstrained and fairness-constrained equilibrium is upper-bounded:
\begin{equation}
       |f(\bw^\star_c) - f(\bw^\star_u)| \leq 2\|\tilde \bw \|
\end{equation}
(where $\tilde{\bw} =(CA_1^{-1}C^{\top} \Pi_1 + CA_2^{-1}C^{\top} \Pi_2)^\top \bw^\star$)) and this social welfare bound is tight.
\end{proposition}
\begin{proof}
    Our bounds are almost tight in the sense that there is a worst case example that reaches them - at least in the limit-. Starting with no assumptions about the $\beta$-fair space (first row of \ref{tab:convex_bounds}) we can construct a tight example by choosing the $\beta$-fair space $\calW$ so that $\calW$ touches the boundary of $\calB(1)$ and the optimum $\bw_c^\star$ lies antidiametrically of the optimum $\bw_u^\star$.

\begin{figure}[!ht]
\begin{tikzpicture}[
  scale=1.25,
  line cap=round,
  line join=round,
  >=Latex,
  use as bounding box={(-1.25,-1.15) rectangle (1.15,1.05)},
]

\def\eps{6}     %
\def\angC{30}
\pgfmathsetmacro{\angU}{\angC+180-\eps}

\definecolor{ballfill}{RGB}{170,205,235}
\definecolor{kfill}{RGB}{250,236,170}

\fill[ballfill, opacity=0.75] (0,0) circle (1);
\draw[black, line width=0.9pt] (0,0) circle (1);
\node[blue!70!black] at (0.95,-0.05) {$\mathbb{B}_2$};

\coordinate (wc) at ({cos(\angC)},{sin(\angC)});
\coordinate (wu) at ({cos(\angU)},{sin(\angU)});
\fill[black] (wc) circle (0.018);
\fill[black] (wu) circle (0.018);

\pgfmathsetmacro{\cx}{cos(\angC)}
\pgfmathsetmacro{\cy}{sin(\angC)}

\coordinate (t) at (-\cy,\cx);   %

\def\L{1}   %
\def\D{1.4}    %

\coordinate (L1) at ($(wc) + \L*(t)$);
\coordinate (L2) at ($(wc) - \L*(t)$);

\fill[kfill, opacity=0.6]
  (L1) -- (L2)
  -- ($(L2)+(wc)$)
  -- ($(L1)+(wc)$) -- cycle;

\draw[black, line width=0.9pt] (L1) -- (L2);

\node[ inner sep=1pt] at ($(wc)+(0.6,0.6)$) {$K$};

\node[anchor=north east] at ($(wu)+(-0.04,-0.04)$) {$w_u^\star$};
\node[anchor=south west] at ($(wc)+(-0.08,-0.10)$) {$w_c^\star$};

\draw[red!80!black, line width=1.1pt, -{Latex[length=3mm]},
      shorten <=1pt, shorten >=1pt]
  (wu) -- (wc);

\fill[black] (0,0) circle (0.012);

\end{tikzpicture}
\caption{A tight example on optimality loss: the $\beta$-fair space is a convex set K touching the boundary of the ball}
\label{fig:tight_basic}
\end{figure}

Using Social Welfare's linear representation (\Cref{lem:sw}) we have that 
    $$\text{SW}(\bw_u^\star) - \text{SW}(\bw_c^\star) = \text{SW}(\bw_u^\star- \bw_c^\star) = \langle \tilde \bw, \bw_u^\star-\bw_c^\star\rangle $$
    where $\tilde \bw = (CA_1^{-1}C^\top \Pi_1 \;+\; CA_2^{-1}C^\top \Pi_2)\bw^\star  $
    Now for $\tilde \bw$ lying in the same direction with $\bw_u^\star-\bw_c^\star$ we get:
    \begin{align*}
      \text{SW}(\bw_u^\star) - \text{SW}(\bw_c^\star) = \|\tilde \bw\| \|\bw_u^\star - \bw_c^\star\| 
    \end{align*}
    Using the example above we get $\|\bw_u^\star - \bw_c^\star\|  = 2$ (recall the ball is of radius 1) hence 
       \begin{align*}
      \text{SW}(\bw_u^\star) - \text{SW}(\bw_c^\star) =2 \|\tilde \bw\| 
    \end{align*} 
    We finally argue that indeed exists a chance for $\tilde \bw$ to lie in the same direction with $\bw_u^\star-\bw_c^\star$.
    
    Therefore, it suffices to enforce $\tilde{\bw}\parallel \bw^\star$. One simple sufficient condition is that
    $$
    CA_1^{-1}C^\top \Pi_1 \;+\; CA_2^{-1}C^\top \Pi_2 \;=\; \lambda I
    \quad\text{for some }\lambda>0,
    $$
    in which case $\tilde{\bw}=\lambda \bw^\star$ and thus $\tilde{\bw}$ is colinear with
    $\bw_u^\star-\bw_c^\star$ (and oriented in the same direction).
    
    As a concrete instance, take $C=I$, $A_1=A_2=I$, and choose $\Pi_1,\Pi_2$ such that
    $\Pi_1+\Pi_2=\lambda I$ (e.g., \textit{complementary orthogonal projections} scaled to sum to $\lambda I$).
    Then the above condition holds and the Cauchy--Schwarz step is tight.
\end{proof}

\xhdr{POLYHEDRAL FEASIBLE REGION BOUNDS (C AND D)}
\begin{proposition}[Bound C: Internal polyhedron (Property \ref{property:polyhedron}) accuracy loss]\label{prop:polyhedral_acc_loss}
    For fairness spaces satisfying Property \ref{property:polyhedron}, when the learner's objective, $f$, is accuracy, optimality loss is upper bounded.
    \[\textbf{if }\bw^\star \in \calB(1): 
    |f(\bw_{u}^\star) - f(\bw_{c}^\star)| \leq \left[H(M)\|(M\bw^\star - \beta\mathbf{1})_+\|_2\right]^2
    \]
    \[\textbf{if }\bw^\star \notin \calB(1): 
    |f(\bw_{u}^\star) - f(\bw_{c}^\star)| \leq \left[H(M)\|(M\frac{\bw^\star}{\|\bw^\star\|_2} - \beta\mathbf{1})_+\|_2\right]^2
    \]
    Where $M\in \R^{k\times d}$ and $\mathbf{1} \in \R^k$ define the polyhedral representation of the fairness space. That is, the feasible region, $\calB(1) \cap \calW(\beta) = \{\bw \in \R^d: M\bw \leq \mathbf{b}\}$
\end{proposition}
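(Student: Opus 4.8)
The plan is to convert both optimal values into squared Euclidean projection distances and then invoke the Hoffman-constant inequality recorded in Appendix~\ref{app:notation}. By Lemma~\ref{lem:acc}, an accuracy-maximizing principal solves $\max_{\bw}\,-\|\bw^\star-\bw\|_2^2$, so maximizing the objective over a closed convex set is exactly Euclidean projection of $\bw^\star$ onto that set. Hence the unconstrained optimizer is $\bw_u^\star=\mathrm{proj}_{\calB(1)}(\bw^\star)$ with $f(\bw_u^\star)=-\mathtt{Dist}(\bw^\star,\calB(1))^2$, and, under Property~\ref{property:polyhedron}, the constrained feasible region is the polyhedron $P:=\calB(1)\cap\calW(\beta;\Delta)=\{\bw\in\R^d:M\bw\le\beta\mathbf 1\}$, so $\bw_c^\star=\mathrm{proj}_P(\bw^\star)$ with $f(\bw_c^\star)=-\mathtt{Dist}(\bw^\star,P)^2$. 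Since $P\subseteq\calB(1)$ we have $f(\bw_u^\star)\ge f(\bw_c^\star)$, so the quantity to bound is $\mathtt{Dist}(\bw^\star,P)^2-\mathtt{Dist}(\bw^\star,\calB(1))^2\ge 0$. Note also that $P$ is nonempty for the natural discrepancy functions (e.g. $\mathbf 0\in P$ whenever $\Delta(\mathbf 0)\le\beta$), hence $\beta\mathbf 1\in\{M\bw:\bw\in\R^d\}+\R^k_{\ge 0}$ and the Hoffman inequality is applicable with right-hand side $\mathbf b=\beta\mathbf 1$, giving $\mathtt{Dist}(\bz,P)\le H(M)\,\|(M\bz-\beta\mathbf 1)_+\|_2$ for every $\bz\in\R^d$.

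If $\bw^\star\in\calB(1)$, then $\bw_u^\star=\bw^\star$ and $f(\bw_u^\star)=0$, so the gap is exactly $\mathtt{Dist}(\bw^\star,P)^2$; applying the Hoffman bound at $\bz=\bw^\star$ and squaring yields $\big[H(M)\,\|(M\bw^\star-\beta\mathbf 1)_+\|_2\big]^2$, which is the claim.

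If $\bw^\star\notin\calB(1)$, then $\bw_u^\star=\bw^\star/\|\bw^\star\|_2$ and $f(\bw_u^\star)=-(\|\bw^\star\|_2-1)^2$. Here I would lower-bound $f(\bw_c^\star)$ by evaluating $f$ at an explicit feasible competitor: set $\bw_0:=\mathrm{proj}_P(\bw_u^\star)\in P$, so that $f(\bw_c^\star)\ge f(\bw_0)=-\|\bw^\star-\bw_0\|_2^2$. Writing $\bw^\star-\bw_0=(\bw^\star-\bw_u^\star)+(\bw_u^\star-\bw_0)$ and using $\bw^\star-\bw_u^\star=(\|\bw^\star\|_2-1)\bw_u^\star$, I would control $\|\bw^\star-\bw_0\|_2$ through $\|\bw^\star\|_2-1$ and $\|\bw_u^\star-\bw_0\|_2=\mathtt{Dist}(\bw_u^\star,P)$, then apply the Hoffman bound at $\bz=\bw_u^\star=\bw^\star/\|\bw^\star\|_2$, which is precisely what produces the factor $\|(M\tfrac{\bw^\star}{\|\bw^\star\|_2}-\beta\mathbf 1)_+\|_2$ appearing in the statement; subtracting $f(\bw_u^\star)=-(\|\bw^\star\|_2-1)^2$ then gives the stated right-hand side.

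The $\bw^\star\notin\calB(1)$ case is where I expect the real work. In the interior case the gap is literally a single projection distance and the Hoffman bound applies at a glance, but outside the ball the projection onto $\calB(1)$ drags $\bw^\star$ radially inward to $\bw_u^\star$ while $\bw_0=\mathrm{proj}_P(\bw_u^\star)$ need not lie on that ray, so the cross term $\langle\bw^\star-\bw_u^\star,\;\bw_u^\star-\bw_0\rangle$ is not zero. One therefore has to bookkeep how the radial displacement $\|\bw^\star\|_2-1$ interacts with $\mathtt{Dist}(\bw_u^\star,P)$ in order to collapse the estimate down to exactly $\big[H(M)\|(M\tfrac{\bw^\star}{\|\bw^\star\|_2}-\beta\mathbf 1)_+\|_2\big]^2$ rather than a strictly larger expression (e.g. one carrying an extra $(\|\bw^\star\|_2-1)$-dependent term); verifying that the cancellation goes through, or otherwise identifying the exact constant, is the delicate step, and I would treat it separately from the clean interior argument above.
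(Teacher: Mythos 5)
Your first case ($\bw^\star \in \calB(1)$) is correct and is essentially the paper's own argument: the optimality gap collapses to the single squared projection distance onto $P:=\calB(1)\cap\calW(\beta;\Delta)$, and the Hoffman inequality applied at $\bz=\bw^\star$ finishes it. The genuine gap in your proposal is that you never close the case $\bw^\star\notin\calB(1)$ --- and your instinct about the cross term is exactly right, so let me be concrete about why the ``delicate step'' cannot succeed. With $\bw_0$ the projection of $\bw_u^\star$ onto $P$, the first-order optimality condition for projecting $\bw^\star$ onto $\calB(1)$ gives $\langle \bw^\star - \bw_u^\star,\; \bw_0 - \bw_u^\star\rangle \le 0$ (since $\bw_0\in P\subseteq\calB(1)$), so the cross term in your expansion satisfies $\langle \bw^\star-\bw_u^\star,\;\bw_u^\star-\bw_0\rangle \ge 0$, and therefore $\|\bw^\star-\bw_0\|_2^2 - \|\bw^\star-\bw_u^\star\|_2^2 \ge \|\bw_u^\star-\bw_0\|_2^2$. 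The cross term works \emph{against} the claimed bound; Cauchy--Schwarz gives the honest estimate $\|\bw_u^\star-\bw_0\|_2^2 + 2(\|\bw^\star\|_2-1)\|\bw_u^\star-\bw_0\|_2$, i.e.\ exactly the extra $(\|\bw^\star\|_2-1)$-dependent term you feared. For what it is worth, the paper's own proof hides this by asserting $\|\bz-\bw^\star\|_2^2 \le \|\bw_u^\star-\bw^\star\|_2^2 + \|\bz-\bw_u^\star\|_2^2$ and labeling it a ``triangle inequality''; that is not the triangle inequality, and by the computation above it is in fact reversed here.

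Indeed, the second display of the proposition is false as stated. Take $d=1$, $\bw^\star=2$, and a discrepancy/tolerance pair with $\calB(1)\cap\calW(\beta;\Delta)=[-\beta,\beta]=\{w: Mw\le\beta\mathbf{1}\}$ for $M=(1,-1)^\top$ and $\beta=1/2$; one checks directly from the definition in Appendix~\ref{app:notation} that $H(M)=1$. Then $\bw_u^\star=1$ and $\bw_c^\star=\beta$, so the loss is $(2-\beta)^2-1=1.25$, while the claimed bound is $\bigl[\lVert((1-\beta)_+,\,(-1-\beta)_+)\rVert_2\bigr]^2=(1-\beta)^2=0.25$; the deficit $2(1-\beta)$ is precisely the neglected cross term. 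So rather than trying to force the cancellation, you should prove (and the paper should state) the corrected bound $h^2+2(\|\bw^\star\|_2-1)_+\,h$ with $h:=H(M)\lVert(M\bw^\star/\|\bw^\star\|_2-\beta\mathbf{1})_+\rVert_2$, which your decomposition plus Cauchy--Schwarz yields immediately and which is tight in the example above.
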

\begin{proof}
    First, recall that accuracy optimization is simply euclidean projection onto the respective feasible region (Lemma \ref{lem:acc}). Therefore, $\bw^\star_u$ and $\bw^\star_c$ are projections of $\bw^\star$ onto $\calB(1)$ and $\calB(1)\cap \calW(\beta)$ respectively. In order to prove Proposition \ref{prop:polyhedral_acc_loss}, we will leverage that $\bw^\star_c$ is closer to $\bw^\star$ than the projection of $\bw^\star_u$ would be onto $\calW(\beta)\cap \calB(1)$. Let $\bz:= P_{\calW(\beta)\cap \calB(1)}(\bw_u^\star)$ be this projection. Clearly, we have:
    \begin{align*}
    \|\bw_c^\star - \bw^\star\|_2^2 &\leq \|\bz - \bw^\star\|_2^2 \tag{$\bw^\star_c$ is optimal}\\
    &\leq \|\bw^\star_u - \bw^\star\|_2^2 + \|\bz - \bw^\star_u\|_2^2  \tag{triangle ineq}
    \end{align*}
Using Lemma \ref{lem:acc} this implies that 
\[
|f(\bw_{u}^\star) - f(\bw_{c}^\star)| = \|\bw_c^\star - \bw^\star\|_2^2 - \|\bw^\star_u - \bw^\star\|_2^2  \leq \|\bz - \bw^\star_u\|_2^2
\]
Note that clearly $\|\bw_c^\star - \bw^\star\|_2 \geq \|\bw^\star_u - \bw^\star\|_2$. So now, we must simply upper bound $\|\bz - \bw^\star_u\|_2$. Using a Hoffman bound (\cite{Hoffman1952}), we have that $\exists \bw_0 \in \calW(\beta)\cap \calB(1), H(M)>0$ such that, 
\[
\left[H(M)\|(M\bw^\star - \beta\mathbf{1})_+\|_2\right]^2 \geq \|\bw^\star_u - \bw_0\|_2^2 \geq \|\bw^\star_u - \bz\|_2^2
\]
Of course, we want this bound in terms of $\bw^\star$ not $\bw^\star_u$, but this is simple because since $\bw^\star_u$ is the projection onto $\calB(1)$, we have a closed form in terms of $\bw^\star$. In particular, if $\bw^\star \in \calB(1)$, then $\bw^\star = \bw^\star_u$. Otherwise, we normalize it by the l-2 norm: $\bw^\star/\|\bw^\star\|_2 = \bw^\star_u$
\end{proof}

\begin{lemma}[SW Loss on feasible sets that include the origin]\label{sw_bound_with_zero}
For any feasible space $K$ of Problem 2 such that $K \subseteq \mathcal{B}(1)$ and $0 \in K$, we have that
\[ f(\mathbf{w}_u^\star) - f(\mathbf{w}_c^\star) \le \sqrt{2}\|\tilde{\mathbf{w}}\|_2 \]
where $\tilde{\mathbf{w}} = (CA_1^{-1}C^\top \Pi_1 + CA_2^{-1}C^\top \Pi_2)\mathbf{w}^\star$ and $f$ is the social welfare.
\end{lemma}

\begin{proof}
First, recall the social welfare linear representation as $f(\mathbf{w}) = \langle \tilde{\mathbf{w}}, \mathbf{w} \rangle$ from \Cref{lem:sw}. Since $0 \in K$, and $\mathbf{w}_c^\star$ is the maximizer over $K$, we have $\langle \tilde{\mathbf{w}}, \mathbf{w}_c^\star \rangle \ge \langle \tilde{\mathbf{w}}, 0 \rangle = 0$. In addition, we know that the optimum over the unit Euclidean ball $\mathcal{B}(1)$ is exactly $\mathbf{w}_u^\star = \frac{\tilde{\mathbf{w}}}{\|\tilde{\mathbf{w}}\|_2}$. Hence:
\begin{align*}
\|\mathbf{w}_c^\star - \mathbf{w}_u^\star\|_2 &= \sqrt{\|\mathbf{w}_c^\star\|_2^2 + \|\mathbf{w}_u^\star\|_2^2 - 2\langle \mathbf{w}_c^\star, \mathbf{w}_u^\star \rangle} \\
&= \sqrt{\|\mathbf{w}_c^\star\|_2^2 + \|\mathbf{w}_u^\star\|_2^2 - 2 \frac{1}{\|\tilde{\mathbf{w}}\|_2} \tilde{\mathbf{w}}^\top \mathbf{w}_c^\star} \\
&\le \sqrt{1 + 1 - 2 * 0} = \sqrt{2} 
\end{align*}
Since social welfare is a linear function:
\begin{align*}
f(\mathbf{w}_u^\star) - f(\mathbf{w}_c^\star) &= f(\mathbf{w}_u^\star - \mathbf{w}_c^\star) \\
&= \langle \tilde{\mathbf{w}}, \mathbf{w}_u^\star - \mathbf{w}_c^\star \rangle \\
&\le \|\tilde{\mathbf{w}}\|_2 \|\mathbf{w}_u^\star - \mathbf{w}_c^\star\|_2, \quad \text{by Cauchy-Schwarz inequality}  \\
&\le \sqrt{2}\|\tilde{\mathbf{w}}\|_2
\end{align*}
\end{proof}
\begin{corollary}[Bound D:Internal polyhedron (Property \ref{property:polyhedron}) social welfare loss]\label{prop:polyhedral_sw_loss}
     For fairness spaces satisfying Property \ref{property:polyhedron}, when the learner's objective, $f$, is social welfare, optimality loss is upper bounded by
    \[|f(\bw_{u}^\star) - f(\bw_{c}^\star)| \leq \sqrt{2} \|\tilde{\bw}\|_2
    \]
    where $\tilde{\bw} := (CA_1^{-1}C^{\top} \Pi_1 + CA_2^{-1}C^{\top} \Pi_2)^\top \bw^\star$.
\end{corollary}
\begin{proof}
    Notice that Property \ref{property:polyhedron} defines a the feasible region as $\calB(1) \cap \calW(\beta;\Delta) = \{\bw: \bw\in \R^d, M\bw \leq \beta\mathbf{1}\}$ for some $M \in \R^{k\times d}$ and $\mathbf{1}\in \R^k$, where $k \in \mathbb{N}$, $\beta > 0$. Clearly, $\bw = \mathbf{0}\in$ said space. Therefore we can directly apply \Cref{sw_bound_with_zero}.
\end{proof}

\xhdr{ELLIPSOIDAL FAIR SPACE BOUNDS (A AND D)}
\begin{corollary}[Bound A: Ellipsoidal (Property \ref{property:ellipsoidal}) accuracy loss]\label{prop:ellipsoid_acc_loss}
    For fairness spaces satisfying Property \ref{property:ellipsoidal}, when the learner's objective, $f$, is accuracy, optimality loss is upper bounded by
    \[|f(\bw_{u}^\star) - f(\bw_{c}^\star)| \leq 4\max(\norm{\mathbf{w}^\star}_2,1)
    \] 
\end{corollary}
\begin{proof}
    Notice that the feasible region is the intersection of an ellipsoid (a convex space) and a euclidean ball (a convex space). Therefore, the by laws of convexity, the feasible region must also be convex. Thus we can invoke \Cref{prop:acc_convexity}.
\end{proof}

\begin{corollary}[Bound D: Ellipsoidal (Property \ref{property:ellipsoidal}) social welfare loss]\label{prop:ellipsoid_sw_loss}
    For fairness spaces satisfying Property \ref{property:ellipsoidal}, when the learner's objective, $f$, is social welfare, optimality loss is upper bounded by
    \[|f(\bw_{u}^\star) - f(\bw_{c}^\star)| \leq \sqrt{2} \|\tilde{\bw}\|_2
    \]
    where $\tilde{\bw} := (CA_1^{-1}C^{\top} \Pi_1 + CA_2^{-1}C^{\top} \Pi_2)^\top \bw^\star$. 
\end{corollary}
\begin{proof}
    Notice that Property \ref{property:ellipsoidal} defines a $\beta$-fair ellipsoid as: $\calW(\beta;\Delta) = \{\bw: \bw \in \R^d, \bw^\top Q\bw \leq \beta\}$, $Q \succ 0$, $\beta >0$. Clearly, $\bw = \mathbf{0}\in \calW(\beta;\Delta)$. Also $\mathbf{0} \in \calB(1)$, so the feasible region, $\calW(\beta;\Delta) \cap \calB(1)$ includes $\bw = \mathbf{0}$. Therefore we can directly apply \Cref{sw_bound_with_zero}.
\end{proof}

\xhdr{ELLIPSOIDAL FEASIBLE REGION BOUNDS (E AND F)}

\begin{lemma}\label{lem:closed_form_opt}
    Consider the following convex optimization problem, where 
    $Q \in \R^{d\times d}$ and $Q\succ 0$. and $\tilde{\bw}\neq \mathbf{0}$
\begin{equation}\label{eqn:generic_lin_ellipsoid}
    \begin{aligned}
        \text{minimize}_{w \in \R^d} &\quad \langle \tilde{\bw}, \bw \rangle\\
        \text{subject to} &\quad \bw^\top Q \bw - \beta \leq 0
    \end{aligned}
\end{equation}
    The optimal solution is 
\begin{equation*}
    \hat{\bw}^\star_{ellipsoid} = \frac{-\sqrt{\beta}Q^{-1}\tilde{\bw}}{\|\sqrt{Q^{-1}}\tilde{\bw}\|_2}
\end{equation*}
\end{lemma}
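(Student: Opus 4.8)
The plan is to ``whiten'' the ellipsoidal constraint by a change of variables and then reduce the whole problem to a one-line Cauchy--Schwarz estimate, finally checking that the claimed point attains the resulting lower bound. Since $Q \succ 0$, the symmetric positive definite square roots $\sqrt{Q}$ and $\sqrt{Q^{-1}}$ exist and are mutual inverses. Substituting $\mathbf{u} := \sqrt{Q}\,\bw$ (equivalently $\bw = \sqrt{Q^{-1}}\,\mathbf{u}$), the constraint $\bw^\top Q\bw \le \beta$ becomes $\|\mathbf{u}\|_2^2 \le \beta$, and, using symmetry of $\sqrt{Q^{-1}}$, the objective becomes $\langle \tilde{\bw}, \bw\rangle = \langle \tilde{\bw}, \sqrt{Q^{-1}}\,\mathbf{u}\rangle = \langle \mathbf{c}, \mathbf{u}\rangle$ with $\mathbf{c} := \sqrt{Q^{-1}}\,\tilde{\bw}$. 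Note $\mathbf{c} \neq \mathbf{0}$ because $\tilde{\bw}\neq\mathbf{0}$ and $\sqrt{Q^{-1}}$ is invertible; in particular $\|\mathbf{c}\|_2 = \|\sqrt{Q^{-1}}\,\tilde{\bw}\|_2 > 0$, so the denominator in the claimed formula is legitimate.

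For the reduced problem, Cauchy--Schwarz gives, for every feasible $\mathbf{u}$, $\langle \mathbf{c}, \mathbf{u}\rangle \ge -\|\mathbf{c}\|_2\|\mathbf{u}\|_2 \ge -\sqrt{\beta}\,\|\mathbf{c}\|_2$, with equality iff $\|\mathbf{u}\|_2 = \sqrt{\beta}$ and $\mathbf{u}$ is antiparallel to $\mathbf{c}$, i.e.\ $\mathbf{u} = -\sqrt{\beta}\,\mathbf{c}/\|\mathbf{c}\|_2$. Translating back via $\bw = \sqrt{Q^{-1}}\,\mathbf{u}$ yields the candidate minimizer $\hat{\bw} = -\sqrt{\beta}\,\sqrt{Q^{-1}}\,\mathbf{c}/\|\mathbf{c}\|_2 = -\sqrt{\beta}\,Q^{-1}\tilde{\bw}/\|\sqrt{Q^{-1}}\,\tilde{\bw}\|_2$, which is exactly the claimed formula. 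One then verifies directly that $\hat{\bw}^\top Q\hat{\bw} = \beta$ (so $\hat{\bw}$ is feasible, indeed on the boundary) and that $\langle \tilde{\bw}, \hat{\bw}\rangle = -\sqrt{\beta}\,\|\sqrt{Q^{-1}}\,\tilde{\bw}\|_2$ meets the lower bound; hence $\hat{\bw}$ is globally optimal.

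As an alternative derivation one can invoke KKT: the feasible set is compact and the objective linear and nonconstant, so a minimizer lies on $\bw^\top Q\bw = \beta$; stationarity of the Lagrangian $\langle\tilde{\bw},\bw\rangle + \mu(\bw^\top Q\bw - \beta)$ forces $\bw = -\tfrac{1}{2\mu}Q^{-1}\tilde{\bw}$ with multiplier $\mu > 0$, and plugging into the active constraint pins $\tfrac{1}{2\mu} = \sqrt{\beta}/\|\sqrt{Q^{-1}}\,\tilde{\bw}\|_2$, giving the same answer. I would present the Cauchy--Schwarz argument as the main proof, since it is self-contained and certifies global optimality without appealing to constraint qualifications. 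There is essentially no hard step here: the only points requiring care are recording that $\sqrt{Q^{-1}}$ is well defined and invertible (this is where $\tilde{\bw}\neq\mathbf{0}$ enters, ensuring a nonzero denominator) and invoking the equality case of Cauchy--Schwarz, which is what upgrades the inequality to an attained optimum.
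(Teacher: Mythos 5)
Your proposal is correct, and its main argument takes a genuinely different route from the paper's. The paper proves this lemma by invoking Slater's condition to make the KKT conditions necessary and sufficient, then solving the stationarity and complementary-slackness equations to pin down the multiplier $\lambda^\star = \tfrac{1}{2\sqrt{\beta}}\lVert\sqrt{Q^{-1}}\tilde{\bw}\rVert_2$ and hence the minimizer; this is essentially the ``alternative derivation'' you sketch in your last paragraph. Your primary argument instead whitens the constraint via $\mathbf{u} = \sqrt{Q}\,\bw$ and closes with the equality case of Cauchy--Schwarz. Both are valid and land on the same formula. Your route buys a self-contained certificate of global optimality with no appeal to constraint qualifications, and it hands you the optimal value $-\sqrt{\beta}\,\lVert\sqrt{Q^{-1}}\tilde{\bw}\rVert_2$ for free, which is exactly the quantity the paper plugs in downstream (e.g., in the internal-ellipsoid social-welfare bound). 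The paper's KKT route is the more standard template and extends more readily to the multi-constraint variants it needs elsewhere (the ball-plus-ellipsoid intersection in the Property~\ref{property:ellipsoidal} social-welfare argument uses a two-multiplier version of the same calculation). One minor quibble: the hypothesis $\tilde{\bw}\neq\mathbf{0}$ does more than keep the denominator nonzero --- it is what rules out a constant objective (in the paper's language, what forces $\lambda > 0$); your phrasing slightly understates this, but your argument uses it correctly via $\mathbf{c}\neq\mathbf{0}$.
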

\begin{proof}
    By Slater's condition,  we have that the KKT conditions are necessary and sufficient for optimality. Therefore, any $\bw$ satisfying them must be optimal. We shall proceed by solving the KKT conditions.
\begin{equation}\label{eqn:sw_max_linearobj3}
    \begin{aligned}
        \text{minimize}_{w \in \R^d} &\quad \langle \tilde{\bw}, \bw \rangle\\
        \text{subject to} &\quad \bw^\top Q \bw - \beta \leq 0
    \end{aligned}
\end{equation}
The KKT conditions state:
\begin{equation*}
\begin{aligned}
    -\tilde{\bw} &= 2\lambda Q\bw \\
    \lambda &\geq 0 \\
    \lambda(\bw^\top Q \bw - \beta) &= 0\\
    \bw^\top Q\bw \leq \beta \\
\end{aligned}
\end{equation*}
$\lambda \neq 0$ because if it were, then $\tilde{\bw} = \mathbf{0}$, which would be a contradiction. Therefore it must be the case that $\lambda > 0$. 

Using the first KKT condition we have that $\bw = \frac{-Q^{-1}\tilde{\bw}}{2\lambda}$. $Q$ is PD, therefore it is also invertible. Because $\lambda > 0$, it must be the case that $\bw^\top Q \bw = \beta$ (3rd KKT condition). Thus we have:
\begin{equation*}
    \bw^\top \bw = \left[\frac{-Q^{-1}\tilde{\bw}}{2\lambda}\right]^\top Q\frac{-Q^{-1}\tilde{\bw}}{2\lambda} = \frac{1}{4\lambda^2}\tilde{\bw}^\top Q^{-1} \tilde{\bw} = \beta
\end{equation*}
Notice that if $Q$ is PD, it is symmetric. Solving for $\lambda$, we see that $\lambda^\star = \frac{1}{2\sqrt{\beta}}\|\sqrt{Q^{-1}}\tilde{\bw}\|_2$. Substituting this into $\bw = \frac{-Q^{-1}\tilde{\bw}}{2\lambda}$ we have
\begin{equation*}
    \hat{\bw}^\star_{ellipsoid} = \frac{-\sqrt{\beta}Q^{-1}\tilde{\bw}}{\|\sqrt{Q^{-1}}\tilde{\bw}\|_2}
\end{equation*}
\end{proof}

\begin{proposition}[Bound F: Internal ellipsoid (Property \ref{property:internal_ellipsoid}) SW loss]\label{prop:internal_ellipoid_sw_bound}
Assume desirability fairness space, $\calW(\beta)$ satisfies property \ref{property:internal_ellipsoid}. Then social welfare loss is exactly:
\[
\textrm{SW}(\bw_u^\star) - \textrm{SW}(\bw_c^\star) = \|(CA_1^{-1}C^{\top} \Pi_1 + CA_2^{-1}C^{\top} \Pi_2)^\top \bw^\star\|_2 - \sqrt{\beta}\|(CA_1^{-1}C^{\top} \Pi_1 + CA_2^{-1}C^{\top} \Pi_2)^\top \bw^\star\|_{Q^{-1}}
\]
\end{proposition}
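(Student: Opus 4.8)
The plan is to reduce both the unconstrained and the fairness-constrained social-welfare problems to optimizing a single linear functional over a simple set, read off each optimal value in closed form, and subtract.

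First I would invoke Lemma \ref{lem:sw} to write $\mathtt{SW}(\bw;\bw^\star) = \langle \tilde{\bw}, \bw\rangle + K$, where $\tilde{\bw} = (CA_1^{-1}C^{\top}\Pi_1 + CA_2^{-1}C^{\top}\Pi_2)^\top\bw^\star$ and $K = \E_{\bx\sim\D_1}[\langle\bx,\bw^\star\rangle] + \E_{\bx\sim\D_2}[\langle\bx,\bw^\star\rangle]$ is a constant independent of $\bw$. Hence $\mathtt{SW}(\bw_u^\star) - \mathtt{SW}(\bw_c^\star) = \langle\tilde{\bw},\bw_u^\star\rangle - \langle\tilde{\bw},\bw_c^\star\rangle$, so it suffices to compute the optimal value of $\langle\tilde{\bw},\cdot\rangle$ over each feasible region. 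If $\tilde{\bw}=\mathbf{0}$ the objective is constant, the loss is $0$, and the right-hand side of the claim is also $0$; I would dispatch this degenerate case first and assume $\tilde{\bw}\neq\mathbf{0}$ thereafter.

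For the unconstrained problem $\max_{\bw\in\calB(1)}\langle\tilde{\bw},\bw\rangle$, Cauchy--Schwarz gives optimal value $\|\tilde{\bw}\|_2$, attained at $\bw_u^\star = \tilde{\bw}/\|\tilde{\bw}\|_2$; thus $\mathtt{SW}(\bw_u^\star) = \|\tilde{\bw}\|_2 + K$. For the fairness-constrained problem, Property \ref{property:internal_ellipsoid} states the feasible region is exactly $\{\bw: \bw^\top Q\bw\le\beta\}$ with $Q\succ0$, so I would apply Lemma \ref{lem:closed_form_opt} to the equivalent minimization $\min\langle-\tilde{\bw},\bw\rangle$ subject to $\bw^\top Q\bw\le\beta$; substituting $-\tilde{\bw}$ for the lemma's objective vector yields the maximizer $\bw_c^\star = \sqrt{\beta}\,Q^{-1}\tilde{\bw}\,/\,\|\sqrt{Q^{-1}}\tilde{\bw}\|_2$. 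A short computation, using $\|\sqrt{Q^{-1}}\tilde{\bw}\|_2^2 = \tilde{\bw}^\top Q^{-1}\tilde{\bw}$ (valid since $Q^{-1}\succ0$ is symmetric), then gives $\langle\tilde{\bw},\bw_c^\star\rangle = \sqrt{\beta}\,\tilde{\bw}^\top Q^{-1}\tilde{\bw}\,/\,\sqrt{\tilde{\bw}^\top Q^{-1}\tilde{\bw}} = \sqrt{\beta}\,\sqrt{\tilde{\bw}^\top Q^{-1}\tilde{\bw}} = \sqrt{\beta}\,\|\tilde{\bw}\|_{Q^{-1}}$, so $\mathtt{SW}(\bw_c^\star) = \sqrt{\beta}\,\|\tilde{\bw}\|_{Q^{-1}} + K$. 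Subtracting, the constant $K$ cancels and I obtain exactly $\|\tilde{\bw}\|_2 - \sqrt{\beta}\,\|\tilde{\bw}\|_{Q^{-1}}$, which is the claimed identity after expanding $\tilde{\bw}$.

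There is no real obstacle here beyond careful bookkeeping: the two points requiring attention are (i) tracking the sign when converting the welfare maximization into the minimization form required by Lemma \ref{lem:closed_form_opt}, and (ii) separating the case $\tilde{\bw}=\mathbf{0}$ so that the closed-form maximizers are well defined. As a consistency check I would note that Property \ref{property:internal_ellipsoid} forces $\{\bw:\bw^\top Q\bw\le\beta\}\subseteq\calB(1)$, so the constrained feasible set is contained in the unconstrained one; hence the loss is automatically nonnegative, i.e. $\|\tilde{\bw}\|_2 \ge \sqrt{\beta}\,\|\tilde{\bw}\|_{Q^{-1}}$, consistent with the formula.
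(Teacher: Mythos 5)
Your proposal is correct and follows essentially the same route as the paper: reduce to the linear objective via Lemma \ref{lem:sw}, obtain the closed-form maximizers over the ball and the ellipsoid via Lemma \ref{lem:closed_form_opt} (the paper applies the lemma with $Q=I$, $\beta=1$ for the unconstrained case where you use Cauchy--Schwarz directly, an immaterial difference), and subtract. Your explicit handling of the degenerate case $\tilde{\bw}=\mathbf{0}$ and the sign bookkeeping is a small but welcome addition of care over the paper's write-up.
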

\begin{proof}
Recall that Social Welfare is a functionally linear objective (Lemma \ref{lem:sw}). Specifically: $\langle\mathbf{c}, \bw \rangle$ where $\mathbf{c} := (CA_1^{-1}C^{\top} \Pi_1 + CA_2^{-1}C^{\top} \Pi_2)^\top \bw^\star$ 

Using Lemma \ref{lem:closed_form_opt} we get the closed from for the solution for the unconstrained and fairness constrained problem, which we then plug back in for the optimal value. In each case, $\tilde{\bw}:= - (CA_1^{-1}C^{\top} \Pi_1 + CA_2^{-1}C^{\top} \Pi_2)^\top \bw^\star$

Equation \ref{eqn:opt}: $Q = I$, $\beta = 1$. This yields:
\begin{equation}
    \bw_{u}^\star := \frac{(CA_1^{-1}C^{\top} \Pi_1 + CA_2^{-1}C^{\top} \Pi_2)^\top \bw^\star}{\|(CA_1^{-1}C^{\top} \Pi_1 + CA_2^{-1}C^{\top} \Pi_2)^\top \bw^\star\|_2}
\end{equation}
\begin{equation*}
\begin{aligned}
    SW_u &= \langle(CA_1^{-1}C^{\top} \Pi_1 + CA_2^{-1}C^{\top} \Pi_2)^\top \bw^\star, \frac{(CA_1^{-1}C^{\top} \Pi_1 + CA_2^{-1}C^{\top} \Pi_2)^\top \bw^\star}{\|(CA_1^{-1}C^{\top} \Pi_1 + CA_2^{-1}C^{\top} \Pi_2)^\top \bw^\star\|_2} \rangle \\
    &= \|(CA_1^{-1}C^{\top} \Pi_1 + CA_2^{-1}C^{\top} \Pi_2)^\top \bw^\star\|_2
\end{aligned}
\end{equation*}

Equation \ref{eqn:fair_opt}: $Q = Q, \beta = \beta$. This yields:
\begin{equation}
    \hat{\bw}_{c}^\star := \frac{\sqrt{\beta}Q^{-1}(CA_1^{-1}C^{\top} \Pi_1 + CA_2^{-1}C^{\top} \Pi_2)^\top \bw^\star}{\|\sqrt{Q^{-1}}(CA_1^{-1}C^{\top} \Pi_1 + CA_2^{-1}C^{\top} \Pi_2)^\top \bw^\star\|_2}
\end{equation}
\begin{equation*}
\begin{aligned}
    SW_{c} &= \langle(CA_1^{-1}C^{\top} \Pi_1 + CA_2^{-1}C^{\top} \Pi_2)^\top \bw^\star, \frac{\sqrt{\beta}Q^{-1}(CA_1^{-1}C^{\top} \Pi_1 + CA_2^{-1}C^{\top} \Pi_2)^\top \bw^\star}{\|\sqrt{Q^{-1}}(CA_1^{-1}C^{\top} \Pi_1 + CA_2^{-1}C^{\top} \Pi_2)^\top \bw^\star\|_2} \rangle \\
    &=\sqrt{\beta}\|(CA_1^{-1}C^{\top} \Pi_1 + CA_2^{-1}C^{\top} \Pi_2)^\top \bw^\star\|_{Q^{-1}}
\end{aligned}
\end{equation*}
Subtracting the two yields the result of the proposition.
\end{proof}

\begin{lemma}[Useful fact about ellipsoids]\label{lem:ellipsoid_radii}
    Consider an ellipsoid, $\calE(\beta):= \{\bw \in \R^d: \bw^\top Q \bw \leq \beta\}$. 

    The largest radius: $\sqrt{\frac{\beta}{\lambda_d(Q)}} = \max_{\bw \in \calE(\beta)}\|\bw\|_2$

    Thus the ball envelope of the ellipsoid is $\calB(\sqrt{\frac{\beta}{\lambda_d(Q)}}):= \{\bw \in \R^d: \bw^\top\bw \leq \frac{\beta}{\lambda_d(Q)}\}$
\end{lemma}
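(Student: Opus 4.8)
The plan is to diagonalize $Q$ and reduce the constrained maximization of $\|\bw\|_2$ to a weighted problem over eigen-coordinates. Since $Q \succ 0$ is symmetric, write $Q = U\Lambda U^\top$ with $U$ orthogonal and $\Lambda = \mathrm{Diag}(\lambda_1(Q),\dots,\lambda_d(Q))$, where by the paper's convention $\lambda_1(Q) \geq \cdots \geq \lambda_d(Q) > 0$ (so $\lambda_d(Q)$ is the smallest eigenvalue). Substitute $\mathbf{v} = U^\top \bw$; since $U$ is orthogonal this is an isometry, $\|\mathbf{v}\|_2 = \|\bw\|_2$, and the constraint $\bw^\top Q\bw \leq \beta$ becomes $\sum_{i\in[d]} \lambda_i(Q)\, v_i^2 \leq \beta$. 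So the task reduces to maximizing $\sum_i v_i^2$ over this weighted-ball region.

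Next I would establish both directions of the equality $\max_{\bw\in\calE(\beta)}\|\bw\|_2 = \sqrt{\beta/\lambda_d(Q)}$. For the upper bound, for any feasible $\mathbf{v}$ we have $\lambda_d(Q)\,\|\mathbf{v}\|_2^2 = \lambda_d(Q)\sum_i v_i^2 \leq \sum_i \lambda_i(Q) v_i^2 \leq \beta$, hence $\|\bw\|_2^2 = \|\mathbf{v}\|_2^2 \leq \beta/\lambda_d(Q)$. For attainment, take $\mathbf{v} = \sqrt{\beta/\lambda_d(Q)}\,\mathbf{e}_d$, i.e. $\bw = \sqrt{\beta/\lambda_d(Q)}\, U\mathbf{e}_d$, the scaled unit eigenvector associated with the smallest eigenvalue; then $\bw^\top Q\bw = (\beta/\lambda_d(Q))\cdot\lambda_d(Q) = \beta$, so $\bw \in \calE(\beta)$, and $\|\bw\|_2 = \sqrt{\beta/\lambda_d(Q)}$. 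This gives the claimed largest radius (attained on the boundary).

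The ball-envelope statement is then immediate: the bound $\|\bw\|_2 \leq \sqrt{\beta/\lambda_d(Q)}$ holding for every $\bw \in \calE(\beta)$ is precisely the inclusion $\calE(\beta) \subseteq \calB\!\left(\sqrt{\beta/\lambda_d(Q)}\right) = \{\bw \in \R^d : \bw^\top\bw \leq \beta/\lambda_d(Q)\}$. There is no substantive obstacle in this lemma; the only points requiring care are keeping straight the paper's eigenvalue-ordering convention (which is why the \emph{smallest} eigenvalue $\lambda_d(Q)$ appears in the denominator) and checking that the extremal $\bw$ actually lies on the boundary of $\calE(\beta)$ rather than its interior.
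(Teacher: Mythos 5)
Your proof is correct, but it takes a different route from the paper's. The paper sets this up as a constrained optimization ($\max \|\bw\|_2$ subject to $\bw^\top Q\bw \leq \beta$), writes down the KKT conditions, uses complementary slackness to force $\bw^\top Q\bw = \beta$, and observes that stationarity makes $\bw$ an eigenvector of $Q$ with $\|\bw\|_2 = \sqrt{\beta/\lambda}$, whence the maximum corresponds to the smallest eigenvalue $\lambda_d(Q)$. You instead diagonalize $Q = U\Lambda U^\top$, pass to eigen-coordinates, and prove the two directions separately: the upper bound via $\lambda_d(Q)\sum_i v_i^2 \leq \sum_i \lambda_i(Q)v_i^2 \leq \beta$ and attainment via the scaled eigenvector of $\lambda_d(Q)$. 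Your argument is arguably the more airtight of the two: maximizing a convex function ($\|\bw\|_2$) over a convex set is not a convex program, so the paper's appeal to KKT/Slater only yields necessary conditions and implicitly requires selecting the correct stationary point among all eigendirections, whereas your explicit upper bound plus witness settles the claim without any such selection step. The paper's version is shorter and reuses the KKT machinery already deployed in Lemma~\ref{lem:closed_form_opt}; yours is self-contained and elementary. Both correctly handle the boundary attainment and the ball-envelope corollary.
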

\begin{proof}
We can prove the first claim by applying KKT and noting that Slater's conditions applies.
Consider the problem:
\begin{equation*}
\begin{aligned}
\text{maximize} &\quad \|\bw\|_2\\
\text{subject to} &\quad \bw^\top Q\bw - \beta \leq 0
\end{aligned}
\end{equation*}
KKT conditions:
\begin{enumerate}
    \item $\frac{\bw}{\|\bw\|} = 2 \mu Q\bw$
    \item $\mu \geq 0$
    \item $\mu(\bw^\top Q\bw - \beta) = 0$
    \item $\bw ^\top Q\bw \leq \beta$
\end{enumerate}
First, notice that $\mu > 0$, because otherwise, $\bw$ must be $0$. So it must be the case that for complementary slackness, $\bw^\top Q\bw = \beta$. Then, from the first condition we have:
 $
 Q\bw = \frac{\bw}{2\mu \|\bw\|} = \lambda \bw
 $
 where $\lambda = \frac{1}{2\mu \|\bw\|}$. 
Thus we have $\beta = \bw^\top \lambda\bw = \lambda \|\bw\|^2$ which means that $\|\bw\|_2 = \sqrt{\frac{\beta}{\lambda}}$. But now notice that $\lambda$ is, by definition, an eigenvalue of of $Q$! Then getting $\bw^\star = \sqrt{\frac{\beta}{\lambda_d(Q)}}$ is clear because $\bw^\star$ is a maximum and $\lambda_d(Q)$ is the smallest eigenvalue of $Q$.

The statement about the ball envelope follows immediately. Notice that if $\bw \in \calE(\beta)$, we have now proven that it must be the case that $\|\bw\| \leq \sqrt{\frac{\beta}{\lambda_d(Q)}}$ thus $\bw \in$ the stated ball.
\end{proof}

\begin{proposition}[Bound E: Internal ellipsoid(Property \ref{property:internal_ellipsoid}) accuracy loss]\label{prop:internal_ellipsoid_acc_bound}
Assume property \ref{property:internal_ellipsoid} is satisfied. Let $\bw'$ be a maximizer on the problem: 
    \begin{align*}
        \max -&\|\bw-\bw^\star\|_2^2\\
        \quad &\bw^\top Q \bw \leq \beta
    \end{align*}
 Then we can bound the accuracy loss between $\bw'$ and unconstrained (normalized on L2) policy $\bw_u^\star$ (solution to Equation \ref{eqn:opt})

\[
\lvert\mathtt{ACC}(\bw')- \mathtt{ACC}(\bw_u^\star) \rvert  \leq (2q)(t + s)
\]
$q =  \mathbb{I}_{\bw^\star \notin \calE(\beta)}(\sqrt{\frac{\beta}{\lambda_d(Q)}} + \|\bw^\star\|)$, $s = \mathbb{I}_{\bw^\star \notin \calE(\beta)}(\min\left\{1, \|\bw^\star\|\right\} + \sqrt{\frac{\beta}{\lambda_d(Q)}})$, and $t=2\sqrt{\frac{\beta}{\lambda_d(Q)}}$
\end{proposition}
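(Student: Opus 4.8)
The plan is to exploit the fact (Lemma \ref{lem:acc}) that accuracy maximization is Euclidean projection of the ground truth policy $\bw^\star$ onto the relevant feasible set. Write $\bw_u^\star$ for the projection of $\bw^\star$ onto $\calB(1)$ and $\bw'$ for the projection onto the ellipsoid $\calE(\beta)$. I would first reduce to a difference-of-squares: since $\mathtt{ACC}(\bw)=-\|\bw-\bw^\star\|_2^2$,
\begin{equation*}
\lvert\mathtt{ACC}(\bw')-\mathtt{ACC}(\bw_u^\star)\rvert = \big\lvert\|\bw'-\bw^\star\|_2^2 - \|\bw_u^\star-\bw^\star\|_2^2\big\rvert,
\end{equation*}
and then factor this as $(\|\bw'-\bw^\star\|_2+\|\bw_u^\star-\bw^\star\|_2)\cdot\lvert\|\bw'-\bw^\star\|_2-\|\bw_u^\star-\bw^\star\|_2\rvert$. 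The first factor I will bound by $2q$ and the second by $t+s$; the indicator $\mathbb{I}_{\bw^\star\notin\calE(\beta)}$ enters because if $\bw^\star\in\calE(\beta)$ then $\bw'=\bw_u^\star=\bw^\star$ and the loss is exactly $0$.

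For the first factor: each of $\|\bw'-\bw^\star\|_2$ and $\|\bw_u^\star-\bw^\star\|_2$ is at most $\|\bw^\star\|_2$ plus the norm of the respective projection point. The ellipsoid projection $\bw'$ lies in $\calE(\beta)$, so by Lemma \ref{lem:ellipsoid_radii} it has norm at most $\sqrt{\beta/\lambda_d(Q)}$; the ball projection $\bw_u^\star$ has norm at most $1$, in fact at most $\min\{1,\|\bw^\star\|\}$ since projecting onto the unit ball never increases the distance-from-origin beyond $\|\bw^\star\|$. So $\|\bw'-\bw^\star\|_2\le \|\bw^\star\|+\sqrt{\beta/\lambda_d(Q)} = q$ (on the event $\bw^\star\notin\calE(\beta)$) and $\|\bw_u^\star-\bw^\star\|_2\le q$ as well (since $\min\{1,\|\bw^\star\|\}\le\|\bw^\star\|$), giving the factor $\le 2q$.

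For the second factor, the key observation is that both $\bw'$ and $\bw_u^\star$ are distance-minimizers over their respective sets and both sets contain a common, easily-located point — namely any point of $\calE(\beta)\subseteq\calB(1)\cap\calW(\beta)$ (we have $\calE(\beta)\subseteq\calB(1)$ because $\beta\le\lambda_d(Q)$ forces $\sqrt{\beta/\lambda_d(Q)}\le 1$; this is exactly where the hypothesis $\beta\le\lambda_d(Q)$ in Definition \ref{def:nonconvex_class} is used, though here it is subsumed in Property \ref{property:internal_ellipsoid}). I would bound $\lvert\|\bw'-\bw^\star\|-\|\bw_u^\star-\bw^\star\|\rvert$ by a triangle-inequality argument: $\bw_u^\star$ is at distance $\le\min\{1,\|\bw^\star\|\}$ from $\bw^\star$ hence $\bw'$, being the \emph{nearest} point of $\calE(\beta)$ to $\bw^\star$, is no farther than $\mathtt{Dist}(\bw^\star,\calE(\beta))$ which in turn is controlled by going to the origin ($\mathbf 0\in\calE(\beta)$) and then within the ellipsoid, giving the two contributions $\min\{1,\|\bw^\star\|\}+\sqrt{\beta/\lambda_d(Q)}=s$ and $2\sqrt{\beta/\lambda_d(Q)}=t$ that assemble into $t+s$. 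Multiplying the two factors yields $2q(t+s)$.

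The main obstacle I anticipate is getting the second factor tight enough to land exactly at $t+s$ rather than something cruder: the naive bound $\lvert a-b\rvert\le a+b$ would only give $2q$ again, reproducing the trivial $4q^2$-type bound, so the triangle-inequality chain relating the two projection distances through the shared point $\mathbf 0\in\calE(\beta)$ (and through the fact that projecting onto the unit ball from $\bw^\star$ lands within the ball, which is itself a superset-related geometric fact) must be set up carefully, keeping track of which set is a subset of which. A secondary bookkeeping issue is handling the case $\bw^\star\in\calB(1)$ versus $\bw^\star\notin\calB(1)$ separately when writing down $\bw_u^\star$ explicitly, but this is routine and already handled in the proof of Proposition \ref{prop:polyhedral_acc_loss}.
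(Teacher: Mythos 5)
Your proposal follows essentially the same route as the paper's proof: the difference-of-squares factorization of the accuracy gap, the reverse triangle inequality to turn the second factor into $\|\bw'-\bw_u^\star\|_2$, the first factor bounded by $2q$ via norms of the projection points, and the second factor bounded by $t+s$ via a triangle-inequality chain through intermediate points of $\calE(\beta)$ (the paper routes through $P_{\calE(\beta)}(\bw_u^\star)$ and uses the diameter $2\sqrt{\beta/\lambda_d(Q)}$ for the term $t$). One justification is a non sequitur as written: you claim $\|\bw_u^\star-\bw^\star\|_2\le q$ ``since $\min\{1,\|\bw^\star\|\}\le\|\bw^\star\|$,'' but the chain $\|\bw_u^\star-\bw^\star\|_2\le\min\{1,\|\bw^\star\|\}+\|\bw^\star\|$ is not bounded by $q=\sqrt{\beta/\lambda_d(Q)}+\|\bw^\star\|$ unless $\min\{1,\|\bw^\star\|\}\le\sqrt{\beta/\lambda_d(Q)}$, which need not hold; the correct (and paper's) argument is that $\calE(\beta)\subseteq\calB(1)$ makes $\bw_u^\star$ the projection onto the larger set, so $\|\bw_u^\star-\bw^\star\|_2\le\|\bw'-\bw^\star\|_2\le q$, giving the first factor $\le 2\|\bw'-\bw^\star\|_2\le 2q$. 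With that one-line repair (and similarly replacing your claim that $\bw_u^\star$ is at distance $\le\min\{1,\|\bw^\star\|\}$ from $\bw^\star$ by the statement that $\|\bw_u^\star\|\le\min\{1,\|\bw^\star\|\}$), the argument matches the paper's.
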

 \begin{proof}
Define $\calE(\beta):= \{\bw \in \R^d: \bw^\top Q \bw \leq \beta\}$ to be the ellipsoid internal to the euclidean ball.

Let $P_{\calE(\beta)}(\bw)$ be the projection of $\bw$ onto $\calE(\beta)$ and $s_{\max}$ be $\max_{\bw \in \calE(\beta)}\|\bw\|_2$, the maximum length a vector, $\bw$ could be and still be in the ellipsoid.
 
 Notice by triangle inequality and properties of projection:
 \[
 \|P_{\calE(\beta)}(\bw^\star_u)-\bw_u^\star\|\leq \mathbb{I}_{\bw^\star \notin \calE(\beta)}(\|\bw^\star_u\|_2 +\|P_{\calE(\beta)}(\bw^\star_u)\|_2) \leq \mathbb{I}_{\bw^\star \notin \calE(\beta)}(\min\left\{1, \|\bw^\star\|\right\} + s_{\max})
 \]

 Thus we have,
 \begin{align*}
     \|\bw'-\bw^\star_u \|_2  &= \|P_{\calE(\beta)}(\bw^\star)-\bw_u^\star\|_2 \tag{definition}\\
     &\leq   \|P_{\calE(\beta)}(\bw^\star)-P_{\calE(\beta)}(\bw_u^\star)\|_2+  \| P_{\calE(\beta)}(\bw_u^\star) -\bw_u^\star\|_2  \tag{triangle ineq}\\  
     &\leq\|P_{\calE(\beta)}(\bw^\star)-P_{\calE(\beta)}(\bw_u^\star)\|_2 + \mathbb{I}_{\bw^\star \notin \calE(\beta)}(\min\left\{1, \|\bw^\star\|\right\} + s_{\max}) \tag{earlier claim}\\
     &\leq 2s_{\max} + \mathbb{I}_{\bw^\star \notin \calE(\beta)}(\min\left\{1, \|\bw^\star\|\right\} + s_{\max}) \tag{diameter of $\calE(\beta)$}\\
 \end{align*}

 Additionally, we will also want:
 \begin{align*}
     \|\bw'-\bw^\star\|_2 &\leq \mathbb{I}_{\bw^\star \notin \calE(\beta)}(\|P_{\calE(\beta)}(\bw^\star)\| + \|\bw^\star\|) \tag{triangle ineq, $\calE(\beta) \subseteq \calB(1)$}\\
     &\leq \mathbb{I}_{\bw^\star \notin \calE(\beta)}(s_{\max} + \|\bw^\star\| )\tag{definition of $s_{\max}$}
 \end{align*}

 Now we can write using Lemma \ref{lem:acc}:
 \begin{align*}
     \lvert \mathtt{ACC}(\bw')- \mathtt{ACC}(\bw_u^\star) \rvert &= 
     \lvert{-\|\bw' - \bw^\star\|_2^2 +\|\bw^\star_u - \bw^\star\| }|_2^2 \rvert \tag{Lemma \ref{lem:acc}}\\
     &= \lvert (\|\bw' - \bw^\star \|_2 + \|\bw^\star_u - \bw^\star \|_2)(|\bw' - \bw^\star \|_2 - \|\bw^\star_u - \bw^\star \|_2) \rvert \tag{$a^2-b^2 = (a+b)(a-b)$}\\
     &= (\|\bw' - \bw^\star \|_2 + \|\bw^\star_u - \bw^\star \|_2)\lvert (|\bw' - \bw^\star \|_2 - \|\bw^\star_u - \bw^\star \|_2) \rvert \tag{factor is nonneg}\\
     &\leq (\|\bw'-\bw^\star\|_2+\|\bw_u^\star-\bw^\star\|_2  ) (\|\bw'-\bw_u^\star\|_2)  \tag{reverse triangle ineq}\\
     &\leq (2\|\bw'-\bw^\star\|_2) (\|\bw'-\bw_u^\star\|_2)  \tag{$\calE(\beta) \subseteq \calB(1)$}\\
     &\leq(2q)(t + s)    \tag{all earlier claims}
 \end{align*}
 
 where %
 $q =  \mathbb{I}_{\bw^\star \notin \calE(\beta)}(s_{\max} + \|\bw^\star\|)$, $s = \mathbb{I}_{\bw^\star \notin \calE(\beta)}(\min\left\{1, \|\bw^\star\|\right\} + s_{\max})$, and $t=2s_{\max}$

Finally Lemma \ref{lem:ellipsoid_radii} gives us the solution to $s_{\max} = \sqrt{\frac{\beta}{\lambda_d(Q)}}$
 \end{proof}

\begin{proposition}\label{prop:internal_ellip_tightness}
    Bound E from \cref{prop:internal_ellipsoid_acc_bound}, is not strictly better than Bound A: $4\max(\norm{\mathbf{w}^\star}_2,1)$ of \Cref{prop:bounds_vanilla_tightness}
\end{proposition}
\begin{proof}
    Consider for example $\|w^\star\| \leq 1$ but $w^\star \notin \calE(\beta)$, Q diagonal and $\beta = (1-\delta)^2\lambda_d(Q)$. Then we have $s = 1-\delta$ and $q=s' = \|\bw^\star\| + 1-\delta$, hence our bound reduces to $2(\|\bw^\star\|+1-\delta)(\|\bw^\star\| + 1 -\delta + 1-\delta) \approx 2(\|\bw^\star\|+1)(\|\bw^\star\|+2)$
    which is larger than $4\max(\norm{\mathbf{w}^\star}_2,1)$for all $\|w^\star\|>0$.\\
\end{proof}

\subsubsection{Supplemental numerical examples}
To supplement Table \ref{tab:convex_bounds} bounds, one may consider the following numerical examples.
 \begin{example}[Bounded optimality loss given non-disparate costs]\label{ex:bounds1}
    Let agents have 2 features ($d = 2$). Cost is non-disparate and changing either feature requires unit cost, so $A_g = \mathbf{I}_2$. Features do not have any causal flow between one another, so $C = \mathbf{I}_2$. The first feature is desirable, while the 2nd feature is slightly less so, $\Pi_D: diag(1, 3/4)$. Agents come from a feature distribution that results in $\Pi_1:= diag(1,0)$ and $\Pi_2:= diag(0,1)$. Finally $\bw^\star = (1/2,1/2)$. Using the fairness constraint of Example \ref{ex:l1}, how much accuracy is lost in Stackelberg equilibrium?

    By Corollary \ref{cor:polyhedron_no_cost_asy}, if $\beta \leq 3/4$, this satisfies Property \ref{property:polyhedron}. Thus using the polyhedral construction of Lemma \ref{lem:calW_polyhedron} and the bound in Table \ref{tab:convex_bounds}, accuracy loss is bounded:
    \[ 
    |f(\bw_{u}^\star) - f(\bw_{c}^\star)| \leq \left[H(\widehat{M})\right]^2[(1/8 - \beta)_+]^2+  (7/8-\beta)^2]
    \]
    if we further have $1/8 \leq \beta \leq 3/4$:
    \[ 
    |f(\bw_{u}^\star) - f(\bw_{c}^\star)| \leq \left[\frac{3H(\widehat{M})}{4} \right]^2
    \]
    Where:
    \[
    \widehat{M} = 
    \begin{bmatrix}
    -1 & -3/4 \\
    1 & -3/4\\
    1 & 3/4 \\
    -1 & 3/4
    \end{bmatrix}
    \]
\end{example}

\begin{example}[Bounded accuracy loss given non-disparate feature distributions]\label{ex:bounds2}
Let agents have 2 features ($d = 2$). Feature space is non-disparate and nonskewed, so $\Pi_g = \mathbf{I}_2$. Features do not have any causal flow between one another, so $C = \mathbf{I}_2$. The first feature is desirable, while the 2nd feature is slightly less so, $\Pi_D: diag(1, 3/4)$. Agents have disparate costs such that all change is easier for group 1 agents: $\Pi_1:= diag(1/2,1/2)$ and $\Pi_2:= diag(1,1)$. Finally $\bw^\star = (1/2,1/2)$. Using the fairness constraint of Example \ref{ex:l1}, how much accuracy is lost in Stackelberg equilibrium?

By Corollary \ref{cor:polyhedron_no_info_asy} if $\beta \leq 3/4$, this satisfies Property \ref{property:polyhedron}. Thus using the polyhedral construction of Lemma \ref{lem:calW_polyhedron} and the bound in Table \ref{tab:convex_bounds}, accuracy loss is bounded:
    \[ 
    |f(\bw_{u}^\star) - f(\bw_{c}^\star)| \leq \left[H(\widehat{M})\right]^2[(1/8 - \beta)_+]^2+  (7/8-\beta)^2]
    \]
    if we further have $1/8 \leq \beta \leq 3/4$:
    \[ 
    |f(\bw_{u}^\star) - f(\bw_{c}^\star)| \leq \left[\frac{3H(\widehat{M})}{4} \right]^2
    \]
    Where:
    \[
    \widehat{M} = 
    \begin{bmatrix}
    -1 & -3/4 \\
    1 & -3/4\\
    1 & 3/4 \\
    -1 & 3/4
    \end{bmatrix}
    \]
\end{example}

\subsubsection{Supplemental material for Example \ref{ex:l1}}\label{app:l1}
First we note some usually assumptions used in this Appendix section and Appendix \ref{app:l2}
\begin{assumption}[Unknown feature space is exclusive]\label{ass:unknown_exclusive}
    $\ker(\Pi_1) \cap \ker(\Pi_2) = \emptyset$
\end{assumption}

\begin{assumption}[Estimated $\bw$ is different]\label{ass:different_estimates}
    $\Pi_1\bw \neq \Pi_2 \bw \quad \forall \bw \neq \mathbf{0}$
\end{assumption}

\begin{lemma}[$\calW(\beta)$ from Example \ref{ex:l1} is a polyhedron]\label{lem:calW_polyhedron}
$\calW(\beta)$ of Example \ref{ex:l1} can be rewritten as:
\[
\calW(\beta)= \{\bw: \bw\in \R^d, \mathbf{a}^\top M\bw \leq \beta\quad \forall \mathbf{a} \in \mathcal{A}\}
\]
Where $M:= \Pi_D A^{-1}_1C^\top \Pi_1 -\Pi_D A^{-1}_2C^\top \Pi_2$ and $\mathcal{A}:= \{(a_1, \dots, a_d)\in \R^d: a_i \in \{-1,1\}\forall i \in [d]\}$ 

This is clearly a polyhedron of $2^d$ constraints each defined by $\mathbf{a}^\top M$
\end{lemma}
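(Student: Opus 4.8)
The plan is to unwind the definitions so that $\Delta(\cdot)$ becomes the $\ell_1$-norm of a linear image of $\bw$, and then apply the standard polyhedral description of an $\ell_1$-ball as the intersection of the halfspaces indexed by sign vectors.

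First I would substitute the closed-form exogenous effort from Proposition~\ref{prop:agent_erm}, namely $\bx_e^{(g)}(\bw)=A_g^{-1}C^\top\Pi_g\bw$, into the discrepancy function of Example~\ref{ex:l1}. Since $\Pi_D$ acts linearly, $\Pi_D\bx_e^{(1)}(\bw)-\Pi_D\bx_e^{(2)}(\bw)=\bigl(\Pi_D A_1^{-1}C^\top\Pi_1-\Pi_D A_2^{-1}C^\top\Pi_2\bigr)\bw=M\bw$, so that $\Delta(\bw)=\sum_{i\in[d]}\bigl|(M\bw)_i\bigr|=\|M\bw\|_1$. Note $M$ is well-defined because each $A_g$ is positive definite (hence invertible); invertibility of $C$ is not even required here.

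Second I would invoke the elementary identity $\|v\|_1=\max_{\mathbf{a}\in\{-1,1\}^d}\mathbf{a}^\top v$, valid for any $v\in\R^d$. This follows by maximizing coordinatewise: $\sum_i|v_i|=\sum_i\max_{a_i\in\{-1,1\}}a_i v_i=\max_{\mathbf{a}\in\mathcal{A}}\sum_i a_i v_i$. Applying it with $v=M\bw$ gives $\Delta(\bw)=\max_{\mathbf{a}\in\mathcal{A}}\mathbf{a}^\top M\bw$. Consequently $\bw\in\calW(\beta)\iff\Delta(\bw)\le\beta\iff\max_{\mathbf{a}\in\mathcal{A}}\mathbf{a}^\top M\bw\le\beta\iff \mathbf{a}^\top M\bw\le\beta$ for every $\mathbf{a}\in\mathcal{A}$. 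Since $|\mathcal{A}|=2^d$, this exhibits $\calW(\beta)$ as an intersection of $2^d$ halfspaces, i.e.\ a polyhedron, with the $\mathbf{a}$-th constraint having normal direction $M^\top\mathbf{a}$ (equivalently, row vector $\mathbf{a}^\top M$).

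There is no substantive obstacle: the lemma is a direct composition of Proposition~\ref{prop:agent_erm} with the $\ell_1$/$\ell_\infty$ duality. The only points requiring care are (i) checking $M$ is well-defined (PD-ness of $A_1,A_2$), and (ii) being precise that $\mathcal{A}$ ranges exactly over the $2^d$ sign vectors—the vertices of the unit $\ell_\infty$-ball—so that the maximum is attained and the ``for all $\mathbf{a}$'' reformulation is an exact equivalence rather than merely a sufficient condition. (This already shows $\calW(\beta)$ is polyhedral; obtaining the bounded form $\calB(1)\cap\calW(\beta;\Delta)=\{\bw:M\bw\le\beta\mathbf 1\}$ needed for Property~\ref{property:polyhedron} additionally uses the regularity assumptions ensuring $\calW(\beta)\subseteq\calB(1)$, which I would handle separately.)
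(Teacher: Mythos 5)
Your proof is correct and follows essentially the same route as the paper's: substitute the closed-form effort from Proposition~\ref{prop:agent_erm} to rewrite $\Delta(\bw)=\|M\bw\|_1$, then express the $\ell_1$-ball as the intersection of the $2^d$ halfspaces indexed by sign vectors (the paper does this by the change of variables $\mathbf{y}=M\bw$, you via the equivalent identity $\|v\|_1=\max_{\mathbf{a}\in\mathcal{A}}\mathbf{a}^\top v$). Your closing remark correctly notes that containment of $\calW(\beta)$ in $\calB(1)$ is a separate matter, which the paper indeed handles in a subsequent proposition.
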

\begin{proof}
\begin{equation*}
\begin{aligned}
    \sum\limits_{i \in [d]}|(\Pi_D \bx_e^{(1)}(\bw)-\Pi_D \bx_e^{(2)}(\bw))_i| &= \sum\limits_{i \in [d]}|(\Pi_D A^{-1}_1C^\top \Pi_1 \bw-\Pi_D A^{-1}_2C^\top \Pi_2 \bw)_i|\\
    &= \sum\limits_{i \in [d]}|((\Pi_D A^{-1}_1C^\top \Pi_1 -\Pi_D A^{-1}_2C^\top \Pi_2 )\bw)_i|\\
    &= \|M\bw\|_1
\end{aligned}
\end{equation*}
Where $M:= \Pi_D A^{-1}_1C^\top \Pi_1 -\Pi_D A^{-1}_2C^\top \Pi_2$

Recall that $\{\mathbf{y}: \mathbf{y}\in \R^d, \|\mathbf{y}\|_1 \leq \beta\}$ describes a $d$-dimensional cube in $\mathbf{y}$ space, such shapes are clearly polyhedra. We will show that $\|M\bw\|_1 \leq \beta$ creates the polyhedron described by the lemma.

We can describe $\{\mathbf{y}: \mathbf{y}\in \R^d, \|\mathbf{y}\|_1 \leq \beta\}$ equivalently as: $\{\mathbf{y}: \mathbf{y}\in \R^d, \mathbf{a}^\top\mathbf{y} \leq \beta \quad \forall \mathbf{a}\in \mathcal{A}\}$ where $\mathcal{A}:= \{(a_1, \dots, a_d)\in \R^d: a_i \in \{-1,1\}\forall i \in [d]\}$ Note that $|\mathcal{A}| = 2^d$. Let $\mathbf{y} = M\bw$ and this gives polyhedron of the lemma.
\end{proof}

Importantly, Property \ref{property:polyhedron} requires that $\calB(1) \cap \calW(\beta)$ is a polyhedron! Therefore, for the optimality loss bound associated with this Property, we should ensure that $\calW(\beta) \subseteq \calB(1)$
\begin{proposition}[Necessary and sufficient conditions for Example \ref{ex:l1} to satisfy property \ref{property:polyhedron}]
The fairness function described by Example \ref{ex:l1} satisfies Property \ref{property:polyhedron} if and only if $M$, where $M:= \Pi_D A^{-1}_1C^\top \Pi_1 -\Pi_D A^{-1}_2C^\top \Pi_2$ is such that (1) $\ker(M) = \emptyset$ and (2) $\beta \leq \inf_{\|\bw\|_2 = 1} \|M\bw\|_1$
\end{proposition}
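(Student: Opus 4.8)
The plan is to first show that, for the discrepancy function $\Delta$ of Example~\ref{ex:l1}, Property~\ref{property:polyhedron} is equivalent to the single geometric statement $\calW(\beta)\subseteq\calB(1)$, and then to characterize that containment in terms of $\ker(M)$ and $m:=\inf_{\|\bw\|_2=1}\|M\bw\|_1$. For the equivalence, ``$\Leftarrow$'' is immediate: if $\calW(\beta)\subseteq\calB(1)$ then $\calB(1)\cap\calW(\beta)=\calW(\beta)$, and Lemma~\ref{lem:calW_polyhedron} already presents this set as the polyhedron $\{\bw:\mathbf{a}^\top M\bw\le\beta\ \forall\,\mathbf{a}\in\mathcal{A}\}$ ($2^d$ linear inequalities), exactly the form Property~\ref{property:polyhedron} requires. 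For ``$\Rightarrow$'', suppose $\calB(1)\cap\calW(\beta)$ is a polyhedron and, toward a contradiction, that there is $\bar\bw\in\calW(\beta)$ with $\|\bar\bw\|_2>1$. Since $\|M\mathbf{0}\|_1=0<\beta$ and $\bw\mapsto\|M\bw\|_1$ is continuous, $\mathbf{0}$ is an interior point of $\calW(\beta)$, so $\calW(\beta)$ contains a Euclidean ball around $\mathbf{0}$ of some radius $\delta>0$; by convexity $\calW(\beta)$ then contains the convex hull of that ball with $\{\bar\bw\}$, which in turn contains a full-dimensional neighbourhood of $\bw^\star:=\bar\bw/\|\bar\bw\|_2$ (a strict convex combination of $\mathbf{0}$ and $\bar\bw$, hence genuinely interior to that hull). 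As $\bw^\star$ lies on the unit sphere $\partial\calB(1)$, it follows that $\partial\big(\calB(1)\cap\calW(\beta)\big)$ contains a relatively open patch of $\partial\calB(1)$; but the boundary of a polyhedron in dimension $d\ge2$ lies in finitely many hyperplanes, each meeting the sphere in a set of dimension at most $d-2$, so it cannot contain such a patch. This contradiction forces $\calW(\beta)\subseteq\calB(1)$.

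Next I characterize $\calW(\beta)\subseteq\calB(1)$ using $\calW(\beta)=\{\bw:\|M\bw\|_1\le\beta\}$ (Lemma~\ref{lem:calW_polyhedron}) and the homogeneity $\|M(t\bw)\|_1=|t|\,\|M\bw\|_1$; note $m$ is a genuine minimum by compactness of the unit sphere and continuity. For ``(1) and (2) $\Rightarrow$ containment'': condition~(2) gives $m\ge\beta>0$, so for nonzero $\bw\in\calW(\beta)$ we get $\beta\ge\|M\bw\|_1=\|\bw\|_2\,\|M(\bw/\|\bw\|_2)\|_1\ge\|\bw\|_2\,m\ge\|\bw\|_2\,\beta$, hence $\|\bw\|_2\le1$. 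Conversely, assume $\calW(\beta)\subseteq\calB(1)$. If a nonzero $\bw_0\in\ker(M)$ existed, the whole line $\{t\bw_0:t\in\R\}$ would lie in $\calW(\beta)$, contradicting boundedness; hence $\ker(M)=\emptyset$, which is~(1) and also yields $m>0$. Finally, if $\beta>m$, pick a unit vector $\mathbf{u}$ with $\|M\mathbf{u}\|_1=m$ and any $t\in(1,\beta/m]$: then $\|M(t\mathbf{u})\|_1=tm\le\beta$, so $t\mathbf{u}\in\calW(\beta)$, yet $\|t\mathbf{u}\|_2=t>1$, a contradiction; thus $\beta\le m$, which is~(2). (Given $\beta>0$, condition~(2) already implies~(1), since $m\ge\beta>0$ excludes a nontrivial kernel; the conditions are stated separately only to isolate the structural requirement on $M$.)

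I expect the main obstacle to be the ``$\Rightarrow$'' half of the first step --- that $\calB(1)\cap\calW(\beta)$ cannot be polyhedral unless the ball constraint is redundant. The curvature-of-the-boundary argument above is the natural route, and its delicate point is checking that $\calW(\beta)$ really contains a \emph{full-dimensional} neighbourhood of a boundary point of $\calB(1)$ (not merely a lower-dimensional sliver poking out), which is exactly what interiority of $\mathbf{0}$ plus convexity buys. One should also flag the standing assumption $d\ge2$: for $d=1$ the set $\calB(1)\cap\calW(\beta)$ is always an interval, hence trivially a polyhedron, so condition~(2) would cease to be necessary. The remaining pieces --- compactness and continuity on the unit sphere, the homogeneity manipulations, and (optionally) using invertibility of $\Pi_D$ and $C$ (Lemma~\ref{lem:c_ker_0}) to rewrite $\ker(M)$ as $\ker(A_1^{-1}C^\top\Pi_1-A_2^{-1}C^\top\Pi_2)$, which reduces to Assumption~\ref{ass:different_estimates} when $A_1=A_2$ --- are routine.
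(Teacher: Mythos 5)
Your proof is correct and follows the same overall decomposition as the paper's: first reduce Property~\ref{property:polyhedron} to the containment $\calW(\beta)\subseteq\calB(1)$, then characterize that containment via the homogeneity of $\bw\mapsto\|M\bw\|_1$. You go further than the paper in two places. First, the paper simply asserts that because $\calB(1)$ is a ball, the intersection $\calB(1)\cap\calW(\beta)$ can only be a polyhedron if $\calW(\beta)\subseteq\calB(1)$; your curvature argument (interiority of $\mathbf{0}$ plus convexity forces a full-dimensional patch of the unit sphere into the boundary of the intersection, which no polyhedron's boundary can contain when $d\ge 2$) actually proves this step, and your observation that it fails for $d=1$ is a genuine, if minor, caveat to the proposition as stated that the paper overlooks. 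Second, your necessity argument for condition~(2) --- exhibiting $t\mathbf{u}\in\calW(\beta)\setminus\calB(1)$ for a minimizing unit vector $\mathbf{u}$ whenever $\beta>\inf_{\|\bw\|_2=1}\|M\bw\|_1$ --- is the right way to argue necessity; the paper instead derives only the bound $\|\bw\|_2\le\beta/\mu(M)$ for $\bw\in\calW(\beta)$ and declares necessity from it, which as written establishes sufficiency rather than necessity. Your homogeneity computation for the sufficiency direction and the kernel argument for condition~(1) match the paper's.
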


\begin{proof}
First, notice that by Lemma \ref{lem:calW_polyhedron}, the fairness space, $\calW(\beta)$ is a polyhedron for any $M$ and $\beta > 0$. Because $\calB(1)$ is an ellipsoid, this means that for $\calW(\beta)\cup\calB(1)$ to be a polyhedron, $\calW(\beta)\subseteq \calB(1)$. So we must show that conditions (1) and (2) are necessary and sufficient to ensure that $\calW(\beta)\subseteq \calB(1)$.We will first prove that conditions (1) and (2) are necessary. 

Notice that if $\ker(M) \neq 0$, then there exists some $\bw_0 \in \ker(M)$ s.t. $\|\bw\|_2 \geq 1$, but $\|M\bw_0\|_1 = 0 < \beta$. That would mean $\calW(\beta) \not\subseteq \calB(1)$ Thus $\ker(M)= \emptyset$ must be necessary. 

Let $\mu(M):= \inf_{\|\bw\|_2 = 1} \|M\bw\|_1$ Now notice that $\|M\bw\|_1 = \|\bw\|_2\|M\frac{\bw}{\|\bw\|_2}\|_1\geq \mu(M) \|\bw\|_2\quad \forall \bw$ This implies $\frac{\|M\bw\|_1}{\mu(M)}\geq \|\bw\|_2$. Of course $\forall \bw \in \calW(\beta)$:
\[
\frac{\beta}{\mu(M)}\geq \frac{\|M\bw\|_1}{\mu(M)}\geq \|\bw\|_2
\]
From this, we see that in order for $\calW(\beta) \subseteq \calB(1)$, it is necessary that $\beta \leq \inf_{\|\bw\|_2 = 1} \|M\bw\|_1$.

Now we will show that conditions (1) and (2) are sufficient. We will do this by contradiction. Suppose that $\bw_0 \in \calW(\beta)$, but $\bw_0 \notin \calB(1)$ while both (1) and (2) hold. This means that $\|\bw_0\|_2 > 1$ and $\|M\bw_0\|_1 \leq \beta$. From condition (2), $\|\bw_0\|_2\|M\frac{\bw_0}{\|\bw_0\|_2}\|_1 = \|M\bw_0\|_1 \leq \beta$. From condition (1) we know that $\mu(M) > 0$ and then from the definition of $\mu(M)$: $\|M\frac{\bw_0}{\|\bw_0\|_2}\|_1 \geq \mu(M)$. Thus it must be the case that $\|\bw_0\|_2 \leq 1$. But this poses a contradiction! Thus we see that when conditions (1) and (2) hold, there cannot exist such a $\bw$ where $\bw_0 \in \calW(\beta)$, but $\bw_0 \notin \calB(1)$, which means $\calW(\beta)\subseteq \calB(1)$. 
\end{proof}

\begin{corollary}[Sufficient conditions for Example \ref{ex:l1} to satisfy property \ref{property:polyhedron}]\label{cor:polyhedron_suff}
The fairness function described by Example \ref{ex:l1} satisfies Property \ref{property:polyhedron} if $M$, where $M:= \Pi_D A^{-1}_1C^\top \Pi_1 -\Pi_D A^{-1}_2C^\top \Pi_2$ is such that $\ker(M) = \emptyset$ and $\beta \leq \sigma_d(M)$
\end{corollary}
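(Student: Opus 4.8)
\textbf{Proof proposal for Corollary~\ref{cor:polyhedron_suff}.}
The plan is to reduce this corollary to the immediately preceding proposition (the necessary-and-sufficient characterization of when Example~\ref{ex:l1} satisfies Property~\ref{property:polyhedron}) by showing that the hypothesis $\beta \le \sigma_d(M)$ is a \emph{stronger} condition than condition~(2) there, namely $\beta \le \mu(M) := \inf_{\|\bw\|_2 = 1}\|M\bw\|_1$. Condition~(1), $\ker(M) = \emptyset$, is assumed verbatim, so nothing needs to be done for it. Hence it suffices to establish the single inequality $\sigma_d(M) \le \mu(M)$, after which the corollary follows from that proposition.

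For the inequality, recall that $M = \Pi_D A_1^{-1}C^\top \Pi_1 - \Pi_D A_2^{-1}C^\top \Pi_2 \in \R^{d\times d}$ is square, so its smallest singular value admits the variational characterization $\sigma_d(M) = \min_{\|\bw\|_2 = 1}\|M\bw\|_2$ (the minimum is attained since the unit sphere is compact). Then I would simply combine this with the elementary norm comparison $\|\mathbf{v}\|_1 \ge \|\mathbf{v}\|_2$ valid for every $\mathbf{v}\in\R^d$: for any $\bw$ with $\|\bw\|_2 = 1$ we get $\|M\bw\|_1 \ge \|M\bw\|_2 \ge \sigma_d(M)$. Taking the infimum over such $\bw$ yields $\mu(M) \ge \sigma_d(M)$, as desired. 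Chaining with the hypothesis gives $\beta \le \sigma_d(M) \le \mu(M)$, so condition~(2) of the preceding proposition holds.

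There is essentially no hard step here — the only thing to be careful about is bookkeeping: making sure $\sigma_d(M)$ indeed denotes the \emph{smallest} singular value of the square matrix $M$ (consistent with the notation section, where $\sigma_d$ is the $d$-th largest), and that $\ker(M)=\emptyset$ makes this value strictly positive so that condition~(2) is non-vacuous. Given those, both hypotheses of the characterization proposition are met, and invoking it concludes that Example~\ref{ex:l1} satisfies Property~\ref{property:polyhedron}. I would close by noting the inclusion $\calW(\beta)\subseteq\calB(1)$ that this guarantees, which is what makes the polyhedral feasible-region bound in Table~\ref{tab:convex_bounds} (Row~2) applicable.
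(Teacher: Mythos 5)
Your proposal is correct and follows essentially the same route as the paper: both reduce to the preceding necessary-and-sufficient characterization by showing $\sigma_d(M) = \min_{\|\bw\|_2=1}\|M\bw\|_2 \le \inf_{\|\bw\|_2=1}\|M\bw\|_1$ via the elementary inequality $\|\mathbf{v}\|_1 \ge \|\mathbf{v}\|_2$ (your write-up even fixes a small typo in the paper's version of that step). Nothing further is needed.
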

\begin{proof}
    We should show that $\beta \leq \sigma_d(M) \implies \beta \leq \inf_{\|\bw\|_2 = 1} \|M\bw\|_1$ This is simple. Note that rank of $M$ must be $d$ because it is a $d\times d$ matrix with an empty kernel. So the $d$th singular value is the smallest nonzero singular value. So we have $\inf_{\|\bw\|_2 = 1}\|M\bw\|_2 = \sigma_d(M)$ from the Min-Max theorem for singular values. And because $\forall \mathbf{y} \in \R^d$, $\|\mathbf{y}\|_2 \leq \|\mathbf{y}\|_2$, we have $\inf_{\|\bw\|_2 = 1}\|M\bw\|_2 = \sigma_d(M) \leq \inf_{\|\bw\|_2 = 1}\|M\bw\|_1$. Thus clearly $\beta \leq \sigma_d(M) \implies \beta \leq \inf_{\|\bw\|_2 = 1} \|M\bw\|_1$
\end{proof}

These conditions, even in only the sufficient form are hard to interpret in the context of the setting specifically given that $M$ is a function of several setting parameters. To make things a clearer, we can simplify to more specific settings in which groups have either cost or information discrepancy:
\begin{corollary}[Sufficient conditions for Example \ref{ex:l1} to satisfy property \ref{property:polyhedron} w/ no cost asymmetry]\label{cor:polyhedron_no_cost_asy}
    Suppose that agents in our setting have the same cost to feature change (i.e. $A_1 = A_2 = A_g)$. Then, the fairness function described by Example \ref{ex:l1} satisfies Property \ref{property:polyhedron} if Assumption \ref{ass:unknown_exclusive} is satisfied, $\Pi_1\bw \neq \Pi_2\bw \quad \forall \bw, \bw \neq \mathbf{0}$, and $\beta \leq \sigma_d(\Pi_dA^{-1}_gC^\top[\Pi_1 - \Pi_2])$.
\end{corollary}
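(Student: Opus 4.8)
The plan is to obtain this as a direct specialization of Corollary~\ref{cor:polyhedron_suff}. Recall from Lemma~\ref{lem:calW_polyhedron} and Corollary~\ref{cor:polyhedron_suff} that the $\ell_1$ discrepancy function of Example~\ref{ex:l1} satisfies Property~\ref{property:polyhedron} whenever the matrix $M := \Pi_D A_1^{-1} C^\top \Pi_1 - \Pi_D A_2^{-1} C^\top \Pi_2$ has trivial kernel and $\beta \le \sigma_d(M)$. Setting $A_1 = A_2 = A_g$ and factoring out the common left factor yields $M = \Pi_D A_g^{-1} C^\top(\Pi_1 - \Pi_2)$. Hence it suffices to check the two hypotheses of Corollary~\ref{cor:polyhedron_suff} for this particular $M$.

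The singular-value requirement is immediate, since the hypothesis $\beta \le \sigma_d\!\big(\Pi_D A_g^{-1} C^\top[\Pi_1 - \Pi_2]\big)$ is literally $\beta \le \sigma_d(M)$. For the kernel requirement, I would use the fact that left-multiplication by an invertible matrix does not change a kernel, together with the observation that each of the three left factors of $M$ is invertible: $\Pi_D = \mathrm{Diag}(\{\mathtt{des}(i)\}_{i\in[d]})$ is invertible because every desirability score satisfies $\mathtt{des}(i) > 0$; $A_g^{-1}$ is invertible because $A_g$ is positive definite; and $C^\top$ is invertible by Lemma~\ref{lem:c_ker_0}. Therefore $\ker(M) = \ker(\Pi_1 - \Pi_2)$, so $M\bw = \mathbf{0}$ forces $(\Pi_1 - \Pi_2)\bw = \mathbf{0}$, i.e.\ $\Pi_1\bw = \Pi_2\bw$; by the corollary's hypothesis this holds only at $\bw = \mathbf{0}$, giving $\ker(M) = \emptyset$. (Alternatively, $\ker(\Pi_1 - \Pi_2) = \emptyset$ follows from Assumption~\ref{ass:unknown_exclusive} together with the orthogonal-projection identity $\ker(\Pi_1-\Pi_2) = (\mathrm{range}\,\Pi_1 \cap \mathrm{range}\,\Pi_2)\oplus(\ker\Pi_1 \cap \ker\Pi_2)$ and trivial range overlap, but the hypothesis $\Pi_1\bw \ne \Pi_2\bw$ supplies it directly.) Both conditions of Corollary~\ref{cor:polyhedron_suff} now hold, which completes the argument.

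There is essentially no hard step: the proof is just the routine fact that $\ker(BN) = \ker(N)$ when $B$ is invertible, applied after the factorization $M = (\Pi_D A_g^{-1} C^\top)(\Pi_1 - \Pi_2)$. The only places requiring care are invoking Lemma~\ref{lem:c_ker_0} for the invertibility of $C$ rather than assuming it, and noting the strict positivity $\mathtt{des}(i) > 0$ that makes $\Pi_D$ invertible; after that, both hypotheses of Corollary~\ref{cor:polyhedron_suff} are verified with no further computation.
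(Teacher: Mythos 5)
Your proposal is correct and follows essentially the same route as the paper: specialize Corollary~\ref{cor:polyhedron_suff}, factor $M = \Pi_D A_g^{-1} C^\top(\Pi_1 - \Pi_2)$, note the left factor is invertible (via positive definiteness of $\Pi_D$, $A_g$ and Lemma~\ref{lem:c_ker_0}), and reduce to $\ker(\Pi_1 - \Pi_2) = \emptyset$. Your parenthetical observation that the hypothesis $\Pi_1\bw \neq \Pi_2\bw$ for all $\bw \neq \mathbf{0}$ already supplies this on its own (making Assumption~\ref{ass:unknown_exclusive} redundant) is accurate and slightly cleaner than the paper's phrasing, but the argument is the same.
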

\begin{proof}
    \begin{equation*}
    \begin{aligned}
        M &= \Pi_DA^{-1}_1C^\top\Pi_1 - \Pi_DA^{-1}_2C^\top\Pi_2 \\
        &= \Pi_DA^{-1}_gC^\top[\Pi_1 - \Pi_2]
    \end{aligned}
    \end{equation*}
Clearly, the singular value condition is the same as the sufficient condition from \ref{cor:polyhedron_suff}. Thus, all we must prove that Assumption \ref{ass:unknown_exclusive}  and $\Pi_1\bw \neq \Pi_2\bw \quad \forall \bw, \bw \neq \mathbf{0}$ $\implies$ $\ker(M) = \emptyset$. 
Notice that that $\ker(\Pi_DA^{-1}_gC^\top) = \emptyset$ because $\Pi_D$ and $A_g$ are positive definite by setting assumptions and $\ker(C) = \emptyset$ by Lemma \ref{lem:c_ker_0}. Thus all that matters is $\ker(\Pi_1 - \Pi_2)$. Clearly, as long as
\begin{enumerate}
    \item $\forall \bw \in \ker(\Pi_1)$, $\bw \notin \ker(P_2)$ and $\forall \bw' \in \ker(\Pi_2)$, $\bw' \notin \ker(P_1)$
    \item $\Pi_1\bw \neq \Pi_2\bw \quad \forall \bw, \bw \neq \mathbf{0}$
\end{enumerate}
then $\ker(\Pi_DA^{-1}_gC^\top[\Pi_1 - \Pi_2])$ will be empty. 
\end{proof}

\begin{corollary}[Sufficient conditions for Example \ref{ex:l1} to satisfy property \ref{property:polyhedron} w/ no info asymmetry]\label{cor:polyhedron_no_info_asy}
    Suppose that agents in our setting have the same information (i.e. $\Pi_1 = \Pi_2 = \Pi_g)$. Then, the fairness function described by Example \ref{ex:l1} satisfies Property \ref{property:polyhedron} if $\ker(\Pi_g) = \emptyset$, $A_2 \succ A_1$ and $\beta \leq \sigma_d(\Pi_D[A^{-1}_1 - A^{-1}_2]C^\top\Pi_g)$.
\end{corollary}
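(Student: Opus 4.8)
The plan is to mirror the structure of Corollary~\ref{cor:polyhedron_no_cost_asy}: collapse $M$ into the claimed form using $\Pi_1=\Pi_2=\Pi_g$, verify that $\ker(M)=\emptyset$ under the stated hypotheses, and then invoke Corollary~\ref{cor:polyhedron_suff}.

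First I would substitute $\Pi_1=\Pi_2=\Pi_g$ into $M:=\Pi_D A_1^{-1}C^\top\Pi_1-\Pi_D A_2^{-1}C^\top\Pi_2$, obtaining $M=\Pi_D[A_1^{-1}-A_2^{-1}]C^\top\Pi_g$. This already matches the singular-value hypothesis, i.e. $\beta\le\sigma_d(\Pi_D[A_1^{-1}-A_2^{-1}]C^\top\Pi_g)$ is exactly $\beta\le\sigma_d(M)$, which is the second condition of Corollary~\ref{cor:polyhedron_suff}. So the only remaining task is to show $\ker(M)=\emptyset$, equivalently that the square matrix $M$ is invertible.

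Next I would argue $M$ is a product of four invertible matrices. $\Pi_D\succ 0$ by the model's assumption that desirability scores are positive, so it is invertible; $C$, hence $C^\top$, is invertible by Lemma~\ref{lem:c_ker_0}; $\Pi_g$ is invertible since $\ker(\Pi_g)=\emptyset$ by hypothesis; and $A_1^{-1}-A_2^{-1}$ is invertible because $A_2\succ A_1\succ 0$. The last point is the only substantive step: $A_2\succ A_1$ implies $A_1^{-1}-A_2^{-1}\succ 0$ by a congruence argument — conjugating $A_2-A_1\succ 0$ by $A_1^{-1/2}$ gives $A_1^{-1/2}A_2A_1^{-1/2}\succ I$, so all its eigenvalues exceed $1$, hence its inverse $A_1^{1/2}A_2^{-1}A_1^{1/2}\prec I$, and conjugating back by $A_1^{-1/2}$ yields $A_2^{-1}\prec A_1^{-1}$ (this is just antitonicity of matrix inversion and could instead simply be cited). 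Since a positive definite matrix has trivial kernel, $A_1^{-1}-A_2^{-1}$ is invertible, and therefore so is $M$.

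Finally, with $\ker(M)=\emptyset$ and $\beta\le\sigma_d(M)$ both in hand, Corollary~\ref{cor:polyhedron_suff} immediately yields that the fairness function of Example~\ref{ex:l1} satisfies Property~\ref{property:polyhedron}. I do not anticipate a real obstacle; the only place needing care is the inversion-antitonicity claim, handled in two lines above. One subtlety worth noting in the writeup: if $\Pi_g$ is taken literally as an orthogonal projection, $\ker(\Pi_g)=\emptyset$ forces $\Pi_g=I$ (both groups peer-learn the whole feature space), but the argument only uses invertibility of $\Pi_g$, so it is unaffected by how generally one reads that hypothesis.
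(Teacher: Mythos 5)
Your proposal is correct and follows essentially the same route as the paper: substitute $\Pi_1=\Pi_2=\Pi_g$ to collapse $M$ into $\Pi_D[A_1^{-1}-A_2^{-1}]C^\top\Pi_g$, observe the $\beta$ hypothesis is exactly the singular-value condition of Corollary~\ref{cor:polyhedron_suff}, and establish $\ker(M)=\emptyset$ by noting each factor is invertible, with $A_2\succ A_1\implies A_1^{-1}\succ A_2^{-1}$ handling the only nontrivial factor. The paper simply asserts that last implication where you spell out the congruence argument, but this is a matter of detail, not of approach.
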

\begin{proof}
    \begin{equation*}
    \begin{aligned}
        M &= \Pi_DA^{-1}_1C^\top\Pi_1 - \Pi_DA^{-1}_2C^\top\Pi_2 \\
        &= \Pi_D[A^{-1}_1 - A^{-1}_2]C^\top\Pi_g
    \end{aligned}
    \end{equation*}
Clearly, the singular value condition is the same as the sufficient condition from \ref{cor:polyhedron_suff}. Thus, all we must prove that $\ker(\Pi_g) = \emptyset$, $A_2 \succ A_1$ is sufficient to show that $\ker(M) = \emptyset$. $\Pi_D$ and $A_g$ are positive definite by setting assumptions and $\ker(C) = \emptyset$ by Lemma \ref{lem:c_ker_0}. So we consider $\ker(A^{-1}_1 - A^{-1}_2)$ and $\ker(\Pi_g)$. Clearly if $\ker(A^{-1}_1 - A^{-1}_2) = \emptyset$ and $\ker(\Pi_g) = \emptyset$ then $\ker(\Pi_D[A^{-1}_1 - A^{-1}_2]C^\top\Pi_g) = \emptyset$. Note that:
\[
A_2 \succ A_1 \implies A^{-1}_1 \succ A^{-1}_2 \implies A^{-1}_1 - A^{-1}_2 \succ 0 \implies \ker(A^{-1}_1 - A^{-1}_2) = \emptyset
\]
\end{proof}
\subsubsection{Supplemental material for Example \ref{ex:l2}}\label{app:l2}
\begin{proposition}[Example \ref{ex:l2} is (sometimes) an ellipsoid]\label{prop:ellipsoid-conditions}
Example \ref{ex:l2} represents and ellipsoid if and only if $M=\Pi_D(A_1^{-1}C^T P_1 - A_2^{-1}C^T P_2)$ is invertible. 

\end{proposition}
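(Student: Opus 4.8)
The plan is to reduce $\Delta(\cdot)$ to a quadratic form and then invoke the elementary fact that a $\beta$-sublevel set $\{\bw:\bw^\top Q\bw\le\beta\}$ (with $\beta>0$) is a genuine ellipsoid in the sense of Property~\ref{property:ellipsoidal} exactly when $Q\succ 0$. First I would substitute the closed form for exogenous effort from Proposition~\ref{prop:agent_erm}, $\bx_e^{(g)}(\bw)=A_g^{-1}C^{\top}\Pi_g\bw$, into the definition of Example~\ref{ex:l2}. Writing $M:=\Pi_D\big(A_1^{-1}C^{\top}\Pi_1-A_2^{-1}C^{\top}\Pi_2\big)\in\R^{d\times d}$, this yields
\[
\Delta(\bw)=\sum_{i\in[d]}\big((\Pi_D\bx_e^{(1)}(\bw)-\Pi_D\bx_e^{(2)}(\bw))_i\big)^2=\|M\bw\|_2^2=\bw^{\top}(M^{\top}M)\,\bw,
\]
so that $\calW(\beta;\Delta)=\{\bw\in\R^d:\bw^{\top}(M^{\top}M)\bw\le\beta\}$. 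Since $M$ is square, proving the proposition amounts to showing that $M^{\top}M\succ 0$ (equivalently, that this set has the exact form required by Property~\ref{property:ellipsoidal}) holds if and only if $M$ is invertible.

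For the ``if'' direction, if $M$ is invertible then $M\bw\ne \mathbf 0$ for every $\bw\ne\mathbf 0$, so $\bw^{\top}(M^{\top}M)\bw=\|M\bw\|_2^2>0$; hence $Q:=M^{\top}M$ is symmetric positive definite and $\calW(\beta;\Delta)=\{\bw:\bw^{\top}Q\bw\le\beta\}$ is precisely an ellipsoid in the sense of Property~\ref{property:ellipsoidal}. For the ``only if'' direction I would argue contrapositively: if $M$ is not invertible, pick $\bw_0\in\ker(M)$ with $\bw_0\ne\mathbf 0$; then for every $t\in\R$ we have $\Delta(t\bw_0)=\|tM\bw_0\|_2^2=0\le\beta$, so $\calW(\beta;\Delta)$ contains the entire line $\{t\bw_0:t\in\R\}$ and is therefore unbounded, hence not an ellipsoid. (Equivalently, $\ker(M^{\top}M)=\ker(M)\ne\{\mathbf 0\}$, so the quadratic form is only positive semidefinite and its sublevel set is a degenerate unbounded ``cylinder'' rather than a genuine ellipsoid.)

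There is no substantive obstacle in this argument; the one point that needs to be stated carefully at the outset is the convention for the word ``ellipsoid'' — the equivalence is true only because Property~\ref{property:ellipsoidal} insists that the representing matrix be strictly positive definite, thereby excluding unbounded degenerate sublevel sets. The remaining ingredients — that $\ker(M^{\top}M)=\ker(M)$ for real matrices, and that a square matrix is invertible iff its kernel is trivial — are standard linear algebra and can be cited without proof. (If one also wants $\calW(\beta;\Delta)$ to be full-dimensional one simply notes $\beta>0$, so the ellipsoid is nondegenerate whenever $M$ is invertible.)
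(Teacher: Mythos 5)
Your proposal is correct and follows essentially the same route as the paper: substitute the closed form of $\bx_e^{(g)}(\bw)$ from Proposition~\ref{prop:agent_erm} to write $\Delta(\bw)=\|M\bw\|_2^2=\bw^{\top}(M^{\top}M)\bw$, and then observe that the sublevel set is an ellipsoid exactly when $M^{\top}M\succ 0$, i.e., when $M$ is invertible. You are somewhat more explicit than the paper on the ``only if'' direction (exhibiting the unbounded line through a kernel vector), but this is a refinement of the same argument rather than a different approach.
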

\begin{proof}
    Expanding $\bx_e^{(1)}, \bx_e^{(2)}$ according to the closed-form solutions we identified in Proposition \ref{prop:agent_erm}, we can think of Example \ref{ex:l2} as $\|A\bw-B\bw\|_2^2 = \|(A-B)\bw\|_2^2 \leq \beta$ where $A= \Pi_D (A_1^{-1} C^T P_1)$ and $B= \Pi_D (A_2^{-1} C^T P_2)$. Now can rewrite $\|(A-B)\bw\|_2^2 = (A-B)^{\top}(A-B) \leq \beta$. This set represents an ellipsis when $M= (A-B)^{\top}(A-B) \succ 0$. Now we know $M\succeq 0$ always and $M \succ 0 $ when $(A-B)$ is invertible.\\
    Now we can study $(A-B)= \Pi_D (A_1^{-1}C^\top P_1 - A_2^{-1}C^\top P_2) $.\\
    Hence $A-B$ is invertible iff $M = \Pi_D(A_1^{-1}C^\top P_1 - A_2^{-1}C^\top P_2)$ is invertible 

\end{proof}  

\begin{corollary}[Sufficient conditions for Example \ref{ex:l2} to satisfy property \ref{property:ellipsoidal} w/ no cost asymmetry]\label{cor:ellipsoid_no_cost_asy}
    Suppose that agents in our setting have the same cost to feature change (i.e. $A_1 = A_2 = A_g)$. Then, the fairness function described by Example \ref{ex:l2} satisfies Property \ref{property:ellipsoidal} if Assumptions \ref{ass:unknown_exclusive} and \ref{ass:different_estimates} are satisfied.
\end{corollary}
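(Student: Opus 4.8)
The plan is to deduce this from Proposition~\ref{prop:ellipsoid-conditions}, which already establishes that the $\beta$-fair space of Example~\ref{ex:l2} is an ellipsoid (hence satisfies Property~\ref{property:ellipsoidal}) \emph{if and only if} the matrix $M=\Pi_D(A_1^{-1}C^\top \Pi_1 - A_2^{-1}C^\top \Pi_2)$ is invertible. So the whole task reduces to checking that, under the specialization $A_1 = A_2 = A_g$ together with Assumptions~\ref{ass:unknown_exclusive} and \ref{ass:different_estimates}, this $M$ has trivial kernel. Note that no restriction on $\beta$ is needed: unlike the polyhedral case, Property~\ref{property:ellipsoidal} concerns $\calW(\beta;\Delta)$ itself (not its intersection with $\calB(1)$), and once $M$ is invertible the quadratic form $Q = M^\top M$ is PD for every $\beta>0$.

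First I would substitute $A_1 = A_2 = A_g$, which collapses $M$ to $M = \Pi_D A_g^{-1} C^\top(\Pi_1 - \Pi_2)$. Then I peel off the invertible factors from the front: $\Pi_D = \mathrm{Diag}(\{\mathtt{des}(i)\}_i)$ has strictly positive diagonal by the model assumption $\mathtt{des}(i)>0$, so it is invertible; $A_g$ is PD by assumption, so $A_g^{-1}$ is invertible; and $C$ is invertible by Lemma~\ref{lem:c_ker_0}, hence so is $C^\top$. Consequently $\Pi_D A_g^{-1}C^\top$ is invertible, which gives $\ker(M) = \ker(\Pi_1 - \Pi_2)$, so invertibility of $M$ is equivalent to injectivity of $\Pi_1 - \Pi_2$.

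Finally I would invoke Assumption~\ref{ass:different_estimates} directly: it states $\Pi_1\bw \neq \Pi_2\bw$ for all $\bw \neq \mathbf{0}$, i.e.\ $(\Pi_1 - \Pi_2)\bw \neq \mathbf{0}$ for all $\bw \neq \mathbf{0}$, which is exactly $\ker(\Pi_1 - \Pi_2) = \{\mathbf{0}\}$. Hence $M$ is invertible, and Proposition~\ref{prop:ellipsoid-conditions} yields that $\calW(\beta;\Delta)$ is an ellipsoid, establishing Property~\ref{property:ellipsoidal}. To make the two listed assumptions both visible (and mirror the argument in Corollary~\ref{cor:polyhedron_no_cost_asy}), I would optionally reroute through the orthogonal-projection identity $\ker(\Pi_1 - \Pi_2) = \big(\mathrm{range}\,\Pi_1 \cap \mathrm{range}\,\Pi_2\big) \oplus \big(\ker\Pi_1 \cap \ker\Pi_2\big)$: Assumption~\ref{ass:unknown_exclusive} kills the second summand, and for the first summand any $\bw$ in it satisfies $\Pi_1\bw = \bw = \Pi_2\bw$, which by Assumption~\ref{ass:different_estimates} forces $\bw = \mathbf{0}$.

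There is no genuinely hard step here — the corollary is a direct specialization of Proposition~\ref{prop:ellipsoid-conditions}. The only thing to be careful about is the bookkeeping that each matrix factored out of $M$ really is invertible under the stated model assumptions (positivity of the desirability scores, positive definiteness of $A_g$, acyclicity of the causal graph via Lemma~\ref{lem:c_ker_0}); and to observe that the orthogonality of $\Pi_1,\Pi_2$ as projections is not strictly needed for the conclusion once Assumption~\ref{ass:different_estimates} is assumed — it is only convenient if one wants the cleaner geometric decomposition that separates the roles of the two assumptions.
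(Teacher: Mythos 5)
Your proposal is correct and follows essentially the same route as the paper: both reduce the claim to the invertibility criterion of Proposition~\ref{prop:ellipsoid-conditions}, factor $M = \Pi_D A_g^{-1}C^\top(\Pi_1-\Pi_2)$, peel off the invertible prefix (using positivity of $\Pi_D$, positive definiteness of $A_g$, and Lemma~\ref{lem:c_ker_0} for $C$), and conclude from $\ker(\Pi_1-\Pi_2)=\{\mathbf{0}\}$. Your side observation that Assumption~\ref{ass:different_estimates} alone already forces $\ker(\Pi_1-\Pi_2)=\{\mathbf{0}\}$ (making Assumption~\ref{ass:unknown_exclusive} redundant here) is accurate and slightly sharper than the paper's two-condition bookkeeping, but it does not change the argument.
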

\begin{proof}
    \begin{equation*}
    \begin{aligned}
        M &= \Pi_DA^{-1}_1C^\top\Pi_1 - \Pi_DA^{-1}_2C^\top\Pi_2 \\
        &= \Pi_DA^{-1}_gC^\top[\Pi_1 - \Pi_2]
    \end{aligned}
    \end{equation*}
All we must prove that Assumptions \ref{ass:unknown_exclusive}  and \ref{ass:different_estimates} $\implies$ $\ker(M) = \emptyset$. 
Notice that that $\ker(\Pi_DA^{-1}_gC^\top) = \emptyset$ because $\Pi_D$ and $A_g$ are positive definite by setting assumptions and $\ker(C) = \emptyset$ by Lemma \ref{lem:c_ker_0}. Thus all that matters is $\ker(\Pi_1 - \Pi_2)$. Clearly, as long as
\begin{enumerate}
    \item $\forall \bw \in \ker(\Pi_1)$, $\bw \notin \ker(P_2)$ and $\forall \bw' \in \ker(\Pi_2)$, $\bw' \notin \ker(P_1)$
    \item $\Pi_1\bw \neq \Pi_2\bw \quad \forall \bw, \bw \neq \mathbf{0}$
\end{enumerate}
then $\ker(\Pi_DA^{-1}_gC^\top[\Pi_1 - \Pi_2])$ will be empty. 
\end{proof}

\begin{corollary}[Sufficient conditions for Example \ref{ex:l2} to satisfy property \ref{property:ellipsoidal} w/ no info asymmetry]\label{cor:ellipsoid_no_info_asy}
    Suppose that agents in our setting have the same information (i.e. $\Pi_1 = \Pi_2 = \Pi_g)$. Then, the fairness function described by Example \ref{ex:l2} satisfies Property \ref{property:ellipsoidal} if $\ker(\Pi_g) = \emptyset$, $A_2 \succ A_1$.
\end{corollary}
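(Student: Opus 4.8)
The plan is to derive this as a direct specialization of Proposition~\ref{prop:ellipsoid-conditions}, which says that the discrepancy function of Example~\ref{ex:l2} yields an ellipsoidal $\beta$-fair space precisely when $M := \Pi_D\bigl(A_1^{-1}C^\top\Pi_1 - A_2^{-1}C^\top\Pi_2\bigr)$ is invertible; in that case $\calW(\beta;\Delta) = \{\bw : \bw^\top Q\bw \le \beta\}$ with $Q = M^\top M \succ 0$, which is exactly Property~\ref{property:ellipsoidal}. Note that, unlike the polyhedral case (Corollary~\ref{cor:polyhedron_no_info_asy}), no upper bound on $\beta$ is needed here, since Property~\ref{property:ellipsoidal} constrains the shape of $\calW(\beta;\Delta)$ itself rather than its intersection with $\calB(1)$. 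So the entire task reduces to checking $\ker(M) = \emptyset$ under the hypotheses $\Pi_1 = \Pi_2 = \Pi_g$, $\ker(\Pi_g) = \emptyset$, and $A_2 \succ A_1$.

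First I would substitute $\Pi_1 = \Pi_2 = \Pi_g$ to obtain the factored form $M = \Pi_D\,(A_1^{-1} - A_2^{-1})\,C^\top\Pi_g$, and then observe that it is a product of four invertible $d\times d$ matrices (a product of nonsingular square matrices being nonsingular). Three of the factors are immediate: $\Pi_D \succ 0$ by the standing assumption that desirability scores are strictly positive, so it is invertible; $C$ — hence $C^\top$ — is invertible by Lemma~\ref{lem:c_ker_0}; and $\ker(\Pi_g) = \emptyset$ is assumed. The remaining factor is $A_1^{-1} - A_2^{-1}$, for which I would argue exactly as in the proof of Corollary~\ref{cor:polyhedron_no_info_asy}: since $A_1, A_2 \succ 0$ (setting assumptions) and $A_2 \succ A_1$, anti-monotonicity of matrix inversion on the positive-definite cone gives $A_1^{-1} \succ A_2^{-1}$, hence $A_1^{-1} - A_2^{-1} \succ 0$ and in particular its kernel is trivial. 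If a self-contained justification of this step is preferred over citing the Löwner order, conjugating $A_2 \succ A_1$ by $A_2^{-1/2}$ shows that $A_2^{-1/2}A_1A_2^{-1/2}$ has spectrum in $(0,1)$, so its inverse $A_2^{1/2}A_1^{-1}A_2^{1/2}$ has spectrum in $(1,\infty)$, i.e.\ exceeds $I$, and conjugating back by $A_2^{-1/2}$ yields $A_1^{-1} \succ A_2^{-1}$.

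Combining the four invertibility facts gives $\ker(M) = \emptyset$, and Proposition~\ref{prop:ellipsoid-conditions} then immediately yields Property~\ref{property:ellipsoidal} with $Q = \bigl(\Pi_D(A_1^{-1}-A_2^{-1})C^\top\Pi_g\bigr)^\top\bigl(\Pi_D(A_1^{-1}-A_2^{-1})C^\top\Pi_g\bigr)$. I do not anticipate a genuine obstacle here: the statement is a direct corollary of Proposition~\ref{prop:ellipsoid-conditions}, and the only mildly non-routine ingredient is the operator anti-monotonicity of inversion, which is standard and already invoked verbatim elsewhere in the paper. The one thing to stay careful about is the dual use of the symbol $M$ in the statement and proof of Proposition~\ref{prop:ellipsoid-conditions} (the linear map versus the Gram matrix $M^\top M$ that actually defines the ellipsoid), so that the quadratic-form matrix $Q$ is correctly identified.
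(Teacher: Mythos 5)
Your proposal is correct and follows essentially the same route as the paper: factor $M = \Pi_D(A_1^{-1}-A_2^{-1})C^\top\Pi_g$, show each factor has trivial kernel (using $\Pi_D\succ 0$, Lemma~\ref{lem:c_ker_0} for $C$, the assumption on $\Pi_g$, and $A_2\succ A_1\implies A_1^{-1}\succ A_2^{-1}$), and conclude via Proposition~\ref{prop:ellipsoid-conditions}. Your self-contained justification of the anti-monotonicity of inversion and the explicit identification of $Q=M^\top M$ are welcome additions but do not change the argument.
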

\begin{proof}
    \begin{equation*}
    \begin{aligned}
        M &= \Pi_DA^{-1}_1C^\top\Pi_1 - \Pi_DA^{-1}_2C^\top\Pi_2 \\
        &= \Pi_D[A^{-1}_1 - A^{-1}_2]C^\top\Pi_g
    \end{aligned}
    \end{equation*}
Thus, all we must prove that $\ker(\Pi_g) = \emptyset$, $A_2 \succ A_1$ is sufficient to show that $\ker(M) = \emptyset$. $\Pi_D$ and $A_g$ are positive definite by setting assumptions and $\ker(C) = \emptyset$ by Lemma \ref{lem:c_ker_0}. So we consider $\ker(A^{-1}_1 - A^{-1}_2)$ and $\ker(\Pi_g)$. Clearly if $\ker(A^{-1}_1 - A^{-1}_2) = \emptyset$ and $\ker(\Pi_g) = \emptyset$ then $\ker(\Pi_D[A^{-1}_1 - A^{-1}_2]C^\top\Pi_g) = \emptyset$. Note that:
\[
A_2 \succ A_1 \implies A^{-1}_1 \succ A^{-1}_2 \implies A^{-1}_1 - A^{-1}_2 \succ 0 \implies \ker(A^{-1}_1 - A^{-1}_2) = \emptyset
\]
\end{proof}

\subsection{Supplemental material for Section \ref{sec:non-convex}}
\subsubsection{Supplemental material for Section \Cref{ex:asym_overall_constraint}}\label{app:nonconvex_ex}
\begin{proposition}[$\calW(\beta) \in \mathcal{F}$ where $\calW(\beta)$ defined by Example \ref{ex:asym_overall_constraint}]\label{prop:asym_overall_conditions}
    $\calW(\beta) \in \mathcal{F}$ where $\calW(\beta)$ is as defined in Example \ref{ex:asym_overall_constraint} if and only if $\ker(\Pi_g) = \emptyset$ and $\beta \leq \lambda_d(M^\top M)$ where $M:=\Pi_D A^{-1}_g C^\top \Pi_g$
\end{proposition}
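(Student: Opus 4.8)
The plan is to rewrite the constraint in matrix language and then hand over the obvious decomposition of $\Delta$. By Proposition~\ref{prop:agent_erm}, $\bx_e^{(g)}(\bw)=A_g^{-1}C^\top\Pi_g\bw$, so setting $M:=\Pi_D A_g^{-1}C^\top\Pi_g$ and $M':=\Pi_D A_{g'}^{-1}C^\top\Pi_{g'}$ we get $\Pi_D\bx_e^{(g)}(\bw)=M\bw$, $\Pi_D\bx_e^{(g')}(\bw)=M'\bw$, hence $\Delta(\bw)=\|M\bw\|_2^2-\|M'\bw\|_2^2=\bw^\top M^\top M\,\bw-\bw^\top M'^\top M'\,\bw$. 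Thus $\Delta$ is a difference of two quadratic forms, and the natural candidate witness for Definition~\ref{def:nonconvex_class} is $Q:=M^\top M$, $f(\bw):=\|M'\bw\|_2^2$; the whole content of the proposition is that the two displayed hypotheses are exactly what makes this pair admissible.

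For the ``if'' direction I would verify the three requirements of Definition~\ref{def:nonconvex_class} for $(Q,f)=(M^\top M,\|M'\cdot\|_2^2)$. First, $\Delta(\bw)=\bw^\top Q\bw-f(\bw)$ by the identity above. Second, $Q\succ0$: since $\Pi_D$ is diagonal with strictly positive entries, $A_g^{-1}$ is PD, and $C$ is invertible (Lemma~\ref{lem:c_ker_0}), the matrix $\Pi_D A_g^{-1}C^\top$ is invertible, so $\ker(M)=\ker(\Pi_g)$; therefore $\ker(\Pi_g)=\emptyset$ makes $M$ invertible and $Q=M^\top M\succ0$ (indeed a trivial-kernel orthogonal projection is the identity, so $\Pi_g=I$ and $M=\Pi_D A_g^{-1}C^\top$), and $\beta\le\lambda_d(Q)=\lambda_d(M^\top M)$ is exactly the second hypothesis. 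Third, $f\ge0$ everywhere and $f(\mathbf{0})=0$ because $f$ is a squared norm, and $f$ is a quadratic form with $\nabla f(\bw)=2M'^\top M'\bw$, which is bounded on the compact deployable region $\calB(1)\cap\calW(\beta;\Delta)$ — the only region on which the Lipschitz constant is ever used downstream (cf.\ Proposition~\ref{prop:ellipsoidal_restriction}) — so $f$ is $L$-Lipschitz there. Hence $\calW(\beta;\Delta)\in\mathcal{F}$.

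For the converse, suppose $(Q',f')$ witnesses $\calW(\beta;\Delta)\in\mathcal{F}$. Since both $\Delta$ and $\bw\mapsto\bw^\top Q'\bw$ are quadratic forms, $f'$ must be the quadratic form $f'(\bw)=\bw^\top(Q'-M^\top M+M'^\top M')\bw$, and $f'\ge0$ forces $Q'\succeq M^\top M-M'^\top M'$. If $\ker(\Pi_g)\ne\emptyset$, pick a unit $\mathbf{d}\in\ker(\Pi_g)$: then $M\mathbf{d}=0$, so $\Delta(t\mathbf{d})=-t^2\|M'\mathbf{d}\|_2^2\le0\le\beta$ for all $t\ge0$, i.e.\ the whole ray $\{t\mathbf{d}:t\ge0\}\subseteq\calW(\beta;\Delta)$; but $f'(t\mathbf{d})=t^2(\mathbf{d}^\top Q'\mathbf{d}+\|M'\mathbf{d}\|_2^2)$ with $\mathbf{d}^\top Q'\mathbf{d}>0$ (since $Q'\succ0$), so $f'$ has unbounded difference quotients along this ray, contradicting Lipschitzness — hence $\ker(\Pi_g)=\emptyset$ and $M$ is invertible. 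The remaining bound $\beta\le\lambda_d(M^\top M)$ then comes from $\beta\le\lambda_d(Q')$ after arguing that the relevant witness may be taken to be $Q'=M^\top M$, the core ellipsoid used in Section~\ref{sec:non-convex}'s restriction; equivalently, by Lemma~\ref{lem:ellipsoid_in_ball}, this is precisely the statement that $\{\bw:\bw^\top M^\top M\bw\le\beta\}\subseteq\calB(1)$. This last step is the main obstacle: because $(Q,f)$ in Definition~\ref{def:nonconvex_class} is only required to exist and is not canonical, making the ``only if'' direction tight (rather than merely establishing sufficiency) needs care in pinning the decomposition down — for instance by replaying the ray argument on the eigenvectors of $M^\top M-M'^\top M'$ with non-positive eigenvalue, which forces the quadratic part of any admissible $f'$ to annihilate them and thereby constrains $Q'$. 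A minor technical point, harmless for all the subsequent bounds, is that ``$L$-Lipschitz on $\calW(\beta;\Delta)$'' should be read on the compact set $\calB(1)\cap\calW(\beta;\Delta)$ when $\calW(\beta;\Delta)$ itself is unbounded.
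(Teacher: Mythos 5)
Your sufficiency argument is essentially the paper's proof: both fix the canonical witness $Q = M^\top M$, $f(\bw) = \|M'\bw\|_2^2$, observe that $f \ge 0$, $f(\mathbf{0}) = 0$, and $f$ is Lipschitz on a bounded fair space, and reduce the remaining requirements of Definition~\ref{def:nonconvex_class} to $M^\top M \succ 0$ (equivalently $\ker(M)=\ker(\Pi_g)=\emptyset$, since $\Pi_D A_g^{-1}C^\top$ is invertible) and $\beta \le \lambda_d(M^\top M)$. Your parenthetical that a trivial-kernel orthogonal projection forces $\Pi_g = I$ is correct and worth keeping. Your closing caveat about Lipschitzness is also apt: neither your argument nor the paper's (which invokes ``the diameter of the Euclidean ball that contains the fairness space'') actually guarantees that $\calW(\beta;\Delta)$ is bounded --- that requires $M^\top M - M'^\top M' \succ 0$, which the stated hypotheses do not imply --- so the Lipschitz requirement must indeed be read on $\calB(1)\cap\calW(\beta;\Delta)$ or boundedness assumed separately.

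Where you diverge is the converse, and you are right that this is where the difficulty sits. The paper does not really prove it: after verifying the properties of the canonical $f$, it declares the two conditions ``clearly necessary and sufficient by definition of $\mathcal{F}$,'' which is only valid if the decomposition $(Q,f)$ is treated as fixed rather than existentially quantified. Your ray argument for the necessity of $\ker(\Pi_g)=\emptyset$ is a genuine improvement over the paper: any admissible $f'$ equals $\bw^\top Q'\bw - \Delta(\bw)$, hence is quadratic and grows like $t^2\,\mathbf{d}^\top Q'\mathbf{d} > 0$ along a ray $\{t\mathbf{d}\}\subseteq\calW(\beta;\Delta)$ with $M\mathbf{d}=0$, contradicting Lipschitzness. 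But the step you flag as ``the main obstacle'' is a real gap, and in fact under the literal existential reading of Definition~\ref{def:nonconvex_class} the eigenvalue bound is \emph{not} necessary: when $M^\top M - M'^\top M' \succ 0$ the fair space is a bounded ellipsoid, and taking $Q' = cI$ with $c \ge \max\{\beta, \lambda_{\max}(M^\top M)\}$ gives $f'(\bw) = \bw^\top(cI - M^\top M + M'^\top M')\bw \ge 0$, vanishing at $\mathbf{0}$ and Lipschitz on the bounded set, with $\beta \le \lambda_d(Q') = c$ --- a valid witness even when $\beta > \lambda_d(M^\top M)$. So this direction cannot be rescued by ``pinning down'' the witness; the proposition is correct only as a characterization of when the canonical pair $(M^\top M, \|M'\cdot\|_2^2)$ certifies membership, which is all the downstream restriction results in Section~\ref{sec:non-convex} use. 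Net assessment: your proof establishes the same sufficiency as the paper, proves strictly more of the necessity, and correctly isolates the one step that neither proof establishes.
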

\begin{proof}
\begin{equation*}
\begin{aligned}
\Delta(\bw) &= \|\Pi_D\bx_e^{(g)}(\bw)\|^2_2 - \|\Pi_D\bx_e^{(g')}(\bw)\|^2_2\\
&= \langle \bw, \bw \rangle_{M_g^\top M_g} - \langle \bw, \bw \rangle_{M_{g'}^\top M_{g'}}   
\end{aligned}
\end{equation*}
Where $M_g:=\Pi_D A^{-1}_g C^\top \Pi_g$. 
Let $f(\bw):= \langle \bw, \bw \rangle_{M_{g'}^\top M_{g'}}$ clearly $f(\bw) \geq 0 \quad \forall \bw$ and $f(\mathbf{0}) = 0$. Further, $f$ is Lipschitz on a finite space: 

\begin{align*}
    \|f(\bw)-f(\bw')\|_2 = &\| \|M_{g'}^\top M_{g'} \bw\|^2 - \| M_{g'}^\top M_{g'} \bw'\|^2\|_2 = |(\bw-\bw)^\top M_{g'}^\top M_{g'}(\bw+\bw')|\\
    \leq &\lambda_{\max}( M_{g'}^\top M_{g'})\cdot \|\bw-\bw'\|_2 \|\bw+\bw'\|_2 \leq 2R \lambda_{\max}(M_{g'}^\top M_{g'})\|\bw-\bw'\|_2
\end{align*}
where $R$ is the diameter of the Euclidean-Ball that contains the fairness space of Example \ref{ex:asym_overall_constraint}. 

Thus, because the above hold for any instantiation of Example \ref{ex:asym_overall_constraint}, it is clearly necessary and sufficient by definition of $\mathcal{F}$ that $\beta \leq \lambda_d(M^\top M)$ and $M_g^\top M_g$ is PD. 

To see the connection to $\ker(\Pi_g)$, Note that $M_g^\top M_g\succ 0 \iff \ker(M_g) = \emptyset$. Further note that in our setting, $\ker(\Pi_D A^{-1}_g C^\top) = \emptyset$ by definition and Lemma \ref{lem:c_ker_0}, so all that matters is $\Pi_g$. Thus 
\[
M_g^\top M_g\succ 0 \iff \ker(M_g) = \emptyset \iff \ker(\Pi_g) = \emptyset
\]
\end{proof}

\subsubsection{Supplemental material for \Cref{sec:non-convex}'s Convex Restriction Construction}\label{app:nonconvex_bounds}

Notice the principal's fair-constrained problem will generally become a nonconvex optimization problem when the fair space is defined as in $\mathcal{F}$. Thus, it becomes hard to reason about how much optimality is lost when the principal maximizes over the nonconvex feasible region rather than the $\calB(1)$ that is the feasible region of the unconstrained problem. To get optimality loss bounds we will do the following:
\begin{enumerate}
    \item Construct an ellipsoidal \emph{restriction}, $\calE(\beta)$, such that $\calE(\beta) \subseteq \calB(1)\cap \calW(\beta;\Delta)$. This will give us optimality loss bounds for the $\beta$-fair SE as compared to the unconstrained SE.
    \item Construct an ellipsoidal \emph{envelope}, $\calE(\tilde{\beta})$, such that $\calW(\beta;\Delta) \subseteq \calE(\tilde{\beta})$. This will give us optimality loss bounds for the convex restricted $\beta$-fair problem as compared to the nonconvex $\beta$-fair SE.
\end{enumerate}

First let's construct the ellipsoidal restriction. We'll use a nice lemma about when ellipsoids are inside the unit euclidean ball along the way:
\begin{lemma}[Necessary and Sufficient Conditions for ellipsoid in unit ball]\label{lem:ellipsoid_in_ball}
    Let $\calE(\beta):=\{\bw \in \R^d: \bw ^\top Q\bw \leq \beta\}$ be a (centered at 0) ellipsoid. 

    $\calE(\beta) \subseteq \calB(1)$ iff $\beta \leq \lambda_d(Q)$
\end{lemma}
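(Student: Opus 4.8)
The plan is to diagonalize $Q$ and reduce the claim to a coordinatewise comparison of the two quadratic forms. Since $Q\succ 0$, it is symmetric positive definite, so write $Q = U\Lambda U^\top$ with $U$ orthogonal and $\Lambda = \mathrm{Diag}(\lambda_1(Q),\dots,\lambda_d(Q))$, where $\lambda_1(Q)\geq\cdots\geq\lambda_d(Q) > 0$; note in particular $\lambda_d(Q)$ is the smallest eigenvalue and is strictly positive, so dividing by it is legitimate. Because $U$ is orthogonal, the substitution $\bv = U^\top\bw$ preserves Euclidean norm, so $\calB(1)$ maps to $\{\bv:\sum_i v_i^2\leq 1\}$ and $\calE(\beta)$ maps to $\{\bv:\sum_i\lambda_i(Q)v_i^2\leq\beta\}$. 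Hence $\calE(\beta)\subseteq\calB(1)$ if and only if every $\bv$ with $\sum_i\lambda_i(Q)v_i^2\leq\beta$ also satisfies $\sum_i v_i^2\leq 1$.

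For the ``if'' direction, suppose $\beta\leq\lambda_d(Q)$. For any $\bv$ in the transformed ellipsoid, $\sum_i v_i^2 \leq \frac{1}{\lambda_d(Q)}\sum_i\lambda_i(Q)v_i^2\leq\frac{\beta}{\lambda_d(Q)}\leq 1$, using that $\lambda_d(Q)$ is the smallest eigenvalue; so the (transformed, hence original) ellipsoid is contained in the unit ball. For the ``only if'' direction, I would exhibit a witness point along the shortest axis: take $\bv^\star = \sqrt{\beta/\lambda_d(Q)}\,\mathbf{e}_d$ in transformed coordinates. Then $\sum_i\lambda_i(Q)(v^\star_i)^2 = \lambda_d(Q)\cdot\beta/\lambda_d(Q) = \beta$, so $\bv^\star$ lies in $\calE(\beta)$, and containment forces $\|\bv^\star\|_2^2 = \beta/\lambda_d(Q)\leq 1$, i.e.\ $\beta\leq\lambda_d(Q)$.

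Alternatively, the whole statement follows in one line from Lemma~\ref{lem:ellipsoid_radii}, which already identifies $\max_{\bw\in\calE(\beta)}\|\bw\|_2 = \sqrt{\beta/\lambda_d(Q)}$: then $\calE(\beta)\subseteq\calB(1)$ holds iff this maximum is at most $1$ iff $\beta\leq\lambda_d(Q)$, yielding both directions at once. I expect this to be essentially the presentation used. There is no genuine obstacle here — the only points requiring care are recording that $\lambda_d(Q)>0$ (so the division is valid) and checking that the witness point in the ``only if'' direction genuinely satisfies the ellipsoid constraint, which it does by construction; if one routes through Lemma~\ref{lem:ellipsoid_radii} even these are already handled.
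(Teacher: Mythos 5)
Your proof is correct and follows essentially the same route as the paper: your coordinatewise comparison after diagonalizing $Q$ is exactly the Rayleigh-quotient inequality $\lambda_d(Q)\|\bw\|_2^2 \leq \bw^\top Q\bw$ that the paper invokes for the ``if'' direction, and your witness point $\sqrt{\beta/\lambda_d(Q)}\,\mathbf{e}_d$ along the smallest-eigenvalue axis is the same point $t\bz$ (with $\bz$ the eigenvector of $\lambda_d(Q)$ and $t=\sqrt{\beta/\lambda_d(Q)}$) that the paper uses, phrased directly rather than by contradiction. No gaps.
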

\begin{proof}
    Recall the Rayleigh quotient states that:
\[
\lambda_d(Q)\|\bw\|^2 \leq \bw^\top Q\bw \leq \lambda_1 (Q) \|\bw\|^2 \quad \forall \bw \in \R^d
\]
First we consider the direction $\calE(\beta) \subseteq \calB(1) \implies \beta \leq \lambda_d(Q)$. Let $\calE(\beta) \subseteq \calB(1)$. 
Let $\bw = t z$ where $z$ is the unit eigenvector corresponding to $\lambda_d(Q)$. For $t$ such that $\bw \in \calE(\beta)$: $\bw^\top Q\bw = t^2 \lambda_d(Q) \leq \beta$. Now suppose for contradiction that $\beta > \lambda_d(Q)$. Then for $t = \sqrt{\frac{\beta}{\lambda_d(Q)}}>1$ we have that $\bw^\top Q\bw = \frac{\beta\lambda_d(Q)}{\lambda_d(Q)} = \beta$ which means that $\bw \in \calE(\beta)$. But! $\|\bw\|_2 = t\|z\| = t >1$, which means that $\bw \notin \calB(1)$. This is a contradiction.

Now we consider the direction $\calE(\beta) \subseteq \calB(1) \impliedby \beta \leq \lambda_d(Q)$. This is direct from the Rayleigh quotient. For any $\bw \in \calE(\beta)$
we would have:
\[
\lambda_d(Q) \bw^\top \bw \leq \bw^\top Q\bw \leq \beta \leq \lambda_d(Q)
\]
Which clearly yields that $\bw^\top \bw \leq 1$, which means its in the euclidean ball.
\end{proof}

Now here is the ellipsoidal restriction. 
\begin{proposition}[$\calE(\beta)$, ellipsoidal restriction]\label{prop:ellipsoidal_restriction}
If $\calW(\beta) \in \mathcal{F}$, then
$\calE(\beta) \subseteq \calW(\beta)\cap \calB(1)$
Where $\calE(\beta):= \{\bw: \bw\in\R^d, \bw^\top Q \bw \leq \beta\}$
\end{proposition}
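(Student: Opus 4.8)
The plan is to verify the two inclusions $\calE(\beta)\subseteq\calW(\beta;\Delta)$ and $\calE(\beta)\subseteq\calB(1)$ separately; intersecting them gives the claim. Both parts follow almost immediately from the structural hypotheses bundled into the definition of $\mathcal{F}$ (Definition~\ref{def:nonconvex_class}), so the bulk of the ``proof'' is really just unpacking that definition.

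First I would handle the fairness containment. Take any $\bw\in\calE(\beta)$, i.e.\ $\bw^\top Q\bw\le\beta$. By hypothesis (1) of Definition~\ref{def:nonconvex_class}, $\Delta(\bw)=\bw^\top Q\bw - f(\bw)$, and by hypothesis (2), $f(\bw)\ge 0$ for all $\bw\in\R^d$. Hence
\[
\Delta(\bw)=\bw^\top Q\bw - f(\bw)\;\le\;\bw^\top Q\bw\;\le\;\beta,
\]
so $\bw\in\calW(\beta;\Delta)$ by Definition~\ref{def:fairspace}. Note that only nonnegativity of $f$ is used here; the Lipschitz property and $f(\mathbf 0)=0$ play no role in this particular statement (they matter for the later envelope/tightness results, not for the restriction being contained).

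Next I would handle the ball containment. Hypothesis (1) of Definition~\ref{def:nonconvex_class} also gives $\beta\le\lambda_d(Q)$, and $Q\succ 0$. This is exactly the hypothesis of Lemma~\ref{lem:ellipsoid_in_ball}, which states $\calE(\beta)\subseteq\calB(1)$ if and only if $\beta\le\lambda_d(Q)$; applying it yields $\calE(\beta)\subseteq\calB(1)$ directly. (If one prefers a self-contained argument, this is just the Rayleigh quotient bound $\lambda_d(Q)\|\bw\|_2^2\le\bw^\top Q\bw\le\beta\le\lambda_d(Q)$, forcing $\|\bw\|_2\le 1$.) Combining the two inclusions gives $\calE(\beta)\subseteq\calW(\beta;\Delta)\cap\calB(1)$, as desired.

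There is no real obstacle here — the proposition is essentially a restatement of the defining properties of $\mathcal{F}$, engineered precisely so that the ellipsoidal restriction sits inside the feasible region of the principal's fairness-constrained problem (Eq.~\eqref{eqn:fair_opt}). The only thing to be careful about is citing the right clause of Definition~\ref{def:nonconvex_class} for each inclusion: the nonnegativity of $f$ for the $\Delta$-feasibility, and the eigenvalue bound $\beta\le\lambda_d(Q)$ (via Lemma~\ref{lem:ellipsoid_in_ball}) for staying inside $\calB(1)$. One might also remark that $\calE(\beta)$ is nonempty (it contains $\mathbf 0$), which is what makes this a useful restriction downstream when invoking the Property~\ref{property:internal_ellipsoid} bounds from Table~\ref{tab:convex_bounds}.
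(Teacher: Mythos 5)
Your proposal is correct and follows essentially the same route as the paper's own proof: nonnegativity of $f$ gives $\Delta(\bw)\le\bw^\top Q\bw\le\beta$ for the fairness containment, and the clause $\beta\le\lambda_d(Q)$ together with Lemma~\ref{lem:ellipsoid_in_ball} gives the ball containment. Nothing is missing.
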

\begin{proof}
    Clearly if $\bw \in \calE(\beta)$ then we have: $\bw^\top Q\bw \leq \beta$. But by assumption $\mathcal{F}$ we have $\Delta(\bw)\leq \bw^\top Q\bw \leq \beta$ thus $\calE(\beta) \subseteq \calW(\beta)$.

    For the last part, definition of $\mathcal{F}$ says $\beta \leq \lambda_d(Q)$, so by Lemma \ref{lem:ellipsoid_in_ball}, we have $\calE(\beta) \subseteq \calB(1)$
\end{proof}

Now, lets define the additional optimization problems over the restriction of the $\beta$-fair space.

\begin{definition}[Convex restriction of the fairness constrained problem]
\begin{equation}\label{eqn:ellipsoidal_res}
\max_{\bw \in \R^d} \mathtt{OBJ}(\bw;\bw^\star),\quad \text{subject to}\quad \bw \in \calE(\beta)
\end{equation}
We will frequently say that $\bw^\star_{res}$ is the solution point to this problem.
\end{definition}
\begin{remark}
    Note that it would be provably redundant to include and intersection with $\calB(1)$ on the restricted problem (Equation \ref{eqn:ellipsoidal_res}) because $\calE(\beta) \subseteq \calB(1)$ as shown by Proposition \ref{prop:ellipsoidal_restriction}. Thus, we leave this additional constraint off.
\end{remark}

This will lead us to the key bounds of the optimality loss in Stackelberg equilibria! Prior to presenting the specific bounds for accuracy and social welfare note that they will clearly look like this:

\begin{lemma}[Bounds on optimality loss] \label{lem:nonconvex_bounds}
 If $\calW(\beta;\Delta) \in \mathcal{F}$, then we can bound optimality loss as follows:
\[
\mathtt{OBJ}(\bw^\star_u) - \mathtt{OBJ}(\bw^\star_c) \leq \mathtt{OBJ}(\bw^\star_{u}) - \mathtt{OBJ}(\bw^\star_{res})
\]
Where $\mathtt{OBJ}(\bw^\star_u)$ is the optimal value of Equation \ref{eqn:opt}, $\mathtt{OBJ}(\bw^\star_c)$ is the optimal value of Equation \ref{eqn:fair_opt}, $\mathtt{OBJ}(\bw^\star_{env})$ is the optimal value of Equation \ref{eqn:ellipsoidal_env}, and $\mathtt{OBJ}(\bw^\star_{res})$ is the optimal value of Equation \ref{eqn:ellipsoidal_res}

\end{lemma}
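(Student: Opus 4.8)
The plan is to prove the bound by a one-line feasible-set monotonicity argument, once the right set inclusions are in place. The restricted problem (Eq.~\eqref{eqn:ellipsoidal_res}) and the fairness-constrained problem (Eq.~\eqref{eqn:fair_opt}) optimize the \emph{same} objective $\mathtt{OBJ}(\cdot;\bw^\star)$ and differ only in feasible region, so it suffices to show that the restriction's feasible region sits inside the fairness-constrained one.

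First I would apply Proposition~\ref{prop:ellipsoidal_restriction}: since $\calW(\beta;\Delta)\in\mathcal{F}$, we have $\calE(\beta)\subseteq \calW(\beta;\Delta)\cap\calB(1)$, where $\calE(\beta)=\{\bw:\bw^\top Q\bw\le\beta\}$. This is precisely where the two defining conditions of $\mathcal{F}$ enter: $\Delta(\bw)\le\bw^\top Q\bw$ gives $\calE(\beta)\subseteq\calW(\beta;\Delta)$, and $\beta\le\lambda_d(Q)$ gives $\calE(\beta)\subseteq\calB(1)$ via Lemma~\ref{lem:ellipsoid_in_ball}. The feasible region of Eq.~\eqref{eqn:fair_opt} is $\calB(1)\cap\{\bw:\Delta(\bw)\le\beta\}=\calB(1)\cap\calW(\beta;\Delta)$, which therefore contains $\calE(\beta)$, the feasible region of Eq.~\eqref{eqn:ellipsoidal_res}. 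In particular $\bw^\star_{res}$ is feasible for Eq.~\eqref{eqn:fair_opt}, so $\mathtt{OBJ}(\bw^\star_c)\ge \mathtt{OBJ}(\bw^\star_{res})$. Subtracting both sides from $\mathtt{OBJ}(\bw^\star_u)$ and reversing the inequality yields exactly $\mathtt{OBJ}(\bw^\star_u)-\mathtt{OBJ}(\bw^\star_c)\le\mathtt{OBJ}(\bw^\star_u)-\mathtt{OBJ}(\bw^\star_{res})$. For completeness I would also note that $\calB(1)\cap\calW(\beta;\Delta)\subseteq\calB(1)$ and $\calE(\beta)\subseteq\calB(1)$, so the same monotonicity gives the chain $\mathtt{OBJ}(\bw^\star_u)\ge\mathtt{OBJ}(\bw^\star_c)\ge\mathtt{OBJ}(\bw^\star_{res})$, confirming that both sides of the displayed inequality are genuine (nonnegative) optimality losses.

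There is no real obstacle in this lemma: the content is entirely in the set-containment chain supplied by Proposition~\ref{prop:ellipsoidal_restriction}, and everything else is monotonicity of an optimum over a shrinking feasible set. The only things to watch are (i) bookkeeping the three optimization problems and their feasible regions consistently, and (ii) that the stray reference to $\bw^\star_{env}$ in the lemma statement plays no role in this particular inequality and can be ignored here. The genuinely substantive follow-up step — converting the right-hand side $\mathtt{OBJ}(\bw^\star_u)-\mathtt{OBJ}(\bw^\star_{res})$ into the closed-form accuracy and social-welfare expressions of Table~\ref{tab:convex_bounds} by invoking the Property~\ref{property:internal_ellipsoid} rows for the ellipsoid $\calE(\beta)$ (whose parameter is $\beta$ with shape matrix $Q$) — is what is deferred to Proposition~\ref{nonconvex_acc_sw_bounds}.
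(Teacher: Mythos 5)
Your proposal is correct and matches the paper's own argument: both reduce the lemma to the set inclusion $\calE(\beta)\subseteq\calW(\beta;\Delta)\cap\calB(1)$ from Proposition~\ref{prop:ellipsoidal_restriction}, conclude $\mathtt{OBJ}(\bw^\star_{res})\le\mathtt{OBJ}(\bw^\star_c)$ by feasible-set monotonicity, and subtract from $\mathtt{OBJ}(\bw^\star_u)$. Your version merely spells out the containment bookkeeping (and correctly flags the vestigial $\bw^\star_{env}$ reference) that the paper's one-line proof leaves implicit.
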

\begin{proof}
By Proposition \ref{prop:ellipsoidal_restriction}:
\[
\mathtt{OBJ}(\bw^\star_{res}) \leq \mathtt{OBJ}(\bw^\star_c)
\]

\end{proof}

\begin{proposition}[Accuracy and social welfare optimality bounds]\label{nonconvex_acc_sw_bounds}
Assume that the $\beta$-fair space, $\calW(\beta, \Delta)$ is such that $\calW(\beta, \Delta)\in \mathcal{F}$ then letting $\mathtt{OBJ}(\bw^\star_u)$ be the $\beta$-fairness constrained optimal value (Equation \ref{eqn:opt}) and $\mathtt{OBJ}(\bw^\star_c)$ be the unconstrained optimal value (Equation \ref{eqn:fair_opt}),

If the principal is accuracy maximizing, i.e., $\mathtt{OBJ}= \mathtt{ACC}$ then he has the following bounds on accuracy loss in stackelberg equilibrium:
\[
\mathtt{ACC}(\bw^\star_u) - \mathtt{ACC}(\bw^\star_c) \leq (2q)(t + s)
\]
where $q =  \mathbb{I}_{\bw^\star \notin \calE(\beta)}(\sqrt{\frac{\beta}{\lambda_d(Q)}} + \|\bw^\star\|)$, $s = \mathbb{I}_{\bw^\star \notin \calE(\beta)}(\min\left\{1, \|\bw^\star\|\right\} + \sqrt{\frac{\beta}{\lambda_d(Q)}})$, and $t=2\sqrt{\frac{\beta}{\lambda_d(Q)}}$

If the principal is Social Welfare (SW) maximizing, i.e., $\mathtt{OBJ}= \mathtt{SW}$ then he has the following bounds on SW loss in stackelberg equilibrium:
\[
\mathtt{SW}(\bw^\star_u) - \mathtt{SW}(\bw^\star_c) \leq \|\tilde{\bw}\|_2 - \sqrt{\beta}\|\tilde{\bw}\|_{Q^{-1}}
\]
Where $\tilde{\bw}:= (CA_1^{-1}C^{\top} \Pi_1 + CA_2^{-1}C^{\top} \Pi_2)^\top \bw^\star$
\end{proposition}
\begin{proof}
We begin with social welfare.
This is easy because notice that by Proposition \ref{prop:ellipsoidal_restriction}, we have that the ellipsoidal restriction fairness problem of Equation \ref{eqn:ellipsoidal_res} satisfies Property \ref{property:internal_ellipsoid}. Therefore we can invoke the bounds of Table \ref{tab:convex_bounds} (equivalently, Prop \ref{prop:internal_ellipoid_sw_bound}) thus we have:
\[
\mathtt{SW}(\bw^\star_{u}) - \mathtt{SW}(\bw^\star_{res})\leq  \|\tilde{\bw}\|_2 - \sqrt{\beta}\|\tilde{\bw}\|_{Q^{-1}}
\]
where $\mathtt{SW}(\bw^\star_{res})$ is the optimal value of Equation \ref{eqn:ellipsoidal_res}. Invoking Lemma \ref{lem:nonconvex_bounds}, we have the RHS of SW of Proposition \ref{nonconvex_acc_sw_bounds}.

Now we consider accuracy. This is also easy because we can again invoke the bounds of Table \ref{tab:convex_bounds} (equivalently, Prop \ref{prop:internal_ellipoid_sw_bound}) thus we have:
\[
\mathtt{ACC}(\bw^\star_{u}) - \mathtt{ACC}(\bw^\star_{res})  \leq (2q)(t + s)
\]
where $q =  \mathbb{I}_{\bw^\star \notin \calE(\beta)}(\sqrt{\frac{\beta}{\lambda_d(Q)}} + \|\bw^\star\|)$, $s = \mathbb{I}_{\bw^\star \notin \calE(\beta)}(\min\left\{1, \|\bw^\star\|\right\} + \sqrt{\frac{\beta}{\lambda_d(Q)}})$, and $t=2\sqrt{\frac{\beta}{\lambda_d(Q)}}$
and 
where $\mathtt{ACC}(\bw^\star_{res})$ is the optimal value of Equation \ref{eqn:ellipsoidal_res}. Invoking Lemma \ref{lem:nonconvex_bounds}, we have the RHS of SW of Proposition \ref{nonconvex_acc_sw_bounds}. 
\end{proof}

\subsubsection{Supplemental material for Section \ref{sec:nonconvex_how_good_is_restriction}'s Convex Restriction Tightness}\label{app:nonconvex_how_good_is_restriction}
But of course, Proposition \ref{nonconvex_acc_sw_bounds} does not feel very satisfying because we have little idea of how much of the accuracy and social welfare loss in the bound comes from the restriction and how much of it comes from the actual tightness of the nonconvex $\beta$-fair space! From a computational perspective, we'd like to have a better idea of how good our restriction was. While we discuss these bounds as optimality loss in \emph{equilibria}, perhaps the principal has the necessary setting information or reasonable estimates and wants to directly solves for (approximate) equilibrium and using this convex restriction to ensure a $\beta$-fair policy that satisfies tolerance. He will want to know how much optimality he might be losing as a result of making the problem (Equation \ref{eqn:fair_opt}) a tractable convex restriction. 

To get upper bounds on this loss, we can construct an ellipsoidal envelope.

Before we do that, we should make use of an important fact. We can upper bound $f(\bw)$ on $\calW(\beta;\Delta)$ because it is $L$-Lipschitz on this space and because we know that at $\bw = \mathbf{0}$ $f = 0$.
We'll also use the ``diameter'' of the fair space frequently:
\begin{definition}[Diameter of a $\beta$-fair space]
For some $\calW(\beta;\Delta)$,
    \[
D: = \sup_{\bw \in \calW(\beta;\Delta)}\|\bw\|_2\] 
is the ``diameter''
\end{definition}

\begin{lemma}[Upper bound on $f$]\label{lem:ub_f}
    For any $\calW(\beta;\Delta) \in \mathcal{F}$, we have that the value of $f$ is upperbounded on the fair space:
    \[
    \sup_{\bw \in \calW(\beta;\Delta)}f(\bw) \leq L D\quad \forall \bw \in \calW(\beta;\Delta)
    \]
\end{lemma}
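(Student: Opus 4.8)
The plan is to exploit the three defining properties of the class $\mathcal{F}$ from Definition~\ref{def:nonconvex_class}: $f(\mathbf 0)=0$, $f\ge 0$ pointwise, and $f$ is $L$-Lipschitz on $\calW(\beta;\Delta)$. The proof is essentially a one-line application of the Lipschitz estimate anchored at the origin, so the only thing that needs care is checking that the anchor point $\mathbf 0$ actually lies in $\calW(\beta;\Delta)$, since the Lipschitz bound is only assumed to hold for pairs of points in that set.

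First I would verify $\mathbf 0\in\calW(\beta;\Delta)$. By definition of $\mathcal{F}$ we have $\Delta(\bw)=\bw^\top Q\bw-f(\bw)$, so $\Delta(\mathbf 0)=\mathbf 0^\top Q\mathbf 0-f(\mathbf 0)=0-0=0\le\beta$ (recall $\beta>0$, or more than enough, $\beta\le\lambda_d(Q)$ with $Q\succ 0$). Hence $\mathbf 0$ satisfies the defining inequality $\Delta(\cdot)\le\beta$ of $\calW(\beta;\Delta)$ (Definition~\ref{def:fairspace}), so $\mathbf 0\in\calW(\beta;\Delta)$.

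Next, fix any $\bw\in\calW(\beta;\Delta)$. Since $f\ge 0$ everywhere and $f(\mathbf 0)=0$, we have $f(\bw)=f(\bw)-f(\mathbf 0)=|f(\bw)-f(\mathbf 0)|$. Both $\bw$ and $\mathbf 0$ lie in $\calW(\beta;\Delta)$, so the $L$-Lipschitz property of $f$ on this set applies and gives
\[
f(\bw)=|f(\bw)-f(\mathbf 0)|\le L\|\bw-\mathbf 0\|_2=L\|\bw\|_2\le L\,D,
\]
where the last inequality is the definition of $D=\sup_{\bw\in\calW(\beta;\Delta)}\|\bw\|_2$. Taking the supremum over $\bw\in\calW(\beta;\Delta)$ yields the claim.

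I do not expect any real obstacle here; the statement is a routine consequence of Lipschitz continuity pinned at a zero of $f$. The only place one could stumble is forgetting to justify $\mathbf 0\in\calW(\beta;\Delta)$ before invoking the Lipschitz bound, which is why I would state that verification explicitly as the first step.
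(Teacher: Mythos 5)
Your proof is correct and follows essentially the same route as the paper's: anchor the Lipschitz estimate at $\mathbf 0$ using $f(\mathbf 0)=0$ and $f\ge 0$ to get $f(\bw)\le L\|\bw\|_2$, then take the supremum. Your explicit check that $\mathbf 0\in\calW(\beta;\Delta)$ (via $\Delta(\mathbf 0)=0\le\beta$) is a small point of care that the paper's proof leaves implicit, and it is indeed needed since the Lipschitz property is only assumed on $\calW(\beta;\Delta)$.
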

\begin{proof}
By definition of $L$-lipschitz, we have that:
\[
\lvert f(\bw) - f(\hat{\bw})\rvert \leq L \|\bw - \hat{\bw}\|_2 \quad \forall \bw, \hat{\bw} \in \calW(\beta, \Delta)
\]
Thus by the assumption in $\mathcal{F}$, we have that 
$f(\bw) \leq L \|\bw \|_2 \quad \forall \bw \in \calW(\beta;\Delta)$ which further implies that
\begin{equation*}
\begin{aligned}
\sup_{\bw \in \calW(\beta;\Delta)}f(\bw) &\leq L \sup_{\bw \in \calW(\beta;\Delta)}\|\bw\|_2&\quad \forall \bw \in \calW(\beta;\Delta)\\
&= L D&\quad \forall \bw \in \calW(\beta;\Delta)
\end{aligned}
\end{equation*}
\end{proof}

Now we are ready to define our ellipsoidal envelope:

\begin{proposition}[$\calE(\beta + LD)$, ellipsoidal envelope]\label{prop:ellipsoidal_envelope}
If $\calW(\beta) \in \mathcal{F}$, then
$\calW(\beta) \subseteq \calE(\beta + LD)$
Where $\calE(\beta + LD):= \{\bw: \bw\in\R^d, \bw^\top Q \bw \leq \beta + LD\}$.
\end{proposition}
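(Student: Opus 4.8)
The plan is a direct chain of inequalities that mirrors the construction of the ellipsoidal \emph{restriction} in Proposition~\ref{prop:ellipsoidal_restriction}, only run in the opposite direction and paying the price of the slack term $f(\bw)$. First I would fix an arbitrary $\bw \in \calW(\beta;\Delta)$; by Definition~\ref{def:fairspace} this means $\Delta(\bw) \leq \beta$. Next I would invoke the two-term decomposition guaranteed by membership in $\mathcal{F}$ (Definition~\ref{def:nonconvex_class}): $\Delta(\bw) = \bw^\top Q\bw - f(\bw)$, which rearranges to $\bw^\top Q\bw = \Delta(\bw) + f(\bw)$.

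The second ingredient is the uniform upper bound on $f$ over the $\beta$-fair space, namely Lemma~\ref{lem:ub_f}, which gives $f(\bw) \leq LD$ for every $\bw \in \calW(\beta;\Delta)$ (this is exactly where the Lipschitz constant $L$ and the diameter $D$ enter). Combining, $\bw^\top Q\bw = \Delta(\bw) + f(\bw) \leq \beta + LD$, so $\bw$ lies in $\{\bw : \bw^\top Q\bw \leq \beta + LD\} = \calE(\beta + LD)$. Since $\bw$ was arbitrary, $\calW(\beta;\Delta) \subseteq \calE(\beta + LD)$, as claimed. One small sanity check worth recording is that $\calE(\beta + LD)$ is a genuine (nonempty, bounded) ellipsoid: $Q \succ 0$ by Definition~\ref{def:nonconvex_class}, and $\beta + LD \geq 0$ since $\beta > 0$ and $L, D \geq 0$ (nonnegativity of $L$ is immediate for a Lipschitz constant, and $D = \sup_{\bw \in \calW(\beta;\Delta)}\|\bw\|_2 \geq 0$).

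There is essentially no hard step here; the proposition is a two-line corollary of the definitions together with Lemma~\ref{lem:ub_f}. If anything, the only thing to be careful about is that Lemma~\ref{lem:ub_f} is stated for all $\bw \in \calW(\beta;\Delta)$ and that $f$ attaining $0$ at $\mathbf{0}$ (used inside that lemma) is part of the $\mathcal{F}$ definition, so no extra hypotheses are needed. The real content of this envelope lies downstream — it is what lets us compare the nonconvex $\beta$-fair optimum against a tractable ellipsoidal relaxation in Section~\ref{sec:nonconvex_how_good_is_restriction} — but the envelope containment itself is immediate.
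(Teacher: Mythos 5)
Your proposal is correct and follows exactly the paper's own argument: rearrange $\Delta(\bw) = \bw^\top Q\bw - f(\bw) \leq \beta$ to $\bw^\top Q\bw \leq \beta + f(\bw)$ and then apply the uniform bound $f(\bw) \leq LD$ from Lemma~\ref{lem:ub_f}. The additional sanity check that $\calE(\beta + LD)$ is a genuine ellipsoid is a harmless extra, not a deviation.
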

\begin{proof}
Let $\bw \in \calW(\beta;\Delta)$. Clearly: 
$
\langle \bw, \bw \rangle_Q - f(\bw) \leq \beta
$
This implies that 
$
\langle \bw, \bw \rangle_Q \leq \beta + f(\bw)
$
And from Lemma \ref{lem:ub_f}, we have:
\[
\langle \bw, \bw \rangle_Q \leq \beta + LD \quad \forall \bw \in \calW(\beta;\Delta)
\]
Thus we see that for $\bw \in \calW(\beta;\Delta)$, we also have that $\bw \in \calE(\beta + LD)$, which gives us the statement of the Proposition.
\end{proof}

\begin{definition}[Convex envelope of the fairness constrained problem]
\begin{equation}\label{eqn:ellipsoidal_env}
\max_{\bw \in \calB(1)} \mathtt{OBJ}(\bw;\bw^\star),\quad \text{subject to}\quad \bw \in \calE(\beta + LD)
\end{equation}
We will frequently say that $\bw^\star_{env}$ is the solution point to this problem.
\end{definition}

And now, because the nonconvex $\beta$-fair space lives right in-between the convex restriction and the envelope, the optimality loss that happens envelope$\rightarrow$restriction \emph{upper bounds} the optimality loss that happens $\calW(\beta;\Delta)\rightarrow$restriction

\begin{lemma}[Upper bound on loss from ellipsoidal restriction]\label{lem:restriction_loss}
If $\calW(\beta;\Delta) \in \mathcal{F}$, then we can upper bound as follows:
\[
\mathtt{OBJ}(\bw^\star_c) - \mathtt{OBJ}(\bw^\star_{res}) \leq \mathtt{OBJ}(\bw^\star_{env}) - \mathtt{OBJ}(\bw^\star_{res})
\] 

Where $\mathtt{OBJ}(\bw^\star_c)$ is the optimal value of Equation \ref{eqn:fair_opt}, i.e. the nonconvex $\beta$-fair constrained problem
\end{lemma}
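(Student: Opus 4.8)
The plan is to reduce the claim to the single inequality $\mathtt{OBJ}(\bw^\star_c) \le \mathtt{OBJ}(\bw^\star_{env})$ and then subtract $\mathtt{OBJ}(\bw^\star_{res})$ from both sides. Since the three optimization problems in play (Eq.~\eqref{eqn:fair_opt}, Eq.~\eqref{eqn:ellipsoidal_env}, Eq.~\eqref{eqn:ellipsoidal_res}) share the exact same objective $\mathtt{OBJ}(\cdot;\bw^\star)$ and differ only in feasible region, the whole argument is a feasible-set monotonicity argument: maximizing a fixed function over a larger set cannot decrease the optimal value.

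First I would record the three feasible regions. Problem~\eqref{eqn:fair_opt} is maximization over $\{\bw\in\calB(1):\Delta(\bw)\le\beta\} = \calB(1)\cap\calW(\beta;\Delta)$; problem~\eqref{eqn:ellipsoidal_env} is over $\calB(1)\cap\calE(\beta+LD)$; problem~\eqref{eqn:ellipsoidal_res} is over $\calE(\beta)$, which by Proposition~\ref{prop:ellipsoidal_restriction} already lies inside $\calB(1)$ (hence the extra ball constraint is redundant, as noted after Eq.~\eqref{eqn:ellipsoidal_res}). Next I would invoke Proposition~\ref{prop:ellipsoidal_envelope}, which gives $\calW(\beta;\Delta)\subseteq\calE(\beta+LD)$; intersecting both sides with $\calB(1)$ yields $\calB(1)\cap\calW(\beta;\Delta)\subseteq\calB(1)\cap\calE(\beta+LD)$. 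Therefore the feasible region of~\eqref{eqn:fair_opt} is contained in that of~\eqref{eqn:ellipsoidal_env}, and since the objective is identical, $\mathtt{OBJ}(\bw^\star_c)\le\mathtt{OBJ}(\bw^\star_{env})$.

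Finally, subtracting $\mathtt{OBJ}(\bw^\star_{res})$ from both sides of this inequality gives exactly
\[
\mathtt{OBJ}(\bw^\star_c) - \mathtt{OBJ}(\bw^\star_{res}) \le \mathtt{OBJ}(\bw^\star_{env}) - \mathtt{OBJ}(\bw^\star_{res}),
\]
which is the statement of the lemma. I would also remark for context that by Proposition~\ref{prop:ellipsoidal_restriction} (equivalently the first display in the proof of Lemma~\ref{lem:nonconvex_bounds}) we have $\mathtt{OBJ}(\bw^\star_{res})\le\mathtt{OBJ}(\bw^\star_c)$, so the quantity being bounded is genuinely a nonnegative ``restriction loss'', making the bound meaningful rather than vacuous.

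I do not anticipate a real obstacle here: the only thing to be careful about is bookkeeping on the feasible sets — in particular remembering that~\eqref{eqn:fair_opt} and~\eqref{eqn:ellipsoidal_env} both carry the $\calB(1)$ constraint while~\eqref{eqn:ellipsoidal_res} does not need it — and confirming that Proposition~\ref{prop:ellipsoidal_envelope}'s containment survives intersection with $\calB(1)$, which it trivially does since intersection is monotone in each argument.
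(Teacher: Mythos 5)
Your proposal is correct and matches the paper's proof, which likewise reduces the lemma to the single inequality $\mathtt{OBJ}(\bw^\star_c) \le \mathtt{OBJ}(\bw^\star_{env})$ obtained from the containment $\calW(\beta;\Delta)\subseteq\calE(\beta+LD)$ of Proposition~\ref{prop:ellipsoidal_envelope} (the paper states this in one line; you simply spell out the feasible-set monotonicity and the intersection with $\calB(1)$ more explicitly). No gaps.
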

\begin{proof}
    By proposition \ref{prop:ellipsoidal_envelope}, 
\[
\mathtt{OBJ}(\bw^\star_c) \leq \mathtt{OBJ}(\bw^\star_{env})
\]
\end{proof}
Before we get into the main proposition we will make a useful lemma:
\begin{lemma}[Projection onto convex set with origin gets closer to origin]\label{lem:closer_to_origin}
    For any closed, compact convex set $\mathcal{S}\subseteq \R^d$ such that $\mathbf{0}\in \mathcal{S}$ and any $\bw^\star \in \R^d$, we have the following:
\[
\|P_{\mathcal{S}}(\bw^\star)\|_2 \leq \|\bw^\star\|_2
\]
\end{lemma}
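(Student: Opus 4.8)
The plan is to use the first-order optimality (variational) characterization of the Euclidean projection onto a closed convex set. Write $p := P_{\mathcal{S}}(\bw^\star)$. Since $\mathcal{S}$ is nonempty, closed, and convex, $p$ is well-defined and is the unique point of $\mathcal{S}$ obeying the obtuse-angle condition $\langle \bw^\star - p,\, y - p\rangle \le 0$ for every $y \in \mathcal{S}$. First I would instantiate this inequality at the feasible point $y = \mathbf{0}$, which is permitted precisely because $\mathbf{0} \in \mathcal{S}$; this yields $\langle \bw^\star - p,\, -p\rangle \le 0$, equivalently $\langle \bw^\star,\, p\rangle \ge \|p\|_2^2$.

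Next I would combine this with Cauchy--Schwarz: $\|p\|_2^2 \le \langle \bw^\star, p\rangle \le \|\bw^\star\|_2\,\|p\|_2$. If $\|p\|_2 = 0$ the claimed bound holds trivially; otherwise divide both sides by $\|p\|_2 > 0$ to obtain $\|p\|_2 \le \|\bw^\star\|_2$, which is exactly the statement. Note that compactness of $\mathcal{S}$ is not actually needed beyond closedness, but it does no harm to assume it.

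Alternatively --- and this is the one-line version I would use if I wanted to lean on a fact already invoked elsewhere in the paper --- one can appeal directly to the non-expansiveness ($1$-Lipschitzness) of the metric projection onto a convex set: $\|P_{\mathcal{S}}(\bw^\star) - P_{\mathcal{S}}(\mathbf{0})\|_2 \le \|\bw^\star - \mathbf{0}\|_2$, and since $\mathbf{0} \in \mathcal{S}$ we have $P_{\mathcal{S}}(\mathbf{0}) = \mathbf{0}$, giving $\|P_{\mathcal{S}}(\bw^\star)\|_2 \le \|\bw^\star\|_2$ immediately. There is no substantive obstacle here; the statement is elementary, and the only point requiring a word of care in the variational argument is the degenerate case $p = \mathbf{0}$, which is dispatched trivially.
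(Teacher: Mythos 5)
Your main argument is exactly the paper's proof: the variational inequality $\langle \bw^\star - p, z - p\rangle \le 0$ instantiated at $z=\mathbf{0}$, then Cauchy--Schwarz and the division-by-$\|p\|_2$ step with the $p=\mathbf{0}$ case handled separately. The proposal is correct and matches the paper's approach (the non-expansiveness alternative you mention is also valid, but not what the paper uses).
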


\begin{proof}
Let $\mathbf{p} := P_{\mathcal{S}}(\bw^\star)$ denote the Euclidean projection of $\bw^\star$ onto $\mathcal{S}$; it exists and is unique by the projection theorem. By first-order optimality for
\(
\min_{\bz\in S} \tfrac12\|\bz-\bw^\star\|_2^2
\)
,
\[
\langle \bw^\star - \mathbf{p},\; \bz - \mathbf{p}\rangle \le 0 \qquad \forall\, \bz\in \mathcal{S} .
\]
Since $0\in \mathcal{S}$, taking $\bz=0$ yields $\langle \bw^\star, \mathbf{p}\rangle \ge \|\mathbf{p}\|_2^2$. By Cauchy--Schwarz,
\[
\|\mathbf{p}\|_2^2 \le \langle \bw^\star, \mathbf{p}\rangle \le \|\bw^\star\|_2\,\|\mathbf{p}\|_2.
\]
If $\mathbf{p}=0$ we are done; otherwise divide by $\|\mathbf{p}\|_2$ to obtain $\|\mathbf{p}\|_2 \le \|\bw^\star\|_2$. Hence $\|P_{\mathcal{S}}(\bw^\star)\|_2 \le \|\bw^\star\|_2$.
\end{proof}

\begin{proposition}[Accuracy and social welfare loss from ellipsoidal restriction]\label{prop:acc_sw_loss_restriction}
If $\calW(\beta;\Delta) \in \mathcal{F}$. Then we have the following upper bounds on optimality loss between the nonconvex fairness constrained problem (Equation \ref{eqn:fair_opt}) and the ellipsoidal restriction of Equation \ref{eqn:ellipsoidal_res}:

\[
\lvert\mathtt{ACC}(\bw^\star_c)- \mathtt{ACC}(\bw_{res}^\star) \rvert  \leq (2c)(a + e)\quad \lvert\mathtt{SW}(\bw^\star_c)- \mathtt{SW}(\bw_{res}^\star) \rvert  \leq \min\left\{ \|\tilde{\bw}\|_2, \sqrt{\beta + LD}\|\tilde{\bw}\|_{Q^{-1}} \right\} - \sqrt{\beta}\|\tilde{\bw}\|_{Q^{-1}}
\]
where
\begin{align*}
a &:= \mathbb{I}_{\bw^\star \notin \calE(\beta)} \left(\sqrt{\frac{\beta}{\lambda_d(Q)}} + \min\left\{1, \|\bw^\star\|, \sqrt{\frac{\beta + LD}{\lambda_d(Q)}}\right\}\right) \quad \\ %
c &:= \mathbb{I}_{\bw^\star \notin \calE(\beta)}\left(\sqrt{\frac{\beta}{\lambda_d(Q)}} + \|\bw^\star\| \right)\\
e&:= 2\sqrt{\frac{\beta}{\lambda_d(Q)}}\\
\tilde{\bw}&:= (CA_1^{-1}C^{\top} \Pi_1 + CA_2^{-1}C^{\top} \Pi_2)^\top \bw^\star
\end{align*}
    
\end{proposition}
\begin{proof}
Note that throughout this proof, we will set $\tilde{\beta} = \beta + LD$ for conciseness. It is useful to keep in mind that $\calE(\beta)$ is the \emph{restricted} ellipsoid and $\calE(\tilde{\beta})$ is the \emph{envelope} (larger) ellipsoid!

\textbf{We will start with the accuracy claim.}
This is a more intricate version of the Proposition \ref{prop:internal_ellipsoid_acc_bound} proof.

First let $P_{\calE(\beta)}(\bw)$ be the projection of $\bw$ onto $\calE(\beta)$. Further, let $s_{\max}:= \max_{\bw \in \calE(\tilde{\beta})}\|\bw\|_2$ Note that this is the largest $l2$ norm a $\bw$ can have and still be potentially $\bw \in \calE(\tilde{\beta})$. Also, let $s_{\max}':= \max_{\bw \in \calE(\beta)}\|\bw\|_2$ Note that this is the largest $l2$ norm a $\bw$ can have and still be potentially $\bw \in \calE(\beta)$. 

The following bound on the distance between the optimal envelope-constrained point and its projection onto the restriction ellipsoid will be useful:
\begin{claim}\label{claim:bound_env_proj_env}
\begin{equation}\label{eqn:bound_env_proj_env}
\|P_{\calE(\beta)}(\bw^\star_{env}) - \bw^\star_{env}\|_2 \leq \mathbb{I}_{\bw^\star \notin \calE(\beta)} \left(s_{\max}' + \min\left\{1, s_{\max}, \|\bw^\star\|\right\}\right)
\end{equation}
\end{claim}
\begin{proof}[Proof of Claim \ref{claim:bound_env_proj_env}]
Note that this comes form the fact that we have:
\begin{align*}
\|P_{\calE(\beta)}(\bw^\star_{env}) - \bw^\star_{env}\|_2 
&\leq \|P_{\calE(\beta)}(\bw^\star_{env})\| + \|\bw^\star_{env}\| \tag{triangle ineq}%
\end{align*}
And we can pedantically do case analysis:

CASE 1: $\bw^\star \in \calE(\beta)$

$\|P_{\calE(\beta)}(\bw^\star_{env}) - \bw^\star_{env}\|_2  =0$

CASE 2: $\bw^\star \in \calE(\tilde{\beta}), \in \calB(1), \notin \calE(\beta)$

If $\bw^\star$ is already in the ball and the envelope ellipsoid, then Opt problem of Equation \ref{eqn:ellipsoidal_env} doesn't do anything and the optimal point $\bw^\star_{env} = \bw^\star$.
\begin{align*}
    \|P_{\calE(\beta)}(\bw^\star_{env})\| + \|\bw^\star_{env}\| &= \|P_{\calE(\beta)}(\bw^\star_{env})\| + \|\bw^\star\|\\
    &=\|P_{\calE(\beta)}(\bw^\star_{env})\| + \min\left\{\|\bw^\star\|, 1, s_{\max}\right\} \tag{by case assumption}
\end{align*}

CASE 3: $\bw^\star \in \calB(1), \notin \calE(\tilde{\beta})$

If $\bw^\star$ is already in the ball, but not yet in the envelope ellipsoid, then $\bw^\star \neq \bw^\star_{env}$, but of course $\bw^\star_{env}$ must be $\in \calE(\tilde{\beta})$ and $\in \calB(1)$, thus it must be be that $\|\bw^\star_{env}\| \leq 1, s_{\max}$ Additionally, by Lemma \ref{lem:closer_to_origin}, we have $\|\bw^\star_{env}\|\leq \|\bw^\star\|$
\begin{align*}
    \|P_{\calE(\beta)}(\bw^\star_{env})\| + \|\bw^\star_{env}\| &\leq \|P_{\calE(\beta)}(\bw^\star_{env})\| + \min\left\{s_{\max}, \|\bw^\star\|, 1\right\}
\end{align*}

CASE 4: $\bw^\star \in \calE(\tilde{\beta}), \notin \calB(1), \notin \calE(\beta)$

This means $\bw^\star$ is in the envelope ellipsoid, but not in the ball (or the restriction ellipsoid). $\bw^\star \neq \bw^\star_{env}$ because recall that the feasible region of the envelope problem is the \emph{intersection} of the ball and the envelope, but of course $\bw^\star_{env}$ must be $\in \calE(\tilde{\beta})$ and $\in \calB(1)$, thus it must be be that $\|\bw^\star_{env}\| \leq 1, s_{\max}$ Additionally, by Lemma \ref{lem:closer_to_origin}, we have $\|\bw^\star_{env}\|\leq \|\bw^\star\|$
\begin{align*}
    \|P_{\calE(\beta)}(\bw^\star_{env})\| + \|\bw^\star_{env}\| &\leq \|P_{\calE(\beta)}(\bw^\star_{env})\| + \min\left\{s_{\max}, \|\bw^\star\|, 1\right\} \tag{by case assumption}
\end{align*}

CASE 5: $\bw^\star \notin \calE(\tilde{\beta}), \notin \calB(1), \notin \calE(\beta)$

$\bw^\star \neq \bw^\star_{env}$, but of course $\bw^\star_{env}$ must be $\in \calE(\tilde{\beta})$ and $\in \calB(1)$, thus it must be be that $\|\bw^\star_{env}\| \leq 1, s_{\max}$ Additionally, by Lemma \ref{lem:closer_to_origin} we have $\|\bw^\star_{env}\|\leq \|\bw^\star\|$
\begin{align*}
    \|P_{\calE(\beta)}(\bw^\star_{env})\| + \|\bw^\star_{env}\| &\leq \|P_{\calE(\beta)}(\bw^\star_{env})\| + \min\left\{s_{\max}, \|\bw^\star\|, 1\right\} \tag{by case assumption}
\end{align*}

Finally notice that $P_{\calE(\beta)}(\bw^\star_{env}) \in \calE(\beta)$ thus $\|P_{\calE(\beta)}(\bw^\star_{env})\| \leq s_{\max}'$

This completes the proof for Claim \ref{claim:bound_env_proj_env}
\end{proof}

Now we will use the bounds of Claim \ref{claim:bound_env_proj_env} to create another useful bound:
\begin{claim}\label{claim:bound_res_env}
\[
\|\bw^\star_{res} - \bw^\star_{env}\|_2 \leq \mathbb{I}_{\bw^\star \notin \calE(\beta)} \left(s_{\max}' + \min\left\{1, s_{\max}, \|\bw^\star\|\right\}\right) +\mathbb{I}_{\bw^\star \notin \calE(\tilde{\beta})\cap \calB(1)}\max\{(\|\bw^\star\|-s_{\min})_+, (\|\bw^\star\|_2-1)_+\}
\]
\end{claim}
\begin{proof}[Proof of Claim \ref{claim:bound_res_env}]

\begin{align*}
\|\bw^\star_{res} - \bw^\star_{env}\|_2 &= \|P_{\calE(\beta)}(\bw^\star) - \bw^\star_{env}\|_2 \tag{definition}\\
&\leq \|P_{\calE(\beta)}(\bw^\star_{env}) - \bw^\star_{env}\|_2 + \|P_{\calE(\beta)}(\bw^\star_{env}) - P_{\calE(\beta)}(\bw^\star)\|_2 \tag{triangle ineq}\\
&\leq \|P_{\calE(\beta)}(\bw^\star_{env}) - \bw^\star_{env}\|_2 + 2s_{\max}' \tag{diameter of $\calE(\beta)$}\\
&\leq \mathbb{I}_{\bw^\star \notin \calE(\beta)} \left(s_{\max}' + \min\left\{1, s_{\max}, \|\bw^\star\|\right\}\right) +  2s_{\max}' \tag{Claim \ref{claim:bound_env_proj_env}}\\
\end{align*}
    
\end{proof}

Finally, we will need one more bound.
\begin{claim}\label{claim:another_one}
$
\|\bw^\star_{res}-\bw^\star\|_2 \leq \mathbb{I}_{\bw^\star \notin \calE(\beta)}(s'_{\max} + \|\bw^\star\| )
$
\end{claim}
\begin{proof}[Proof of Claim \ref{claim:another_one}]
\begin{align*}
\|\bw^\star_{res}-\bw^\star\|_2 &\leq \mathbb{I}_{\bw^\star \notin \calE(\beta)}(\|P_{\calE(\beta)}(\bw^\star)\| + \|\bw^\star\| )\tag{triangle ineq, $\calE(\beta) \in \calB(1)$}\\
&\leq \mathbb{I}_{\bw^\star \notin \calE(\beta)}(s'_{\max} + \|\bw^\star\| )
\end{align*}
\end{proof}

And finally using Lemma \ref{lem:acc} and Claims \ref{claim:bound_env_proj_env}, \ref{claim:bound_res_env}, and \ref{claim:another_one}:
  \begin{align*}
     \lvert \mathtt{ACC}(\bw_{res}^\star)- \mathtt{ACC}(\bw^\star_{env})\lvert &= 
     \lvert{-\|\bw^\star_{res} - \bw^\star\|_2^2 +\|\bw^\star_{env} - \bw^\star\| }|_2^2 \rvert\\
     &= \lvert (\|\bw^\star_{res} - \bw^\star \|_2 + \|\bw^\star_{env} - \bw^\star \|_2)(|\bw^\star_{res} - \bw^\star \|_2 - \|\bw^\star_{env} - \bw^\star \|_2) \rvert \tag{$a^2-b^2 = (a+b)(a-b)$}\\
     &=(\|\bw^\star_{res} - \bw^\star \|_2 + \|\bw^\star_{env} - \bw^\star \|_2)\lvert(\|\bw^\star_{res} - \bw^\star \|_2 - \|\bw^\star_{env} - \bw^\star \|_2) \rvert \tag{nonneg factor}\\
     &\leq (\|\bw^\star_{res}-\bw^\star\|_2+\|\bw_{env}^\star-\bw^\star\|_2  ) (\|\bw^\star_{res}-\bw_{env}^\star\|_2)  \tag{reverse triangle ineq}\\
     &\leq (2\|\bw^\star_{res}-\bw^\star\|_2) (\|\bw^\star_{res}-\bw_{env}^\star\|_2)  \tag{$\calE(\beta) \subseteq \calE(\tilde{\beta})$}\\
     &\leq (2c)(a + e) \tag{Claims \ref{claim:bound_env_proj_env}, \ref{claim:another_one}, and \ref{claim:bound_res_env}}
 \end{align*}
 where $a := \mathbb{I}_{\bw^\star \notin \calE(\beta)} \left(s_{\max}' + \min\left\{1, s_{\max}, \|\bw^\star\|\right\}\right)$, 
 $c:= \mathbb{I}_{\bw^\star \notin \calE(\beta)}(s'_{\max} + \|\bw^\star\| )$, and $e = 2s'_{\max}$

 Finally, Lemma \ref{lem:ellipsoid_radii} gives the solutions for $s_{\max}, s'_{\max}$.
Taking note of Lemma \ref{lem:restriction_loss} yields the bounds that are in the Proposition for accuracy

\textbf{We move on to the social welfare claim.}

We want to find an upper bound on:
$\mathtt{SW}(\bw^\star_{env}) - \mathtt{SW}(\bw^\star_{res})
$ 
And then invoke Lemma \ref{lem:restriction_loss}. This will turn out to be rather easy. We will simply present an upper bound on $\mathtt{SW}(\bw^\star_{env})$ and then a closed form for $\mathtt{SW}(\bw^\star_{res})$.

NOTE: By Lemma \ref{lem:sw} recall that social welfare maximization is functionally a linear objective. Because the constants of $\mathtt{SW}(\bw^\star_{env})$ will cancel out with $- \mathtt{SW}(\bw^\star_{res})$ we will just ignore them for this whole proof.

\begin{claim}\label{claim:env_sw}
    $\mathtt{SW}(\bw^\star_{env}) \leq \min\left\{ \|(CA_1^{-1}C^{\top} \Pi_1 + CA_2^{-1}C^{\top} \Pi_2)^\top \bw^\star\|_2, \sqrt{\tilde{\beta}}\|(CA_1^{-1}C^{\top} \Pi_1 + CA_2^{-1}C^{\top} \Pi_2)^\top \bw^\star\|_{Q^{-1}}\right\}$
\end{claim}
\begin{proof}[Proof of Claim \ref{claim:env_sw}]
    By Lemma \ref{lem:sw} recall that social welfare maximization is functionally a linear objective. Specifically: $\langle\mathbf{c}, \bw \rangle$ where $\mathbf{c} := (CA_1^{-1}C^{\top} \Pi_1 + CA_2^{-1}C^{\top} \Pi_2)^\top \bw^\star$

    Clearly, because the envelope constrained problem (Equation \ref{eqn:ellipsoidal_env}) for Social welfare is linear maximization over the convex space: $\calE(\tilde{\beta}) \cap \calB(1)$, it will be the case that 
\[
\mathtt{SW}(\bw^\star_{env}) \leq \min\left\{\mathtt{SW}(\bw^\star_{u}), \mathtt{SW}(\bw^\star_{env'})\right\}
\]
Where $\bw^\star_{env'}$ is the solution to the following problem:
\begin{equation}\label{eqn:just_ellipsoidal_env}
\max_{\bw \in \R^d} \mathtt{SW}(\bw;\bw^\star),\quad \text{subject to}\quad \bw \in \calE(\tilde{\beta})
\end{equation}
From Lemma \ref{lem:closed_form_opt} we have solutions for these problems
and substituting into $\langle\mathbf{c}, \bw \rangle$ where $\mathbf{c} := (CA_1^{-1}C^{\top} \Pi_1 + CA_2^{-1}C^{\top} \Pi_2)^\top \bw^\star$ We get:
\[
\mathtt{SW}(\bw^\star_{u}) = \|(CA_1^{-1}C^{\top} \Pi_1 + CA_2^{-1}C^{\top} \Pi_2)^\top \bw^\star\|_2
\]
\[
\mathtt{SW}(\bw^\star_{env'}) = \sqrt{\tilde{\beta}}\|(CA_1^{-1}C^{\top} \Pi_1 + CA_2^{-1}C^{\top} \Pi_2)^\top \bw^\star\|_{Q^{-1}}
\]
\end{proof}

Now we can calculate the closed form for the restriction ellipsoid in the closed form.
\begin{claim}\label{claim:res_sw}
    $\mathtt{SW}(\bw^\star_{res}) = \sqrt{\beta}\|(CA_1^{-1}C^{\top} \Pi_1 + CA_2^{-1}C^{\top} \Pi_2)^\top \bw^\star\|_{Q^{-1}}$
\end{claim}
\begin{proof}[Proof of Claim \ref{claim:res_sw}]
By Lemma \ref{lem:sw} recall that social welfare maximization is functionally a linear objective. Specifically: $\langle\mathbf{c}, \bw \rangle$ where $\mathbf{c} := (CA_1^{-1}C^{\top} \Pi_1 + CA_2^{-1}C^{\top} \Pi_2)^\top \bw^\star$
From Lemma \ref{lem:closed_form_opt} we have solution
and substituting into $\langle\mathbf{c}, \bw \rangle$ where $\mathbf{c} := (CA_1^{-1}C^{\top} \Pi_1 + CA_2^{-1}C^{\top} \Pi_2)^\top \bw^\star$ We get:
\[
\mathtt{SW}(\bw^\star_{res}) = \sqrt{\beta}\|(CA_1^{-1}C^{\top} \Pi_1 + CA_2^{-1}C^{\top} \Pi_2)^\top \bw^\star\|_{Q^{-1}}
\]
\end{proof}
Using Lemma \ref{lem:restriction_loss}, we get the social welfare claim in the Proposition.

\end{proof}

So what can we make of all these bounds? It's important to notice that the bounds of Propositions \ref{nonconvex_acc_sw_bounds} and \ref{prop:acc_sw_loss_restriction} look very similar! Are we sure that the restriction loss bounds of Proposition \ref{prop:acc_sw_loss_restriction} are telling us anything more than what Proposition \ref{nonconvex_acc_sw_bounds} already told us? After all, the true optimality loss between unconstrained SE and the ellipsoidal restriction clearly upper bounds the loss between $\beta$-fair SE and the ellipsoidal restriction, so we'd want Proposition \ref{prop:acc_sw_loss_restriction} to be stricter than Proposition \ref{nonconvex_acc_sw_bounds}, otherwise it is clearly (too) loose. Turns out, in many cases, Proposition \ref{prop:acc_sw_loss_restriction} is \emph{tighter} and thus telling us something ``better'' about how much the restriction hurts the principal than what we could glean from Proposition \ref{nonconvex_acc_sw_bounds}. Formally:

\begin{proposition}[Sufficient conditions for Prop \ref{prop:acc_sw_loss_restriction} to be more informative than Prop \ref{nonconvex_acc_sw_bounds}]\label{prop:when_restriction_bounds_good}
Given the $\beta$-fairness space is such that $\calW(\beta;\Delta) \in \mathcal{F}$, if the principal convexifies his $\beta$-fair problem using the ellipsoidal restriction problem of Equation \ref{eqn:ellipsoidal_res} then as long as the following hold:
\begin{enumerate}
    \item Ground-truth policy is not already in the restricted ellipsoid (i.e., $\bw^\star \notin \mathcal{E}(\beta)$)
    \item Envelope ellipsoid is strictly inside the unit ball (i.e., $LD < \lambda_d(Q)-\beta$, see Lemma \ref{lem:ellipsoid_in_ball})
    \item Ground truth policy is strictly outside the ball envelope of the envelope ellipsoid (i.e., $\|\bw^\star\| > \sqrt{\frac{\beta + LD}{\lambda_d(Q)}}$, see Lemma \ref{lem:ellipsoid_radii})
\end{enumerate}

The principal gets a (strictly) tighter bound on restriction optimality loss using Proposition \ref{prop:acc_sw_loss_restriction} rather than Proposition \ref{nonconvex_acc_sw_bounds}.
    
\end{proposition}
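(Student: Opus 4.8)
The plan is to prove the claim by a direct, term-by-term comparison of the accuracy and social-welfare bounds produced by Proposition~\ref{prop:acc_sw_loss_restriction} against those produced by Proposition~\ref{nonconvex_acc_sw_bounds}, showing that under conditions (1)--(3) the former are strictly smaller. First I would record why this comparison is the relevant one: since $\mathtt{OBJ}(\bw^\star_{res}) \le \mathtt{OBJ}(\bw^\star_c) \le \mathtt{OBJ}(\bw^\star_u)$ (Lemma~\ref{lem:nonconvex_bounds} together with Proposition~\ref{prop:ellipsoidal_restriction}), the bound of Proposition~\ref{nonconvex_acc_sw_bounds} is itself a valid upper bound on the restriction loss $\mathtt{OBJ}(\bw^\star_c) - \mathtt{OBJ}(\bw^\star_{res})$, so ``more informative'' here just means ``strictly smaller upper bound on that same quantity.''

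Next I would handle accuracy. Condition (1), $\bw^\star \notin \calE(\beta)$, sets every indicator $\mathbb{I}_{\bw^\star \notin \calE(\beta)}$ appearing in both bounds to $1$; in particular $c = q = \sqrt{\beta/\lambda_d(Q)} + \|\bw^\star\| > 0$ and $e = t = 2\sqrt{\beta/\lambda_d(Q)}$, so the accuracy bounds $(2c)(a+e)$ and $(2q)(t+s)$ differ only through $a$ versus $s$. It then suffices to show that $a < s$:
\[
\min\Bigl\{1,\ \|\bw^\star\|,\ \sqrt{\frac{\beta + LD}{\lambda_d(Q)}}\Bigr\} \;<\; \min\bigl\{1,\ \|\bw^\star\|\bigr\}.
\]
Condition (2) gives $\sqrt{(\beta+LD)/\lambda_d(Q)} < 1$ and condition (3) gives $\sqrt{(\beta+LD)/\lambda_d(Q)} < \|\bw^\star\|$, so the left-hand minimum equals $\sqrt{(\beta+LD)/\lambda_d(Q)}$, which is strictly below $\min\{1,\|\bw^\star\|\}$; hence $a < s$, and since $q > 0$ we conclude $(2c)(a+e) < (2q)(t+s)$.

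For social welfare the bounds carry no indicators, and Proposition~\ref{prop:acc_sw_loss_restriction} differs from Proposition~\ref{nonconvex_acc_sw_bounds} only in replacing $\|\tilde{\bw}\|_2$ by $\min\{\|\tilde{\bw}\|_2,\ \sqrt{\beta+LD}\,\|\tilde{\bw}\|_{Q^{-1}}\}$, so it is enough to show $\sqrt{\beta+LD}\,\|\tilde{\bw}\|_{Q^{-1}} < \|\tilde{\bw}\|_2$. Here I would use $Q \succ 0$ and the Rayleigh-quotient bound $\|\tilde{\bw}\|_{Q^{-1}}^2 \le \|\tilde{\bw}\|_2^2/\lambda_d(Q)$ to get $\sqrt{\beta+LD}\,\|\tilde{\bw}\|_{Q^{-1}} \le \sqrt{(\beta+LD)/\lambda_d(Q)}\,\|\tilde{\bw}\|_2$, which is strictly below $\|\tilde{\bw}\|_2$ by condition (2), provided $\tilde{\bw} \neq \mathbf{0}$ (if $\tilde{\bw} = \mathbf{0}$ the social-welfare loss and both bounds are identically $0$, a degenerate case excluded under the regularity conditions used elsewhere). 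Combining the two cases yields the claim.

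The hard part is not conceptual but careful bookkeeping: one must correctly align the two families of constants ($c$ vs.\ $q$, $e$ vs.\ $t$, $a$ vs.\ $s$, and the extra $\min$ term in the SW bound) and verify that conditions (1)--(3) are exactly what activates the indicators and makes the ``extra'' min terms in Proposition~\ref{prop:acc_sw_loss_restriction} strictly binding. The only genuine case split is the degenerate $\tilde{\bw} = \mathbf{0}$ situation in the social-welfare comparison.
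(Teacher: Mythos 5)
Your proposal is correct and follows essentially the same route as the paper: under conditions (1)--(3) the two accuracy bounds share the factors $c=q$ and $e=t$, so the comparison reduces to $a<s$, i.e.\ $\sqrt{(\beta+LD)/\lambda_d(Q)} < \min\{1,\|\bw^\star\|\}$, and the social-welfare comparison reduces to $\sqrt{\beta+LD}\,\|\tilde{\bw}\|_{Q^{-1}} < \|\tilde{\bw}\|_2$. The only (cosmetic) difference is that you establish the latter inequality via the Rayleigh quotient for $Q^{-1}$, whereas the paper argues geometrically from strict containment of the envelope ellipsoid in the unit ball via Lemma~\ref{lem:ellipsoid_in_ball}; your explicit handling of the degenerate case $\tilde{\bw}=\mathbf{0}$ is if anything slightly more careful.
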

\begin{proof}
We will just compare the bounds promised between the two propositions under conditions.

CONDITION SET: $\bw^\star \notin \calE(\beta)$, $\beta + LD < \lambda_d(Q)$, and $\|\bw^\star\| > \sqrt{\frac{\beta + LD}{\lambda_d(Q)}}$ 

Thus for accuracy bounds we can simplify the upper bounds given by the propositions.

RHS of Proposition \ref{nonconvex_acc_sw_bounds} says:
\begin{center}
$ 2\left(\sqrt{\frac{\beta}{\lambda_d(Q)}} + \|\bw^\star\|\right)\left(2 \sqrt{\frac{\beta}{\lambda_d(Q)}} + \sqrt{\frac{\beta}{\lambda_d(Q)}}+ \min\left\{1, \|\bw^\star\|\right\} \right)   $
\end{center}

While RHS of Proposition \ref{prop:acc_sw_loss_restriction} says:
\[
2\left(\sqrt{\frac{\beta}{\lambda_d(Q)}} + \|\bw^\star\|\right)\left(2 \sqrt{\frac{\beta}{\lambda_d(Q)}} + \sqrt{\frac{\beta}{\lambda_d(Q)}} + \sqrt{\frac{\beta + LD}{\lambda_d(Q)}}\right)
\]
Clearly, Proposition \ref{prop:acc_sw_loss_restriction} is tighter!

Now for social welfare:

Here, before we apply bounds, we should recall an important fact. By Lemma \ref{lem:ellipsoid_in_ball}, $\beta + LD < \lambda_d(Q) \implies \calE(\beta + LD) \subseteq \calB(1)$! This means the envelope ellipsoid is (strictly) inside the ball. Therefore, the social welfare of the envelope ellipsoid problem (equation \ref{eqn:ellipsoidal_env}) is strictly less than the unconstrained problem (equation \ref{eqn:opt}). Formally:
\[
\|\tilde{\bw}\|_2 > \sqrt{\beta + LD}\|\tilde{\bw}\|_{Q^{-1}}
\]
Now we can see that:

RHS of Proposition \ref{nonconvex_acc_sw_bounds} says:
\[
 \|\tilde{\bw}\|_2 - \sqrt{\beta}\|\tilde{\bw}\|_{Q^{-1}}
\]
While RHS of Proposition \ref{prop:acc_sw_loss_restriction} says:
\[ 
\sqrt{\beta + LD}\|\tilde{\bw}\|_{Q^{-1}}  - \sqrt{\beta}\|\tilde{\bw}\|_{Q^{-1}}
\]
Clearly, Proposition \ref{prop:acc_sw_loss_restriction} is tighter!

\end{proof}

\subsection{Supplemental material for Section \ref{sec:experiments}} \label{app:experiments}
\subsubsection{Supplemental material for experiment set-up}\label{app:experiments_setup}
 We set the parameters $C, \Pi_1, \Pi_2, A_1, A_2, \Pi_D,$ and $\bw^\star$ as follows. 
 
\xhdr{Causal graph}
We follow prior work on recourse/causal modeling for the $\mathtt{Adult}$ dataset \cite{DBLP:journals/corr/abs-2010-06529, nabi_causal_graph, chiappa2018pathspecificcounterfactualfairness} (see their cited sources for the Structural Causal Model (SCM)) and instantiate an 8-node acyclic causal graph with nodes 
$\{$sex, age, western, married, edu-num, workclass, occupation, hours$\} $. Edge weights are sampled as non–negative values on the existing edges (respecting a topological order so the adjacency is strictly upper–triangular), yielding a weighted adjacency matrix $A\in\mathbb{R}^{8\times 8}$ and the corresponding contribution matrix $C$.

\xhdr{Groups and projectors}
Following \cite{chara_icml}, we form three sets of different 2 group ``splits''. Groups sets are as follows:
\begin{itemize}
    \item \emph{Age} ($\le\!35$ vs.\ $>\!35$),
    \item \emph{Country} (western world vs.\ other)
    \item \emph{Education} ($\ge$ high-school vs.\ $<$ high-school).
\end{itemize}
For each split $g\in\{1,2\}$. We build a projection matrix $\Pi_g\in\mathbb R^{d\times d}$ by running SVD on the data points belonging to group $g$, taking the top $k$ right singular vectors ($k=5$), and setting $\Pi_g \;=\; V_{g,k}V_{g,k}^\top $. 

\xhdr{Cost-matrices}
We analyse 3 distinct cases for the costs matrices. 
\begin{enumerate}
    \item We use uniform costs $A_1=A2$ with $A_1 = I$
    \item We use non-uniform costs with $A_2= 2A_1$ and $A_1 = I$
    \item We use non-uniform $A_2=2A_1$ random cost matrices ($A_1$ is sampled at random). 
\end{enumerate}
Their use is denoted in figure captions.

\xhdr{Desirability}
We choose \emph{education}, \emph{occupation}, and \emph{workclass} as
\textbf{desirable}, since these attributes are realistic to improve and likely to have downstream, external effects outside of income. Thus, to external entities (e.g. government bodies) they should be desirable to incentivize. We specifically use for $\Pi_D$ a diagonal matrix with 1's in the coordinates that correspond to desirable features and zeros everywhere else.

\xhdr{Ground truth rule} We train a logistic-regression classifier on the \textsc{Adult} dataset restricted to the eight variables that correspond to the nodes of our SCM (see above). Let $\tilde{\mathbf w}\in\mathbb{R}^8$ denote the learned coefficient vector. To make the comparisons, we project this vector onto $\tilde \bw$ the unit $\ell_2$ ball, let $\bw^\star$ denote the projection. This normalization ensures the baseline is bounded for the social welfare problem while at the same time yields scale-invariant results.
Following our theory, our \emph{accuracy loss} reports $\|\mathbf w-\mathbf w^\star\|_2^2$ (See Lemma \ref{lem:acc}), and our \emph{social welfare} uses the linear utility $u(\mathbf w)=\mathbf c^\top \mathbf w$ with $\mathbf c$ defined in Lemma \ref{lem:sw}. Constrained learners are optimized under the $\ell_1$-$\beta$-desirability fairness constraint $\|M_g\,\mathbf w\|_1\le \beta$ (see example \ref{ex:l1}).
\par It is worth noticing that the unconstrained optimum for accuracy is the same for all the 3 splits (pair of groups). That is natural since the objective does depend on group characteristics. However, for the social-welfare we have 3 distinct unconstrained optimal values since the objective function (see Lemma \ref{lem:sw}) does depend on the specific matrices of each group. 

\subsubsection{Supplemental material for Section \ref{sec:experiments_results}}\label{app:experiments_results}
In this part we present the experiments using the $\ell_2$-fairness desirability constraint (as defined in Example \ref{ex:l2}). 
\par The results for $\ell_2$ remain consistent with the results for $\ell_1$ with the only noticeable difference being that the optimal accuracy is reached at a much faster rate (smaller value of $\beta$). This can be explained by the fact that $\|M\bw\|_2 \leq \|M\bw\|_1$  and therefore the $\ell_2$ fairness constraint is more relaxed that the $\ell_1$-fairness.
\begin{figure}[h]
  \centering
  \begin{subfigure}[t]{0.49\linewidth}
    \includegraphics[width=\linewidth]{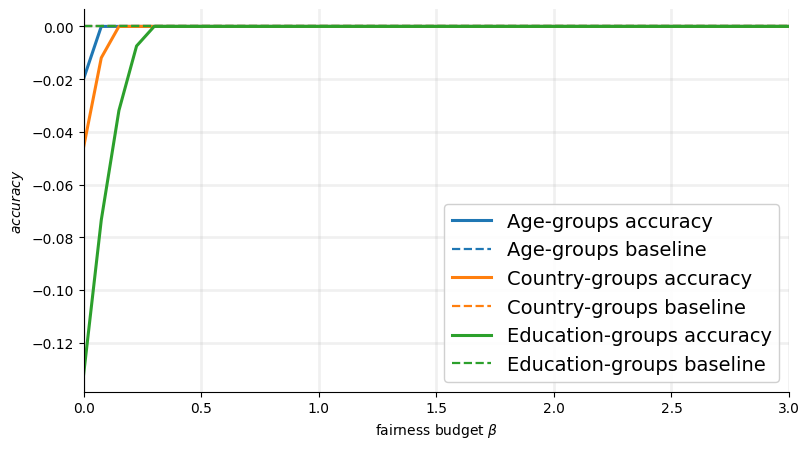}
    \caption{\textit{Accuracy comparison}. Uniform costs $A_1=A_2$ }
    \label{fig:accuracy_uniform_l2}
  \end{subfigure}\hfill
  \begin{subfigure}[t]{0.49\linewidth}
    \includegraphics[width=\linewidth]{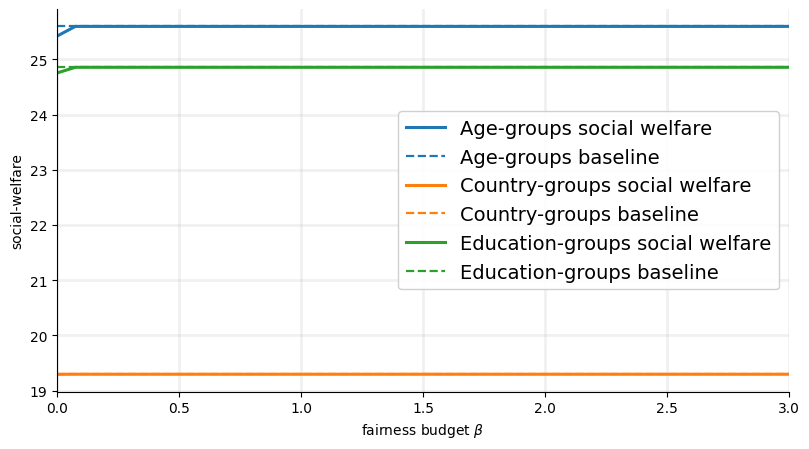}
    \caption{\textit{Social-welfare comparison}: Uniform costs $A_1=A_2$}
    \label{fig:social-welfare_l2}
  \end{subfigure}

  \medskip %

  \begin{subfigure}[t]{0.49\linewidth}
    \includegraphics[width=\linewidth]{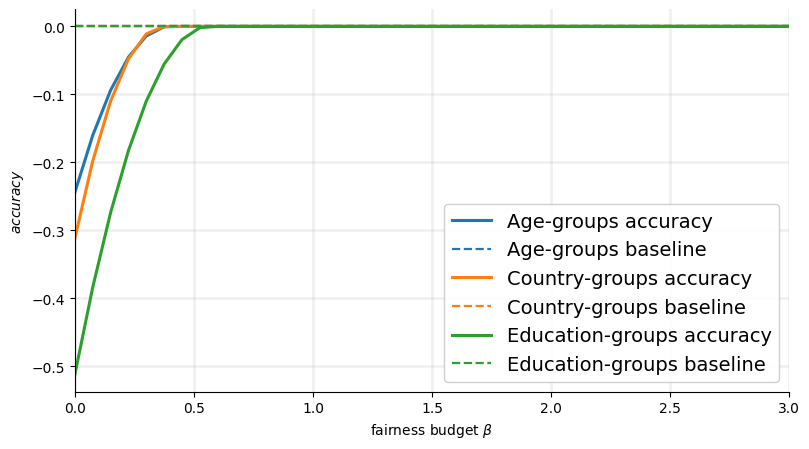}
    \caption{\textit{Accuracy comparison}: Non-uniform costs $A_1=2A_2=2I$ }
    \label{fig:accuracy_non_uniform_l2}
  \end{subfigure}\hfill
  \begin{subfigure}[t]{0.49\linewidth}
    \includegraphics[width=\linewidth]{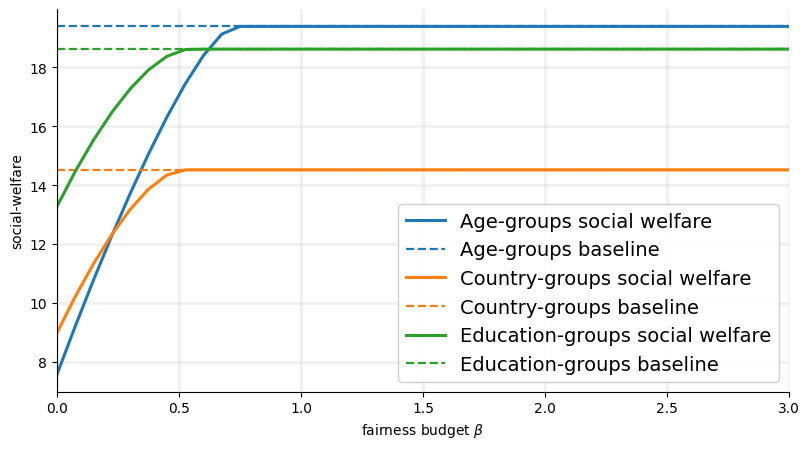}
    \caption{\textit{Social-welfare comparison}: Non-uniform costs $A_1=2A_2=2I$ }
    \label{fig:social_welfare_non_uniform_l2}
  \end{subfigure}\hfill
  \caption{Optimal values for varying $\beta$ for different group splits under $\ell_2$-fairness constraint}
  \label{fig:appendix-experiments}
  
\end{figure}

\par Interestingly, when we allow the cost matrices to be more complex than the unit (e.g random) then information disparities become more irrevelant. However, that's not suprising considering the fact that the fairness-desirability function values equally the desirability matrix and the projection matrices of the groups. 

\begin{figure}[H]
  \begin{subfigure}[t]{0.49\linewidth}
    \includegraphics[width=\linewidth]{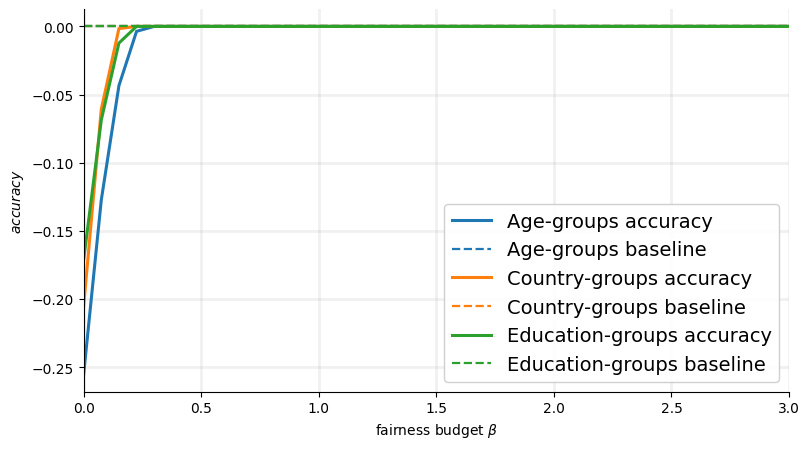}
    \caption{\textit{Accuracy comparison}.}
    \label{fig:accuracy-non-uniform-random-l1}
  \end{subfigure}\hfill
  \begin{subfigure}[t]{0.49\linewidth}
    \includegraphics[width=\linewidth]{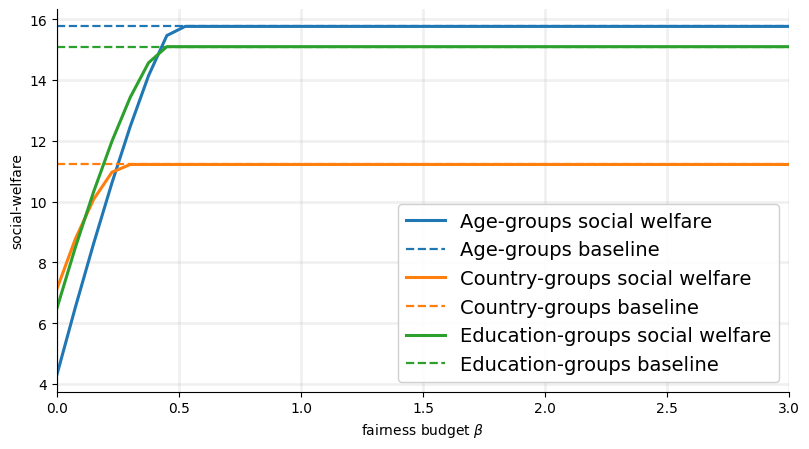}
    \caption{\textit{Social-welfare comparison}.}
    \label{fig:social_welfare_non_uniform_random}
  \end{subfigure}
  \caption{Optimal values for varying $\beta$ for different group splits under non-uniform random cost matrices $A_1=2A_2$ and the $\ell_1$-fairness constraint}
\end{figure}

\subsection{Supplemental Material for the $\mathtt{TAIWAN}$ Experiment}
\label{app:experiments_setup_TAIWAN}
\subsubsection{Setup}
We now describe the experimental setup for the second dataset. The
construction follows the $\mathtt{Adult}$ experiment, with one additional goal:
we compute the theoretical upper bounds from \Cref{tab:convex_bounds} and
compare them with the realized accuracy and social-welfare losses.

\paragraph{Dataset.}
We use the $\mathtt{TAIWAN}$ credit-default dataset (see ~\citet{default_of_credit_card_clients_350}), which contains financial and demographic information about credit-card clients. The prediction task is to identify clients who are likely to default on repayment. We use this dataset as a second real-world instance for evaluating how conservative our theoretical bounds are in practice.

\paragraph{Causal graph.}
To model the interactions among the engineered variables, we use a fixed directed acyclic graph adapted from the causal structure proposed by~\citet{pitso2026bayesian}. The graph contains six feature nodes together with the
outcome node $\texttt{DEFAULT\_PROB}$. The variables $\texttt{PAY\_HISTORY}$, $\texttt{BILL\_PAY}$,
$\texttt{AGE}$, $\texttt{EDU}$, and $\texttt{SEX\_MAR}$ are taken to influence $\texttt{LIMIT\_BAL}$, reflecting the idea that repayment behavior and demographic characteristics may affect the credit limit assigned to a client. These variables, together with $\texttt{LIMIT\_BAL}$ itself, are then allowed to
directly influence the default outcome. As a result, the effect of repayment
and demographic variables on default may occur both directly and indirectly
through the assigned credit limit.

\paragraph{Groups and projectors.}
Following the same construction as in the $\mathtt{Adult}$ experiment, we study
three two-group population splits derived from the original raw dataset:
\begin{itemize}
    \item \emph{Age}: younger versus older clients,
    \item \emph{Education}: higher-education versus lower-education clients,
    \item \emph{Marriage}: married versus not-married clients.
\end{itemize}
Each split induces a corresponding pair of projection matrices
$\Pi_1,\Pi_2$. For each group $g\in\{1,2\}$, we construct $\Pi_g$ by restricting
the data to the points belonging to group $g$, computing the top $k$ principal
directions of that subsample, and setting
\[
    \Pi_g = V_{g,k}V_{g,k}^{\top},
\]
where $V_{g,k}$ contains the top $k$ right singular vectors. In our experiments
we take $k=5$. Thus, $\Pi_g$ represents the dominant empirical variation
subspace of group $g$.

\paragraph{Cost matrices.}
We consider several cost models in order to study how cost heterogeneity affects
both the realized losses and the tightness of the theoretical bounds:
\begin{enumerate}
    \item \emph{Uniform costs}: $A_1=A_2=I$.
    \item \emph{Non-uniform costs}: $A_1=2A_2=2I$, so one group faces twice the
    cost of the other group.
    \item \emph{Random SPD costs}: $A_1$ and $A_2$ are sampled as symmetric
    positive definite matrices.
\end{enumerate}
Their use is indicated in the corresponding figure captions.

\paragraph{Desirability.}
We encode the desirable feature directions using a projection matrix $\Pi_D$. In particular we set : \(
D=\{\texttt{PAY\_HISTORY},\texttt{BILL\_PAY},\texttt{EDU}\},\). 
These directions represent the coordinates that the principal or an external
decision-maker wishes to incentivize. While this choice is modeling-dependent, it is a reasonable one for the
TAIWAN-Credit setting: these variables capture aspects of repayment behavior,
billing/payment status, and education that are plausibly relevant to credit
worthiness. As in the main experiment, $\Pi_D$ is
taken to be a diagonal projection matrix, with ones on the coordinates
corresponding to desirable features and zeros elsewhere.

\paragraph{Ground-truth rule and unconstrained benchmark.}
As in the previous experiment, the accuracy objective is $f(\mathbf w) =
    -\|\mathbf w-\mathbf w^\star\|_2^2,
$
where \(\mathbf w^\star\) is the unconstrained maximizer of the accuracy
objective; in particular, \(f(\mathbf w^\star)=0\) (see Lemma~\ref{lem:acc}).
Following the maximization convention used in the theory, we report the signed
accuracy change
\(L_{\mathrm{acc}}(\beta)
    =
    f(\mathbf w_c^\star)-f(\mathbf w_u^\star)
    =
    -\|\mathbf w_c^\star-\mathbf w_u^\star\|_2^2
    \le 0,
\)
where \(\mathbf w_c^\star\) denotes the optimizer under the
\(\ell_1\)-\(\beta\)-desirability fairness constraint and
\(\mathbf w_u^\star\) is the unconstrained accuracy optimizer. In this real-world dataset it is $w_u^\star=\mathbf w^\star\ $. 

For social welfare, we use the linear objective \(h(\mathbf w)=\mathbf c^\top \mathbf w,\)
with \(\mathbf c\) defined in Lemma~\ref{lem:sw}. In the plots, we report the
positive welfare loss relative to the unconstrained welfare benchmark,
$L_{\mathrm{SW}}(\beta)
    =
    h(\mathbf w_u^\star)-h(\mathbf w_c^\star)
    \ge 0,$ where \(\mathbf w_u^\star\in\arg\max_{\|\mathbf w\|_2\le 1} h(\mathbf w)\).
Constrained learners are optimized under the
\(\ell_1\)-\(\beta\)-desirability fairness constraint
\[
    \|M_g\mathbf w\|_1\le \beta
\]
(see Example~\ref{ex:l1}).
\paragraph{Theoretical upper bounds.}
For each experimental configuration, we also compute the corresponding
theoretical upper bounds from \Cref{tab:convex_bounds}. This allows us to compare the realized losses with the guarantees predicted by the theory. 

\subsubsection{Experimental Results}\label{app:TAIWAN_results_discussion}
In \Cref{sec:experiments}, we evaluated the realized accuracy and welfare losses that arise when imposing $\beta$-desirability constraints across different group partitions, cost matrices, and disparity configurations. Since \Cref{sec:convex} provides theoretical upper bounds on these losses, summarized in \Cref{tab:convex_bounds}, we now compare the empirical losses on the
\texttt{TAIWAN} dataset with the corresponding theoretical guarantees.

The bounds in \Cref{tab:convex_bounds} are worst-case guarantees. Although some of them depend on instance-specific parameters, they are not intended to exactly predict the loss on every dataset. Thus, even when a bound is valid, it may be conservative if the empirical instance does not realize the worst-case geometry
or cost configuration considered by the theory. This motivates the following question: for the \texttt{TAIWAN} dataset, are the theoretical bounds tight, highly conservative, or somewhere in between? 

\begin{figure}
    \centering
    \includegraphics[width=0.85\linewidth]{figs/objective_grid_l1.png}
    \caption{Realized losses versus theoretical upper bounds on the $\mathtt{TAIWAN}$
    dataset under $\ell_1$ fairness constraints. Columns correspond to cost models: uniform costs $A_1=A_2=I$, non-uniform costs $A_1=2A_2=2I$, and random SPD costs. The first row reports the realized
    accuracy loss, while the second row reports the realized social-welfare loss. Each column corresponds to a different cost configuration. Solid curves show empirical losses, while dashed curves show the corresponding upper bounds from \Cref{tab:convex_bounds}. Colors denote the population partition: age,
    education, and marital status.}  
    \label{fig:TAIWAN_tightness}
\end{figure}

\xhdr{Discussion of Results} As it is shown in \Cref{fig:TAIWAN-tightness}, all three group definitions recover the
unconstrained optimum relatively quickly as the fairness budget $\beta$
increases. Thus, the separation across population splits is weaker than in the
$\mathtt{Adult}$ experiment. To better understand this behavior, we examine the
geometry of the split-specific projection matrices.

Among the three splits, the \emph{Age} partition is the most internally aligned:
its two group-specific subspaces have the largest overlap,
\[
    \frac{\operatorname{tr}(\Pi_1\Pi_2)}{k} \approx 0.93,
\]
compared with approximately $0.80$ for both the education and marriage splits.
The age split also has the smallest subspace gap,
\[
    \|\Pi_1-\Pi_2\|_F \approx 0.84,
    \qquad
    \|\Pi_1-\Pi_2\|_2 \approx 0.59,
\]
whereas for education and marriage we obtain
\[
    \|\Pi_1-\Pi_2\|_F \approx 1.41,
    \qquad
    \|\Pi_1-\Pi_2\|_2 = 1.
\]
At the same time, the age split exhibits the strongest alignment with the
desirability subspace $\Pi_D$:
\[
    \frac{\operatorname{tr}(\Pi_1\Pi_D)}{3} \approx 0.998,
    \qquad
    \frac{\operatorname{tr}(\Pi_2\Pi_D)}{3} \approx 0.869.
\]
Taken together, these diagnostics suggest that the age split is strongly
coupled to the prediction-relevant geometry while also exhibiting substantial
overlap between its two group-specific subspaces. This helps explain why, even
when the unconstrained solution $\bw^\star$ is not initially feasible, the
principal can recover performance quickly by choosing a feasible classifier that
focuses on shared predictive directions across the two groups.

On the welfare side, the ranking is stable across all three cost models:
\emph{Marriage} achieves the highest welfare, followed by \emph{Age}, and then
\emph{Education}. A plausible explanation is that the marriage split induces the
weakest effective fairness restriction. Thus, enforcing fairness across marital
status requires less deviation from the welfare-optimal unconstrained solution,
whereas enforcing fairness across education is more restrictive and yields lower
welfare. As expected, the absolute welfare levels still depend on the cost
model, with more uniform costs yielding higher welfare overall.

Finally, when interpreting the tightness of the bounds in
\Cref{tab:convex_bounds}, it is important to distinguish between bounds that are
uniform and bounds that depend on instance-specific parameters. Except for the
accuracy bound in Property~3.2, the bounds depend on the group-specific
projection matrices, cost matrices, and desirability directions. Consequently,
each split induces its own theoretical upper bound and may have a different
worst-case instance at which the bound is tight. For example, the same pair of
cost matrices can induce different worst-case realized losses for the age split
than for the marriage split, because the corresponding projection matrices
define different feasible regions and interact differently with $\Pi_D$.

This distinction is reflected in the experiments. Bounds that adapt to the
geometry of the $\beta$-fair feasible region track the realized accuracy losses
more closely as $\beta$ varies. By contrast, bounds that do not encode this
geometry, especially some of the social-welfare bounds, remain more
conservative. These bounds are still valid worst-case certificates, but they are
not expected to be numerically tight for every empirical partition.

\end{document}